\newcommand{\localitemlabels}{
  \renewcommand{\theenumi}{(\roman{enumi})}
  \renewcommand{\labelenumi}{\theenumi}
}
\def \cbb{\mathbb{C}}
\def \nbb{\mathbb{N}}
\def \rbb{\mathbb{R}}
\def \zbb{\mathbb{Z}}
\def \ccal {\mathcal{C}}
\def \dcal {\mathcal{D}}
\def \ecal {\mathcal{E}}
\def \hcal {\mathcal{H}}
\def \kcal {\mathcal{K}}
\def \ocal {\mathcal{O}}
\def \qcal {\mathcal{Q}}
\def \scal {\mathcal{S}}
\def \wcal {\mathcal{W}}
\def \afk  {\mathfrak{A}}
\def \bfk  {\mathfrak{B}}
\def \<  {\langle}
\def \>  {\rangle}
\def \. { \,\! }
\def \cdotarg { \, \cdot \, }
\def \im  {\mathrm{Im}\,}
\def \idop {\mathbf{1}}
\def \st {^\ast}
\def \restrict {\lceil}
\def \supp {\mathrm{supp}\,}
\def \expltext#1 {\\ \text{\footnotesize{ (#1) }}\\}
\def \intercomm#1 {\\ \text{\footnotesize{ (#1) }}\\}
\def \undercomm#1 {\underset{\text{\scriptsize{ (#1) }}}}
\def \overcomm#1 {\overset{\text{\scriptsize{ (#1) }}}}
\def \boundedops {\bfk(\hcal)}
\def \tracenorm#1 { \| #1 \| _1 }
\def \hsnorm#1 { \| #1 \| _2 }
\def \vectorcomp#1 {
  \left( \begin{array}{c}
  #1
  \end{array}
  \right)  }
\def \pder#1#2 { \frac{ \partial #1 }{ \partial #2 }}
\newcommand{\sobnorm}[3]{ \| #1 \|_{#2,#3} }
\newcommand{\lsobnorm}[4]{ \| #1 \|^{(#2)}_{#3,#4} }
\newcommand{\convolmap}[1]{ \kappa_0 \lbrack #1 \rbrack }
\newcommand{\convolintmap}[1]{ \kappa \lbrack #1 \rbrack }
\newcommand{\convol}[2]{ \convolmap{#2} (#1)}
\newcommand{\convolint}[2]{ \convolintmap{#2} (#1)}
\newcommand{\syp}{\diamond}
\DeclareMathOperator{\img}{img}
\newcommand{ \cinftyh } {  \ccal^\infty (\hcal) }
\newcommand{ \cinftys } {  \ccal^\infty(\Sigma) }
\newcommand{ \cinftyss } {  \ccal^\infty(\Sigma)^\ast }
\newcommand{ \sigspace }[1]{ \ccal^{#1}(\Sigma) }
\newcommand{\lnorm}[2]{  \| #1 \|^{( #2 )}  }
\newcommand\PhiFH { \Phi_{ \mathrm{FH} }  }
\newcommand \PhiProdLoc { \Phi_\mathrm{prod,loc}   }
\newcommand \PhiProd { \Phi_{ \mathrm{prod} }  }
\newcommand \PhiProdL[1] { \Phi_{ \mathrm{prod} }^{#1}  }
\newcommand \PhiPos { \Phi_{ \mathrm{pos} }  }
\newcommand{\lcssample}{ \psi }
\newcommand{\psmap}{\Xi}
\newcommand{\wickprod}[1] {  : \!\! #1 \!\! : }
\newcommand{\oprod} {  \Pi }
\newcommand{\pd}[1]{ T_{#1} } 
\newcommand{\opd} { \pd{\Pi} } 
\def \cbb{\mathbb{C}}
\def \nbb{\mathbb{N}}
\def \rbb{\mathbb{R}}
\def \zbb{\mathbb{Z}}
\def \ccal {\mathcal{C}}
\def \dcal {\mathcal{D}}
\def \ecal {\mathcal{E}}
\def \hcal {\mathcal{H}}
\def \kcal {\mathcal{K}}
\def \ocal {\mathcal{O}}
\def \qcal {\mathcal{Q}}
\def \scal {\mathcal{S}}
\def \wcal {\mathcal{W}}
\def \afk  {\mathfrak{A}}
\def \bfk  {\mathfrak{B}}
\def \cdotarg { \, \cdot \, }
\def \im  {\mathrm{Im}\,}
\def \< {\langle}
\def \> {\rangle}
\def \st {^\ast}
\def \restrict {\lceil}
\def \boundedops {\bfk(\hcal)}
\def \tracenorm#1 { \| #1 \| _1 }
\def \hsnorm#1 { \| #1 \| _2 }
\newcommand{\bigsetprop}[2]{ \big\{ #1 \, \big| \, #2 \big\} }
\newcommand{\epe}{(1 \! + \! E)}
\newtheorem{definition}{Definition}
\newtheorem{lemma}[definition]{Lemma}
\newtheorem{proposition}[definition]{Proposition}
\newtheorem{theorem}[definition]{Theorem}
\newtheorem{corollary}[definition]{Corollary}
\numberwithin{equation}{section}
\numberwithin{definition}{section}
\newcommand{\inst}[1]{$^\textrm{#1}$ }
\newcommand{\email}[1]{e-mail: #1}
\newenvironment{proofwithremark}[1]{\begin{proof}[Proof (#1)]}{\end{proof}}
\newcommand{\cmpqed}{}
\begin{document}

\title{Quantum Inequalities from Operator Product Expansions}
\author{Henning Bostelmann\inst{1}\thanks{%
Supported by the EU network ``Noncommutative Geometry'' (MRTN-CT-2006-0031962)} 
\and Christopher J. Fewster\inst{2}}

\date{%
\parbox[t]{0.9\textwidth}{\footnotesize{%
\begin{enumerate}
\renewcommand{\theenumi}{\arabic{enumi}}
\renewcommand{\labelenumi}{\theenumi}
\item Dipartimento di Matematica, Università di Roma ``Tor Vergata'',
Via della Ricerca Scientifica, 00133 Roma, Italy,
\email{bostelma@mat.uniroma2.it}
\item Department of Mathematics, University of York,
Heslington, York YO10 5DD, United Kingdom, \email{cjf3@york.ac.uk}
\end{enumerate}
}}
\\
December 27, 2008}

\maketitle

\begin{center}
{\em Dedicated to the memory of Bernd Kuckert.}
\end{center}

\begin{abstract}
Quantum inequalities are lower bounds for local averages of quantum observables that have positive classical counterparts,
such as the energy density or the Wick square. 
We establish such inequalities in general (possibly interacting) quantum field theories on Minkowski space, 
using nonperturbative techniques. 
Our main tool is a rigorous version of the operator product expansion.
\end{abstract}

\section{Introduction}

The principal qualitative difference between classical and quantum physics
lies in the fundamentally unsharp nature of the latter, quantitatively
expressed by the uncertainty principle. This distinction becomes
particularly acute when one seeks analogues in quantum theory of quantities
that are classically positive. In quantum mechanics, for example, one
replaces a probability distribution over classical phase space by the Wigner function,
which is pointwise positive only for Gaussian states \cite{Hud:Wigner}. Consequently, Weyl quantization of classically positive observables
does not generally yield positive operators. Similarly, a positive (local) quadratic form in a
classical field and its derivatives, such as the energy density of a free
minimally coupled scalar field, would not be expected to have a positive
analogue in quantum field theory, owing to the subtractions necessary to
renormalize products of fields at a point. 

Nonetheless, positivity is not completely destroyed in quantization. The
sharp G{\aa}rding inequalities \cite{FefPho:gaarding_ineq} show that classically
positive symbols have Weyl quantizations that are positive modulo corrections of
lower order; that is, operators corresponding to a lower rate of growth in
momentum. The aim of this paper is to establish analogous results for
quantum field theory in a model independent and nonperturbative setting.
The key to our approach is a recently-developed \emph{microscopic phase space
condition} \cite{Bos:short_distance_structure} that controls the degrees of freedom available to the
theory at small scales and bounded energy, and guarantees the existence of a
rigorous operator product expansion (OPE) \cite{Bos:product_expansions}.
In any theory obeying this condition (along with other standard criteria
set out in Sec.~\ref{sec:outset}) we identify a class of `classically
positive' operator products
and show how this classical positivity is reflected in estimates on
suitable smearings of the composite fields appearing in the
corresponding OPEs. If
there is a distinguished normal product associated with the underlying
classically positive expression the picture is closely analogous to that
emerging from the G{\aa}rding inequalities: suitable smearings of the
normal product are positive modulo corrections of a lower order. As we
will describe, our results significantly generalize the quantum (energy) inequalities,
developed over recent years, that provide lower bounds on
smearings of quadratic normal ordered quantities in free field theories.  

In the following subsections, we will describe the background and
motivation for our study.

\subsection{Quantum inequalities}\label{sec:QI_background}

It has been known for many years that expectation values of quantities
such as the Wick square or energy density of a free scalar field may
assume negative values and are pointwise unbounded from below as the
quantum state is varied. Indeed, no local observable (other than
the zero operator) can be both positive and have a vanishing vacuum
expectation value \cite{EGJ:nonpos}. Thirty years ago, Ford made the key
observation that, as unrestricted negative energy densities or fluxes
could produce macroscopic violations of the second law of thermodynamics,
it was to be expected that QFT itself places strict limits on such
departures from positivity \cite{For:quantum_coherence}. Subsequently,
Ford and Roman were able to derive lower bounds, called quantum
inequalities (QIs), on averaged energy densities for scalar fields in
Minkowski space
\cite{For:neg_energy_flux,ForRom:averaged_energy,ForRom:restrictions_flat}; 
these results were generalized to static curved
spacetimes by Pfenning and Ford \cite{PfeFor:qei_static}. 

In the results just mentioned, the averaging is performed along a
timelike geodesic with respect to a Lorentzian weight. With Eveson, one
of us (CJF) obtained similar results for general weight functions
\cite{FewEve:qei_flat}. As an example, the renormalized energy density
$\wickprod{\rho}$ of the field of mass $m$ in
four-dimensional Minkowski space obeys the inequality
\begin{equation}\label{eq:example_qei}
\int dt\, \omega(\wickprod{\rho}(t,0)) |g(t)|^2 \ge -\qcal[g]:= -\frac{1}{16\pi^3}\int_m^\infty
du\, u^4 |\tilde{g}(u)|^2
\end{equation}
for any $g\in \dcal(\rbb)$ and all Hadamard states $\omega$ (this is slightly weaker than the bound
of \cite{FewEve:qei_flat}). Here $\tilde{g}$ denotes the Fourier
transform. Similar bounds are obeyed by any classically positive
field of form $\sum_i \wickprod{(P_i \phi)^2}$, where the $P_i$ are
partial differential operators with smooth real coefficients. We will
understand the term `quantum inequality' to apply to any bounds of this
type, and not just those relating to the energy density (for which the
more specific term `quantum energy inequality' (QEI) is also used).

The basic technique of \cite{FewEve:qei_flat} generalizes
straightforwardly to static spacetimes \cite{FewTeo:qei_static}
and the electromagnetic field \cite{Pfe:qei_em}. It also underlies the
general and rigorous results of \cite{Few:general_wordline}, which give
QIs for averaging with arbitrary weights along arbitrary timelike curves
in arbitrary globally hyperbolic spacetimes, valid for all Hadamard
states. (The bound in \cite{Few:general_wordline} is expressed using a
reference state; see \cite{FewSmi:qei_abs} for analogous results with a purely
local geometric bound.) Similar results hold for spin-$1$ fields
\cite{FewPfe:qei_spin1}. We note that averaging in timelike directions is
essential for establishing inequalities; while averaging over spacetime
volumes also yields lower bounds (see, e.g., \cite{FewSmi:qei_abs}), purely
spatial \cite{FRH:spatial_average} or lightlike \cite{FewRom:null} averaging is known not to be sufficient for
quantum inequalities.

An important feature of the lower bound in \eqref{eq:example_qei} is that it is independent
of the state $\omega$, and can be rewritten as an operator inequality
$\wickprod{\rho}(|g|^2)\ge -\qcal(g)\idop$. One cannot expect bounds of this
type for general interacting theories \cite{OluGra:static_domain_wall} (although
they {\em do} hold for conformal field theories in two dimensions
\cite{FewHol:qei_conformal}; see also
\cite{Fla:qi_twodimensional,Vol:qi_conformal} for precursors). Indeed,
the nonminimally coupled scalar field provides an example of a free
field theory in which averaged energy densities are unbounded from
below \cite{FewOst:qei_nonmin}. The best that can be expected, in
general, is an inequality of the form
$\wickprod{\rho}(|g|^2)\ge -\qcal(g)$, where $\qcal(g)$ is now permitted to be an
operator. As noted in \cite{Few:categorical}, this would be a rather
empty notion without some constraints on $\qcal(g)$ [for example,
$\qcal(g)=-\wickprod{\rho}(|g|^2)$ gives a trivial inequality of this type].
To qualify as a nontrivial inequality, $\qcal$ should be of `lower order' than
$\wickprod{\rho}$ in a defined sense. For example, the nonminimally
coupled scalar field  obeys bounds of the form 
\begin{equation} \label{eqn:nonMinCoupling}
\wickprod{\rho}(|g|^2) \ge -\qcal_1(g)\idop +2 \xi \wickprod{\phi^2}(\dot{g}^2)
\end{equation}
in four-dimensional Minkowski space for coupling $\xi\in[0,1/4]$ \cite{FewOst:qei_nonmin}.
Crucially, the right-hand side is bounded
relative to $(1+H)^p$ for any $p>2$, while the left-hand side is not
bounded relative to any $(1+H)^q$ with $q<3$, where $H$ is the
Hamiltonian. 

In the present paper we will weaken the criterion of nontriviality
slightly owing to the approximate nature of OPEs. As we explain in
outline in Sec.~\ref{sec:overview} and in detail in Sec.~\ref{sec:ineq}, we
permit bounds containing a remainder term that is of higher order in energetic terms
than the field of interest, but which is vanishing in the small
distance limit. 

All the results mentioned so far rely on positivity of an
underlying classical expression, namely, a sum of squares of fields and
their derivatives; and this is also the focus of the present work. However, it
is important to recall that the energy density of a Dirac field is not expressed in this way; accordingly different
techniques are required to obtain quantum energy inequalities in this case (see
\cite{FewVer:qei_curved,FewMis:flat_Dirac,DawFew:curved_Dirac,Smi:Dirac_abs} 
for spin-$1/2$ and \cite{YuWu:massive_RaritaSchwinger,HLZ:massless_RaritaSchwinger} for spin-$3/2$).

\subsection{Perturbative versus nonperturbative approaches} \label{sec:power_series}

While QIs were first studied for free fields on Minkowski
space, it is now known -- as mentioned above -- that the concept is
compatible at least with some simple types of interaction, specifically the
coupling to an external gravitational field and those in conformal field theories.
However, on the technical side, the
existing results typically rely on the rather simple structure of linear quantum fields
fulfilling $c$-number commutation relations. For dealing with
general, possibly self-interacting quantum fields, this is far too
restrictive. Instead, our aim here is to derive inequalities from general principles of
quantum field theory that are not restricted to linear fields. 

To date, self-interacting quantum field theories have generally been
established in a perturbative setting only, usually without any
control on the convergence of the perturbation series. It would seem natural to
investigate QIs in this context. However, severe
conceptual difficulties arise here. In order to give any reasonable meaning to
quantum inequalities in perturbation theory, we need to determine when a formal power series, say
$P[g]=\sum_{k=0}^\infty c_k g^k$ with $c_k\in\cbb$, and with the formal
variable $g$ being interpreted as a ``coupling constant'', should be considered
positive. Understanding the set of formal power series as a 
$\ast$-algebra, the natural notion of positivity is as follows
\cite{DueFre:qed_observables}: $P$ is considered positive if and only if
\begin{equation}  \label{eqn:fpsPosSquare}
  P[g] = Q\st[g] Q[g] \quad\text{for some formal power series }Q. 
\end{equation}
It turns out that this condition is equivalent to the following one:
\begin{equation} \label{eqn:fpsPosFirst}
  P[g] = g^{2n} \sum_{k=0}^\infty d_k g^k\quad
  \text{with } n \in \nbb_0, \, d_k \in \rbb, \, d_0 > 0.
\end{equation}
[Here \eqref{eqn:fpsPosSquare} $\Rightarrow$ \eqref{eqn:fpsPosFirst} is
immediate; the converse follows by inserting $x=(d_0^{-1} g^{-2n} P[g]-1)$ into
the power series of $\sqrt{1+x}$ around $x=0$.] Now
Eq.~\eqref{eqn:fpsPosFirst} shows that this notion of positivity is not useful in our context, since it
roughly says that positivity of $P$ is determined by its lowest-order
coefficient. (See \cite{BorWal:formal_gns} for a slight variant.)  The order-0
coefficient however is supposed to be the contribution from free field theory.
So---with this definition---QIs would hold at finite coupling if
and only if they hold at coupling $g=0$; the effects of interaction on
inequalities cannot be captured in this approach.

Let us illustrate these difficulties in a simple example: Should one consider
the following formal power series positive?
\begin{equation}
   P[g] = \sum_{k=0}^{\infty} \frac{(-1)^k}{(2k)!} g^{2k}
\end{equation}
Considering $P$ as a convergent series, it would be positive for small, but not
for all $g$. Forgetting all convergence properties, the only information that
remains is positivity at $g=0$, i.e., of the zero-order coefficient. The
question of interest, however, would be whether the \emph{physical} value of
$g$ falls into the convergence radius of $\sqrt{P}$; this question is
not accessible in formal perturbation theory.

It is therefore necessary to conduct our investigation in a nonperturbative
formulation of quantum field theory, such as the Wightman setting
\cite{StrWig:PCT} or the $C\st$ algebraic formulation of Haag and Kastler
\cite{Haa:LQP}. (We shall actually use a combination of both; the technical
details will be recalled in Sec.~\ref{sec:outset}.) This is, in a way, a very
strong assumption to start with, since we assume that our QFT models have been
fully constructed and are under complete topological control. Indeed, the
rigorous construction of interacting models in physical space-time still remains
an open challenge, while the situation is better in simplified low-dimensional
models \cite{GliJaf:quantum_physics}. The virtues of our axiomatic approach,
however, are of a different nature: Within the framework of algebraic quantum
field theory, we can formulate physically motivated, qualitative properties of
quantum field theories, which can explicitly be verified in simple models such
as free field theory, but which appear general enough to be postulated for the
interacting situation. We can then show how observable consequences, such as
quantum inequalities, follow from these postulated properties.

\subsection{Phase space conditions}\label{sec:phase_space}

The specific qualitative properties we will employ are known as
phase space conditions. Semi-classical considerations
(originating with Bohr and Sommerfeld)
suggest that only finitely many independent states (or, dually,
observables) are required to describe a quantum system which is restricted to a finite volume in
phase space e.g., by cut-offs in configuration space and energy. 
In quantum field theory, this picture can certainly persist only
qualitatively and in an approximate sense. However, it is possible to give a
precise meaning to the aforementioned concepts, expressed as the compactness or
nuclearity of certain maps; see
e.g.~\cite{HaaSwi:compactness,BucWic:causal_independence,BucPor:phase_space}. 
These phase space conditions have physically interesting consequences: for
example, they imply the existence of thermal equilibrium states
\cite{BucJun:equilibrum} and are important for the particle interpretation of
quantum field theories \cite{Por:pw_disint_ii}.
 
The role of phase space conditions for QIs has been partially investigated before. Even in
the free field situation described above, one may see the need for some
restrictions on the phase space behaviour of the theory \cite{Few:Bros}:
instead of one field of mass $m$, consider an infinite number of fields with
masses $m_j$ (for simplicity, in four-dimensional Minkowski space). The total
energy density will obey a QI
\begin{equation}
\int dt\, \omega(\wickprod{\rho}(t,0)) |g(t)|^2 \ge -\frac{1}{16\pi^3}\int_0^\infty
du\, u^4 N(u) |\tilde{g}(u)|^2,
\end{equation}
where 
\begin{equation}
N(u) =\sum_{j} \vartheta(u-m_j)
\end{equation}
counts the number of species with masses below energy $u$. 
If $N$ grows no faster than polynomially with $u$, the lower bound is
finite for all $g\in C_0^\infty(\rbb)$; the same condition is known to 
guarantee that this theory obeys nuclearity in the sense of
Buchholz and Wichmann \cite{BucWic:causal_independence}. Other ideas
concerning the relationship between QEIs and nuclearity conditions are
discussed in \cite{FOP:p_nuclearity}, while connections with
thermodynamic stability are described in \cite{FewVer:stability} and
\cite{SchVer:LTE_QEI}.
However, the results presented here are the first in which QIs have been
derived as a consequence of phase space criteria. 

For our purposes, we will use a {\em microscopic phase space condition}
recently introduced by one of us (HB) in \cite{Bos:short_distance_structure}; we
shall recall its formulation and consequences in Sec.~\ref{sec:outset}.
Compared with other similar conditions, it is specifically sensitive in the
short-distance regime, the realm which is of most interest for QIs. Indeed, one
heuristically expects \cite{HaaOji:germs} that at short distances and
finite energies, the theory may be well-approximated in terms of finitely
many observables corresponding to pointlike quantum fields.

This approximation of bounded observables by
quantum fields can indeed be made precise \cite{Bos:short_distance_structure}
and plays a central role in our approach. Its use is twofold. First, it tells us
how our primary objects---local algebras of bounded operators---relate to
the quantum fields for which inequalities are formulated. Second, it serves to
establish an additional structure for the quantum fields, namely a rigorous
version of the operator product expansion \cite{Bos:product_expansions}.
We can understand this OPE, which describes the ``structure constants" of the
``improper algebra" of quantum fields, as containing all relevant information
about the interaction, and in this sense as a replacement for the Lagrangian
\cite{Wil:non-lagrangian}. In fact, it is the OPE from which our inequalities
will be computed. In particular, the OPE allows us to generalize the notion
of normal ordering that has a key role for QIs of linear
fields, replacing it with normal products in the sense of Zimmermann \cite{Zim:Brandeis}.

The remainder of the paper is organized as follows. We start with a
non-technical account of our main methods and results in
Sec.~\ref{sec:overview}. Then, in Sec.~\ref{sec:outset}, we introduce the framework of nonperturbative
quantum field theory that we work in, including the phase space condition mentioned above. Section~\ref{sec:distributions} presents some technical preliminaries from distribution theory. In Sec.~\ref{sec:products}, we establish the rigorous operator product expansion in the variant that we
require. This expansion will be the base of our quantum inequalities, derived
in Sec.~\ref{sec:ineq}. Dilation covariance as a special case is covered 
in Sec.~\ref{sec:scaling}. We end with a brief outlook in
Sec.~\ref{sec:outlook}.

\section{Overview} \label{sec:overview}

We now give a non-technical overview of our main techniques
and results, postponing rigorous arguments to later sections.
Throughout, we work in Minkowski space of dimension $2+1$ or more (possible
generalizations are discussed in Sec.~\ref{sec:outlook}). For simplicity, we shall always pick a
fixed Lorentz frame, and hence a fixed time axis; all quantum fields
$\phi(t) = \phi(t,\vec x=0)$ will be restricted to this time axis, 
and smeared expressions $\phi(f)=\int dt \, f(t) \, \phi(t)$ will refer to
one-dimensional integration only. This is
sufficient for regularizing Wightman fields \cite{Bor:spacelike_smooth};
due to the symmetry properties of Minkowski space, 
it covers the essential features of the inequalities we wish to consider.

To illustrate our approach we begin by sketching the derivation of a QI for
the Wick square of the free real scalar field,
essentially following the argument of \cite{Few:general_wordline}
but in a form which is amenable to our generalization. We will then
indicate which changes are necessary to deal with the general
situation.

Accordingly, let $\phi$ denote the free field and let $\sigma$ be a 
normal state in the vacuum sector with sufficiently regular high-energy behaviour 
that the expectation values in the following are finite. 
The distributional integration kernel
\begin{equation}
   F(t,t') = \sigma( \phi(t) \phi(t') ).
\end{equation}
is positive-definite, in the sense that for any test function
$g$, we have
\begin{equation} \label{eqn:fPos}
   \int dt\,dt'\, F(t,t') \overline{g(t)} g(t') = \sigma( \phi(g)\st \phi(g) )
   \geq 0.
\end{equation}
Then, also $F(t,t') / \imath\pi(t-t'-\imath 0)$ is positive-definite; namely we
have by Fourier analysis,
\begin{equation} \label{eqn:modKernelPos}
\begin{split}
   \int dt\,dt'\, F(t,t') \frac{ \overline{g(t)} g(t')}{\imath\pi(t-t'-\imath
   0)} = \int \frac{dp}{\pi}  \int dt \, dt'\, F(t,t') \overline{g(t)} g(t')
  \theta(p) e^{-\imath p (t-t')}
\\
  = \int_0^\infty \frac{dp}{\pi} \int dt \, dt'\, F(t,t') \overline{e^{\imath
  p t}g(t)} e^{\imath p t'} g(t') \geq 0.
\end{split}
\end{equation}
We now use Wick ordering and introduce new variables $s=(t+t')/2$, $s'=t-t'$ 
in order to rewrite the kernel $F$:
\begin{equation} \label{wickSplit}
   F(t,t') =  \sigma( \wickprod{\phi^2}(s) ) + \Delta_+(s') + \sigma(R(s,s')),
\end{equation}
where $\Delta_+(t-t') = \omega(\phi(t)\phi(t'))$ is the vacuum two-point function of $\phi$, 
and the remainder $R$ is given by
\begin{equation}
\begin{aligned}
    R(s,s') &=  \; \wickprod{\phi(t) \phi(t')} - \wickprod{
    \phi\big(\frac{t+t'}{2}\big)^2}
  \\
  &=  \; U(s) \big( \wickprod{\phi(s'/2) \phi(-s'/2) - \phi^2(0)} \big)
  U(s)\st ;
\end{aligned}
\end{equation}
it is a smooth function when evaluated in $\sigma$. Inserting this into
Eq.~\eqref{eqn:modKernelPos}, we obtain
\begin{equation} \label{eqn:ineqWithRemainder}
   \sigma(\wickprod{\phi^2}(f) + c_g \idop) \geq - R_{\sigma,g},
\end{equation}
where
\begin{align}
    \label{eqn:fIntegral}
    f(s) &:= \int ds' \frac{ \overline{g(s+s'/2)} g(s-s'/2)}{\imath\pi(s'-\imath 0)},
\\
   \label{eqn:cIntegral}
    c_g &:= \int ds \, ds' \Delta_+(s') \frac{ \overline{g(s+s'/2)} g(s-s'/2)}{\imath\pi(s'-\imath 0)},
\\
   \label{eqn:remIntegral}
   R_{\sigma,g} &:= \int ds\, ds'\, \sigma(R(s,s')) \frac{ \overline{g(s+s'/2)}
   g(s-s'/2)}{\imath\pi(s'-\imath 0)}.
\end{align}
It seems plausible that $R_{\sigma,g}$ becomes small as $\supp g$ shrinks to a
point. We will give more quantitative estimates in that respect later. Here,
let us consider the special case where $g$ is real-valued. Then 
both $\overline{g(s+s'/2)} g(s-s'/2)$ and $R(s,s')$ are even functions in $s'$.
Hence, in Eqs.~\eqref{eqn:fIntegral} and \eqref{eqn:remIntegral}, we can replace the factor $(s'-\imath 0)^{-1}$ with its even part,
\begin{equation}
   \frac{1}{2} \Big(  \frac{1}{s'-\imath 0} +\frac{-1}{s'+\imath 0} \Big) = \imath \pi \delta(s').
\end{equation}
Since $R(s,0)=0$, this results in $R_{\sigma,g}=0$ and $f(s)=g(s)^2$. Thus
Eq.~\eqref{eqn:ineqWithRemainder} gives the more usual inequality for the Wick
square,
\begin{equation} \label{eqn:usualQi}
   \wickprod{\phi^2} (g^2) \geq - c_g \idop.
\end{equation}

We now aim at a generalization beyond free field theory. So let $\phi$ be a
general, possibly self-interacting local quantum field. 
(The term ``quantum field" is used here in a generic
fashion, and may include derivatives of fields as well as composite fields or
suitably defined powers of fields.) The main
difficulty we face in applying the above construction is that no concept of
normal ordering is available; we cannot use Wick ordering to split the
product into higher-order and lower-order terms, as in Eq.~\eqref{wickSplit}.
Instead, we shall use an operator product expansion for the product
$\phi(t)\st\phi(t')$,
\begin{equation} \label{eqn:squareOpe}
\phi^*(t)\phi(t') = \sum_{j=1}^n C_j(t-t') \phi_j((t+t')/2) + R_n(t,t').
\end{equation}
Here $R_n$ is a remainder term, which is ``small'' where $t$ and $t'$
are close, while the $\phi_j$ are composite
fields. 
Smearing against $\overline{g(t)}g(t')$, where $g\in
\dcal(\rbb)$, the left-hand side is then a positive operator, 
and this remains true if we multiply \eqref{eqn:squareOpe}
with any positive-type kernel $K(t-t')$, which takes the role of
$K(t-t')=1/\imath \pi(t-t'-\imath 0)$ above. In other words,
Eqs.~\eqref{eqn:fPos} and \eqref{eqn:modKernelPos} remain valid. We can then rearrange and obtain as
analogue to Eq.~\eqref{eqn:ineqWithRemainder},
\begin{equation}
\sum_{j=1}^n \phi_j(f_j) \ge - \int dt\,dt'\,\overline{g(t)}g(t') K(t-t')
R_n(t,t'),
\end{equation}
where the test functions $f_j$ are given in terms of $g$, $K$, and the OPE
coefficients $C_j$ by
\begin{equation} \label{eqn:fjPreview}
f_j(s) = \int ds'\, K(s') \,C_j(s')\, \overline{g(s+s'/2)}g(s-s'/2).
\end{equation}
Note that there is no guarantee that these functions are
necessarily pointwise positive (the issues here are 
related to Hudson's theorem~\cite{Hud:Wigner} and the `choice of basis'
invoked in the OPE). We will return to this below. 

{}In order to establish our results rigorously, the main task is to
establish the OPE and to control the remainder term on the right-hand side. 
We will show in Theorem~\ref{thm:qi} that, given $\alpha\ge 0$, one may find
$n$, $m$ and $\ell$ so that for all $d>0$ and $g\in\dcal(-d,d)$,
\begin{equation} \label{eqn:qiPreview}
\sum_{j=1}^n  \phi_j(f_j) \ge - \epsilon(d)
\|g\|_{d,m}^2 (1+H)^{2\ell},
\end{equation}
where $H$ is the Hamiltonian, $\epsilon(d)=o(d^\alpha)$ as $d\to 0$ and $\|\cdot\|_{d,m}$ is
equivalent to the Sobolev norm on $W_0^{m,1}(-d,d)$. (Of course, finite
sums of field products can be, and are, accommodated by our result.)

The relationship with the QIs described in Sec.~\ref{sec:QI_background} is most apparent in the
case where one of the composite fields, say $\phi_1$, is of higher order
than the others, in the sense that there exists $\ell'$ for which
$(1+H)^{-\ell'}\phi_j(f)(1+H)^{-\ell'}$ is bounded for $j\ge 2$, while 
$(1+H)^{-\ell'}\phi_1(f)(1+H)^{-\ell'}$ is unbounded. Then we may
rearrange to write
\begin{equation}\label{eq:rearrangedQI}
\phi_1(f_1) \ge - \sum_{j=2}^n \phi_j(f_j)  - \epsilon(d)
\|g\|_{d,m}^2 (1+H)^{2\ell}.
\end{equation}
In cases where the remainder term vanishes \eqref{eq:rearrangedQI} is
then a nontrivial QI in the sense of \cite{Few:categorical}: namely,
one cannot find constants $C, C'$ such that
\begin{equation}
|\sigma(\phi_1(f_1))| \le C\sum_{j=2}^n |\sigma(\phi_j(f_j))|+ C'
\end{equation}
for all (sufficiently regular) states $\sigma$ because $\phi_1$ is of higher
energetic order than the fields on the right-hand side. Examples include
the QI~\eqref{eqn:usualQi}, where the only
composite field on the right-hand side is the identity, and the QEI
\eqref{eqn:nonMinCoupling} on the nonminimally coupled field,
where both the identity and Wick square appear on the right-hand side. 

This simple situation does not persist in general, however. First, 
it does not seem guaranteed that a unique choice of a highest-order
field $\phi_1$ exists. 
For interacting fields, one would expect $\phi_1$ to be the normal product 
of $\phi\st\phi$ in the sense of Zimmermann \cite{Zim:Brandeis};
but there are indications from perturbation theory that in some cases, this
normal product might not be unique \cite{Joh:green_2d}. Second, the
remainder term cannot be expected to vanish in general--this reflects
that the OPE is a controlled approximation, rather than an exact
formula. Third, the remainder term is not of lower energetic order than the
fields: in fact, $\ell$ is chosen so that each
$(1+H)^{-\ell}\phi_j(f_j)(1+H)^{-\ell}$ is bounded.

Although the 
inequalities in Eq.~\eqref{eqn:qiPreview} remain valid, it is necessary
to adapt the criterion of nontriviality to our setting. Our approach is
to focus on the short-distance behaviour, in which the remainder term
vanishes as $o(d^\alpha)$. By contrast,  
we will show in Sec.~\ref{sec:nontriv} that, 
for the bounds we obtain,
\begin{equation} 
\sup_{g\in\dcal(-d,d)} \|g\|_{d,m}^{-2}\,\Big\| (1+H)^{-\ell} \sum_{j=1}^n 
\phi_j(f_j) (1+H)^{-\ell}\Big\|
\end{equation}
is not $o(d^\alpha)$ as $d\to 0$. Thus the remainder term cannot
dominate the contribution of the composite fields in the small. 
In this context, it
turns out to be crucial to formulate the OPE, and correspondingly the
inequalities, in a ``basis-independent'' fashion, that is, in a way that is
independent of a possible arbitrariness in the choice of composite fields. 

For practical purposes, it is still important to understand the more
specific question as to whether there is a normal product of strictly higher
energetic order than the other fields in the OPE. At present it seems to
us that this must be discussed in the light of particular examples. 

Last but not least, one would like to gain more insight in the properties of
the sampling functions $f_j$, given by Eq.~\eqref{eqn:fjPreview}, in particular for
the function $f_1$ corresponding to a ``highest-order'' composite field,
which generalizes $f_1(s)=g(s)^2$ from the free field case. In general, it is
certainly not expected that $f_1$ depends on $g$ in a simple pointwise fashion.
However, one may ask whether $\phi_1$ can be chosen so that $f_1$ retains other
properties that are apparent in the free-field situation, for example whether
$f_1 \geq 0$, either pointwise or in an averaged sense. This will depend
crucially on the form of the OPE coefficients $C_j$, which are however unknown
in general. We will give two approaches to this problem. The first, in
Sec.~\ref{sec:mesoscopic}, indicates conditions under which one
may simultaneously tune the leading sampling function to a given
positive form, while also reducing the remainder term. These conditions are
broadly met under the assumption that the OPE coefficients have scaling
limits in the sense of \cite{FreHaa:covariant_scaling}. Our argument
here is essentially to form a Riemann sum of QIs over small distance
scales, in which the remainder term is suppressed. This approach is,
however, tied to basis representations of the OPE; it would appear to be
most useful in the context of particular models. Second, in
Sec.~\ref{sec:scaling}, we discuss the particular case of dilation covariant theories.
This is of interest since we can expect that our theory is approximated by a dilation
covariant ``scaling limit theory'' in the ultraviolet. In this restricted
situation, we will derive explicit criteria on $g$ that guarantee positivity of
$f_1$. Since positivity of $f_1$ also fixes the sign of the composite field
$\phi_1$, this gives us a means of distinguishing the positive ``normal
square'' of $\phi$ from its negative.

\section{Algebras of observables and pointlike quantum fields}
\label{sec:outset}

As a mathematical basis of quantum field theory,
we adopt the framework of local quantum physics \cite{Haa:LQP}.
Specifically, for describing pointlike quantum fields, we use the
methods set forth in \cite{Bos:short_distance_structure}. For the convenience
of the reader, we will collect the relevant notions and results below, and
introduce some notations that are useful in our context.

We set out from a local net of algebras, $\ocal \mapsto \afk(\ocal)$, in the
vacuum sector. That is, for each bounded open region $\ocal$ of Minkowski space, 
we have an algebra $\afk(\ocal)$ of bounded operators; we take these to
be von Neumann algebras acting on a common Hilbert space $\hcal$. Further, we
have a strongly continuous unitary representation $(x,\Lambda) \mapsto
U(x,\Lambda)$ of the proper orthochronous Poincaré group
on $\hcal$, with a common invariant unit vector $\Omega\in\hcal$. We write the
translation subgroup as $U(x,\idop)=\exp\imath P_\mu x^\mu$. Together these objects are supposed to fulfil the following axioms:

\begin{enumerate}
\localitemlabels
  \item \emph{Isotony:} $\afk(\ocal_1) \subset \afk(\ocal_2)$ if
  $\ocal_1 \subset \ocal_2$.
  \item \emph{Locality:} $[A_1, A_2] = 0$ if $\ocal_1,\ocal_2$ are two
  spacelike separated regions, and $A_i \in \afk(\ocal_i)$.
  \item \emph{Covariance:} $U(x,\Lambda) \afk(\ocal) U(x,\Lambda)\st =
  \afk(\Lambda \ocal + x)$ for all Poincaré transformations $(x,\Lambda)$.
  \item \emph{Positivity of energy:} The joint spectrum of the
  $P_\mu$ falls into the closed forward light cone.
  \item \emph{Uniqueness of the vacuum:} $\Omega$
  is unique (up to a phase) as an
  invariant vector for all $U(x,\idop)$.
\end{enumerate}

We are primarily interested in the algebras associated with standard double cones
$\ocal_r$ of radius $r$ centred at the origin, and use $\afk(r)$ as shorthand
for $\afk(\ocal_r)$. Also, for most parts we only use the time-translation
subgroup of $U(x,\Lambda)$, which we denote as 
$t\mapsto U(t)$, with positive generator $H=P_0\geq 0$. We write the
spectral projectors of $H$ for the interval $[0,E]$ as $P(E)$.

Let $\Sigma$ be the set of ultraweakly continuous
functionals on $\boundedops$.  We consider for $\ell>0$ the 
subspaces 
\begin{equation}
\sigspace{\ell} = \bigsetprop{ \sigma \in \Sigma }{ \lnorm{\sigma}{\ell} :=
\|\sigma \big((1+H)^{\ell} \cdotarg (1+H)^{\ell} \big)\| < \infty },
\end{equation} 
which are Banach spaces in the norm $\lnorm{\cdotarg}{\ell}$. Their duals
$\sigspace{\ell}\st$ consist of linear forms $\phi$ for which the dual norm
$\lnorm{\phi}{-\ell} = \| (1+H)^{-\ell} \phi (1+H)^{-\ell}\|$ is finite. [More
precisely, $\phi$ are quadratic forms on a dense subspace of
$\hcal\times\hcal$, for which the form $(1+H)^{-\ell} \phi (1+H)^{-\ell}$,
with the multiplication defined in the weak sense, is bounded.]

We also introduce the space of smooth functionals, 
$\cinftys = \cap_{\ell>0} \sigspace{\ell}$, and
equip it with the Fr\'echet topology induced by all norms
$\lnorm{\cdotarg}{\ell}$. The dual space $\cinftyss$ is then given by
$\cup_{\ell>0} \sigspace{\ell}\st$, and will be considered with the weak$\ast$
topology. Further, we define for $E>0$ the set of energy-bounded functionals,
$\Sigma(E)=\{\sigma(P(E)\cdotarg P(E))\,|\, \sigma \in \Sigma\}$. Then
$\cup_{E>0}\Sigma(E)$ is dense in $\cinftys$ and weakly dense in $\Sigma$. Each
space $\sigspace{\ell}$ is invariant under the natural action of hermitean
conjugation, i.e.~$\sigma\st(A) = \overline{\sigma(A\st)}$, and this structure
transfers to the dual spaces; so we can speak of hermitean elements in
$\cinftys$ and $\cinftyss$.

With respect to pointlike fields, we assume that the theory fulfils a specific 
type of phase space condition \cite{Bos:short_distance_structure}, sensitive in
the ultraviolet. To formulate this, consider the inclusion map $\Xi:
\cinftys \hookrightarrow \Sigma$. We assume that $\Xi$ can be approximated with finite-rank maps in the following sense. 

\begin{definition}[Microscopic phase space
condition]\label{def:microPhase} A net $\ocal \mapsto \afk(\ocal)$ is said to satisfy the
  {\em microscopic phase space condition} if for every $\gamma \geq 0$,
  there exists a linear continuous map $\lcssample: \cinftys \to \Sigma$ of
  finite rank such that for sufficiently large $\ell>0$,
  \begin{equation*}
    r^{-\gamma} \lnorm{ (\psmap-\lcssample) \restrict \afk(r)}{\ell} \to 0
    \quad \text{ as }r \to 0.
  \end{equation*}
\end{definition}

Here the restriction $\restrict \afk(r)$ is applied to the image points of
the maps, which are functionals in $\Sigma$. This phase space condition is known
to be fulfilled in free field theory in at least $3+1$ space-time dimensions, for massive free fields also in $2+1$ dimensions \cite{Bos:operatorprodukte}.

The consequences of this condition are as follows \cite{Bos:short_distance_structure}.
While the maps $\psi$ are not uniquely fixed by the property above,
the image of their dual maps, $\img \psi\st =: \Phi_\gamma$, 
is actually unique at fixed $\gamma$, provided that the rank of $\psi$ is
chosen minimal. These finite-dimensional spaces $\Phi_\gamma$ form an
increasing sequence $\Phi_0 \subset \Phi_1 \subset \Phi_2 \ldots$,
and their union $\cup_\gamma \Phi_\gamma = \PhiFH$ is precisely the field
content of the theory as defined by Fredenhagen and
Hertel~\cite{FreHer:pointlike_fields}. After smearing with test functions, 
the elements $\phi \in \Phi_\gamma$ are local Wightman fields. 
Actually it suffices for regularizing $\phi$ to smear it along the time axis; 
that is, for $f \in \scal(\rbb)$ and $\phi \in \PhiFH$, 
the quadratic form $\phi(f) = \int dt\,f(t)\,U(t) \,\phi \,U(t)\st$ can be 
continued to an unbounded, but closable operator on the dense invariant domain
$\cinftyh = \cap_{\ell>0} (1+H)^{-\ell} \hcal$. Further, $\phi \in \Phi_\gamma$
can be approximated with bounded operators in a controlled way; 
cf.~\cite[Lemma~3.5]{Bos:short_distance_structure} and the remark following it:
\begin{theorem} \label{thm:fieldConverge}
Let $\phi \in \PhiFH$. One can find constants $\ell>0$, $k>0$ 
and operators $A_r \in \afk(r)$ for each $r>0$ such that, as $r \to 0$,
\begin{align*}
  &\lnorm{\phi}{-\ell} < \infty,
   \qquad
  &\lnorm{A_r - \phi}{-\ell} = O(r),
  \qquad
  \lnorm{A_r}{-\ell} = O(1),
  \\
  &\|A_r\| = O(r^{-k}),
  & \forall n \in\nbb:\, \|\frac{d^n}{dt^n} U(t) A_r U(t)\st \| = O(r^{-k-n}).
\end{align*}
\end{theorem}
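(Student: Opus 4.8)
The plan is to read off the first four estimates from \cite[Lemma~3.5]{Bos:short_distance_structure} and to obtain the derivative bound as an additional refinement via a time-averaging argument. First I would fix $\gamma$ with $\phi \in \Phi_\gamma$; since $\Phi_\gamma = \img \psi\st$ for the finite-rank map $\psi$ supplied by the microscopic phase space condition, the assertion $\lnorm{\phi}{-\ell} < \infty$ holds for all sufficiently large $\ell$, the threshold being dictated by $\gamma$. The existence of operators $A_r \in \afk(r)$ with $\lnorm{A_r - \phi}{-\ell} = O(r)$, $\lnorm{A_r}{-\ell} = O(1)$ and $\|A_r\| = O(r^{-k})$ is precisely the content of the cited lemma: heuristically, the condition $r^{-\gamma}\lnorm{(\Xi - \psi)\restrict\afk(r)}{\ell}\to 0$ says that at scale $r$ the field $\phi$ is already visible inside $\afk(r)$ up to errors controlled by a power of $r$, and dualizing this approximation produces the local approximants together with the stated rates. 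I would simply quote these four bounds.

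The genuinely new ingredient is the derivative estimate. Observing that $\frac{d^n}{dt^n} U(t) A U(t)\st = U(t)(\imath\,\ad H)^n(A) U(t)\st$ and that $U(t)$ is unitary, the bound to be proved is equivalent to $\|(\ad H)^n(A_r)\| = O(r^{-k-n})$, uniformly in $t$. Since the $A_r$ furnished by the lemma need not be smooth under time translations, I would replace them by the averaged operators
\[
 A_r' := \int dt\, r^{-1} h(t/r)\, U(t) A_r U(t)\st ,
\]
where $h \in \dcal(-1,1)$ is fixed with $\int dt\, h(t) = 1$. By covariance $A_r' \in \afk(cr)$ for a fixed constant $c$, so after replacing $r$ by $r/c$ these are admissible local operators. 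Differentiating under the integral and integrating by parts $n$ times transfers the derivatives onto $h$,
\[
 (\imath\,\ad H)^n A_r' = (-1)^n r^{-n} \int dt\, r^{-1} h^{(n)}(t/r)\, U(t) A_r U(t)\st ,
\]
whence $\|(\ad H)^n A_r'\| \le r^{-n}\|h^{(n)}\|_{L^1}\|A_r\| = O(r^{-k-n})$, as required.

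It then remains to check that averaging does not spoil the first four estimates for $A_r'$. The bounds $\|A_r'\| = O(r^{-k})$ and $\lnorm{A_r'}{-\ell} = O(1)$ are immediate, since $U(t)$ commutes with $(1+H)^{-\ell}$, so that $\lnorm{U(t) A_r U(t)\st}{-\ell} = \lnorm{A_r}{-\ell}$ and averaging contributes only the factor $\|h\|_{L^1}$. For the approximation bound I would split $A_r' - \phi = (A_r' - \phi(h_r)) + (\phi(h_r) - \phi)$ with $h_r(t) = r^{-1}h(t/r)$; the first difference is the average of $U(t)(A_r-\phi)U(t)\st$ and is $O(r)$ by the lemma, while the second equals $\int dt\, h_r(t)\int_0^t ds\, U(s)\,\imath[H,\phi]\, U(s)\st$, which is $O(r)$ because $[H,\phi]$ is again an element of $\PhiFH$ and hence has finite $\lnorm{\cdotarg}{\ell'}$ for $\ell'$ large. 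The main obstacle is exactly this bookkeeping of weights: one must enlarge $\ell$ so that a single value controls $\phi$, $A_r$, and the commutator $[H,\phi]$ simultaneously; this is harmless because increasing $\ell$ only sharpens the damping in $\lnorm{\cdotarg}{-\ell}$, so all four estimates persist at the larger exponent. With $\ell$ so chosen, the family $A_r'$ satisfies all five assertions at once.
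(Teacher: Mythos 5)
Your proposal is correct and follows essentially the same route as the paper, which gives no independent proof but imports the statement from \cite[Lemma~3.5]{Bos:short_distance_structure} ``and the remark following it'' --- the lemma supplying the first four bounds and that remark being precisely the convolution-with-a-delta-sequence refinement you carry out to obtain the derivative estimates. One minor tightening: your appeal to $[H,\phi]\in\PhiFH$ is stronger than needed, since the elementary estimate $\lnorm{[H,\phi]}{-\ell-1}\leq 2\lnorm{\phi}{-\ell}$, obtained by writing $(1+H)^{-\ell-1}H\phi(1+H)^{-\ell-1}=H(1+H)^{-1}\,\bigl[(1+H)^{-\ell}\phi(1+H)^{-\ell}\bigr]\,(1+H)^{-1}$ (and similarly for the other ordering), already yields your $O(r)$ bound on $\phi(h_r)-\phi$.
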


Moreover, the spaces $\Phi_\gamma$ are related to the approximation 
of bounded operators in the short distance limit; see \cite[Eq.~(4.4)]{Bos:short_distance_structure}:
\begin{theorem} \label{thm:projectorApprox}
Let $p_\gamma:\cinftyss\to\Phi_\gamma \subset \cinftyss$ be a continuous
projection onto $\Phi_\gamma$. Then, for sufficiently large $\ell>0$, 
\begin{equation*}
   \lnorm{ (\psmap p_{\gamma\ast} - \psmap) \restrict \afk(r) }{\ell} 
  = o(r^\gamma).
\end{equation*}
\end{theorem}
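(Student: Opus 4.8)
The plan is to reduce the claim to the microscopic phase space condition (Definition~\ref{def:microPhase}) by exploiting the fact that the range of the minimal finite-rank map furnished by that condition is \emph{exactly} $\Phi_\gamma$. Fix $\gamma\ge 0$ and let $\psi$ be a minimal-rank map as in Definition~\ref{def:microPhase}, so that $\img\psi\st = \Phi_\gamma$. Since $p_\gamma$ is a projection onto $\Phi_\gamma$, it acts as the identity on $\img\psi\st$, whence $p_\gamma\psi\st = \psi\st$; passing to pre-adjoints (the spaces $\cinftys$ and $\cinftyss$ being in duality, with $\psi$ weak-$\ast$ continuous) yields the key algebraic identity $\psi\,p_{\gamma\ast} = \psi$.

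First I would use this identity to factorize the quantity of interest. Writing $\psmap$ for the inclusion and inserting $\psi\,p_{\gamma\ast} = \psi$,
\begin{equation*}
  \psmap\, p_{\gamma\ast} - \psmap
  = (\psmap - \psi)\,p_{\gamma\ast} + (\psi\,p_{\gamma\ast} - \psmap)
  = (\psmap - \psi)\,p_{\gamma\ast} + (\psi - \psmap)
  = (\psmap - \psi)\,(p_{\gamma\ast} - \id).
\end{equation*}
This is the heart of the matter: the difference we must estimate is the composition of the map $\psmap - \psi$, which the phase space condition controls to order $o(r^\gamma)$ on $\afk(r)$, with the fixed operator $p_{\gamma\ast} - \id$ that does not depend on $r$.

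Next I would estimate using submultiplicativity of the restricted map-norm under this composition. Because the restriction $\restrict\afk(r)$ acts on the image side only, it commutes with precomposition by $p_{\gamma\ast} - \id$; passing to dual maps (so that $(\psmap-\psi)\st$ is tested on $A\in\afk(r)$ with $\|A\|\le 1$ and measured in $\lnorm{\cdotarg}{-\ell}$) I expect a bound of the form
\begin{equation*}
  \lnorm{(\psmap\, p_{\gamma\ast} - \psmap)\restrict\afk(r)}{\ell}
  \;\le\; \|p_\gamma - \id\|\cdot\lnorm{(\psmap - \psi)\restrict\afk(r)}{\ell},
\end{equation*}
where $\|p_\gamma - \id\|$ denotes the operator norm on $\sigspace{\ell}\st$ (equivalently that of $p_{\gamma\ast}-\id$ on $\sigspace{\ell}$). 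The second factor is $o(r^\gamma)$ by Definition~\ref{def:microPhase}, and the first is a finite constant independent of $r$, so the product is $o(r^\gamma)$, as required.

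The main obstacle will be the energy-index bookkeeping hidden in the constant factor, and this is precisely what forces the qualifier ``for sufficiently large $\ell$''. The projection $p_\gamma$ is continuous on $\cinftyss$ for its weak-$\ast$ topology, but here it must be bounded with respect to the single Banach norm $\lnorm{\cdotarg}{-\ell}$. Writing $p_\gamma\xi = \sum_i \langle\xi,\sigma_i\rangle\,\phi_i$ with $\phi_i$ a basis of $\Phi_\gamma$ and $\sigma_i\in\cinftys$ biorthogonal, this boundedness amounts to finiteness of the finitely many norms $\lnorm{\phi_i}{-\ell}$ (the $\sigma_i$, lying in every $\sigspace{\ell}$, cause no trouble). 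Since each $\phi_i\in\Phi_\gamma\subset\PhiFH$, Theorem~\ref{thm:fieldConverge} guarantees $\lnorm{\phi_i}{-\ell}<\infty$ once $\ell$ is large enough; choosing $\ell$ also large enough for Definition~\ref{def:microPhase} to apply then makes both factors simultaneously meaningful. The one point that must genuinely be checked rather than asserted is that the composition estimate holds at a \emph{common} index $\ell$, with no loss $\ell\to\ell'$; because $p_{\gamma\ast}$ has finite-dimensional range, on which all the $\lnorm{\cdotarg}{\ell}$ are equivalent, I expect no essential index shift, but confirming this is where the real work lies.
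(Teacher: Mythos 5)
Your proposal cannot be compared against an internal argument, because the paper does not actually prove Theorem~\ref{thm:projectorApprox}: it is imported wholesale from the reference \cite[Eq.~(4.4)]{Bos:short_distance_structure}, the paper's role being only to restate it. Judged on its own merits, your derivation is correct, and it reconstructs the natural argument from exactly the ingredients the paper quotes. The identity $p_\gamma\psi\st=\psi\st$ for a minimal-rank $\psi$ (valid since $\img\psi\st=\Phi_\gamma$ and a projection onto $\Phi_\gamma$ restricts to the identity there) dualizes correctly to $\psi\, p_{\gamma\ast}=\psi$ via $(p_\gamma\xi)(\sigma)=\xi(p_{\gamma\ast}\sigma)$; the factorization $\psmap\, p_{\gamma\ast}-\psmap=(\psmap-\psi)(p_{\gamma\ast}-\id)$ is an exact algebraic identity; and since $\restrict\afk(r)$ acts only on image points, it commutes with the precomposition, yielding
\begin{equation*}
  \lnorm{(\psmap\, p_{\gamma\ast}-\psmap)\restrict\afk(r)}{\ell}
  \;\le\; \lnorm{p_{\gamma\ast}-\id}{\ell,\ell}\;
  \lnorm{(\psmap-\psi)\restrict\afk(r)}{\ell},
\end{equation*}
with the first factor independent of $r$ and the second $o(r^\gamma)$ by Definition~\ref{def:microPhase}.

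The two points you flag as "the real work" do close, and in the way you anticipate. First, $\lnorm{p_{\gamma\ast}-\id}{\ell,\ell}$ is finite for large $\ell$: writing $p_{\gamma\ast}\sigma=\sum_i\langle\phi_i,\sigma\rangle\,\sigma_i$, one needs only $\lnorm{\phi_i}{-\ell}<\infty$ (true for some $\ell_0$ since $\phi_i\in\PhiFH$, hence $\Phi_\gamma\subset\sigspace{\ell_0}\st$ by finite-dimensionality) and $\lnorm{\sigma_i}{\ell}<\infty$ (true for all $\ell$ since $\sigma_i\in\cinftys$). Second, the "common index" issue is resolved by monotonicity: since $(1+H)^{-s}$ is a contraction for $s\ge 0$, one has $\lnorm{\sigma}{\ell}\le\lnorm{\sigma}{\ell'}$ and $\lnorm{\phi}{-\ell'}\le\lnorm{\phi}{-\ell}$ for $\ell\le\ell'$; hence the unit ball in $\lnorm{\cdotarg}{\ell'}$ shrinks as $\ell'$ grows, so the map norm $\lnorm{(\psmap-\psi)\restrict\afk(r)}{\ell'}$ is decreasing in $\ell'$ and the bound of Definition~\ref{def:microPhase}, once valid at some $\ell$, persists for all larger $\ell$. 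Thus both requirements hold simultaneously for all sufficiently large $\ell$, and no index shift occurs. In short: a correct, self-contained proof of a statement the paper itself leaves to the literature.
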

Here $p_{\gamma\ast}: \cinftys \to \cinftys$ is the pre-dual map to $p_\gamma$,
which always exists due to its finite rank. Of course, such projections
$p_\gamma$ exist in abundance. Since the spaces $\Phi_\gamma$ are invariant
under conjugation, it is possible to choose $p_\gamma$ hermitean, i.e., such
that $p_\gamma(A\st)=p_\gamma(A)\st$.

It was shown in \cite{Bos:product_expansions} that due to the properties explained above, operator
product expansions exist in a rigorous sense. In fact, \cite{Bos:product_expansions}
established the expansion of $\phi(x)\phi'(y)$ for spacelike separated points $x$ and $y$.
A similar scheme can be applied for arbitrary $x$ and $y$, in the sense of distributions,
as sketched in \cite{Bos:product_expansions} and worked out in more detail
in \cite[Ch.~5.5]{Bos:operatorprodukte}. (See also \cite[Sec.~4]{BDM:field_renorm}.) For our purposes, we will need a specific variant of
this product expansion, which will be established in Sec.~\ref{sec:ope}.

\section{Distributions as boundary values of analytic functions}
\label{sec:distributions}

If $\sigma \in \Sigma$ is energy-bounded and $\phi$ a Wightman field with
sufficiently regular high-energy behaviour, then the distribution\footnote{%
Throughout the paper, we will write 
distributions in terms of their formal integration kernels, such as $\int K(x)
f(x) dx$ for the evaluation of a distribution $K \in \scal(\rbb)'$ on a test
function $f \in \scal(\rbb)$, even if $K$ does not arise from an integrable
function or measure. This is merely a notational convention.} 
$\sigma(\phi\st(t) \phi(t'))$ is the boundary value of an analytic function in
the half plane $\im (t-t')<0$. If further $\sigma$ is positive, then the
distribution is positive-definite. These types of distributions have certain
well-known characterizations \cite{StrWig:PCT,ReeSim:mmmp2}. Since we will need
specific quantitative estimates in our context, we will repeat some of those
arguments in detail.

First of all, for each $d>0$ and $m \in \nbb$ we define a norm on $\dcal(-d,d)$
by
\begin{equation}
  \sobnorm{f}{d}{m} := \max_{0 \leq n \leq m} d^n \|f^{(n)}\|_1.
\end{equation}
This norm is equivalent to the Sobolev norm defining the space
$W_0^{m,1}(-d,d)$ \cite{AdaFou:sobolev}, but it is convenient to use the above
norms owing to their behaviour under scaling. Namely, if $f \in \dcal(-d,d)$ and $\lambda >0$, and we set
$f_\lambda(t) = \lambda^{-1}f(t/\lambda)$, then $f_\lambda \in \dcal(-\lambda
d, \lambda d)$ and
\begin{equation}
\forall m \in \nbb: \quad  \sobnorm{f_\lambda}{\lambda d}{m}
=  \sobnorm{f}{ d}{m}.
\end{equation}

Let us now define the class of analytic functions that is of
interest.

\begin{definition} \label{def:boundary}
  We say that an analytic function $F: \rbb-\imath\rbb_+ \to \cbb$ is
  \emph{regular at the boundary} if there exists $\ell>0$ such that
  \begin{equation*}
  \lnorm{F}{-\ell}:=\sup_{-1 \leq \im z < 0} |F(z)|\,|\im z|^{\ell}
  \end{equation*}
  is finite. The space of all such functions for given $\ell$ is denoted
   as $\kcal^\ell$; and $\kcal := \cup_{\ell>0} \kcal^\ell$.
\end{definition}

As the name suggests, functions in $\kcal$ have
distributional boundary values on the real line.

\begin{proposition} \label{pro:analyticBoundary}
Let $F \in \kcal^\ell$. Then the limit $ \lim_{y \to 0+} F(x-\imath
y)$ exists as a tempered distribution in $x$. 
The limit distribution $F(x-\imath 0)$ satisfies the following estimate
for $f \in \dcal(-d,d)$, $d>0$: 
\begin{equation*}
   \big|\int f(x) \, F(x-\imath 0)\, dx \big| \leq 
   4^{\ell+2} \; (\ell+3)  (1+d^{-\ell-2}) \;
    \lnorm{F}{-\ell} \; \sobnorm{f}{d}{\ell+2}.
\end{equation*}
\end{proposition}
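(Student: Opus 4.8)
The plan is to prove both assertions simultaneously by the classical device of removing the boundary singularity through repeated integration by parts, transferring derivatives from $F$ onto the test function. Fix $\ell$ with $\lnorm{F}{-\ell}<\infty$ and set $m:=\ell+2$, regarded as the integer Sobolev order appearing in the statement (if $\ell\notin\nbb$, read $\ell+2$ as $\lfloor\ell\rfloor+2$; all that matters is that $m>\ell$). Definition~\ref{def:boundary} gives the pointwise bound $|F(z)|\le\lnorm{F}{-\ell}\,|\im z|^{-\ell}$ on the strip $\{-1\le\im z<0\}$; in particular $F$ is bounded by $\lnorm{F}{-\ell}$ on the line $\im z=-1$.

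First I would build iterated antiderivatives $F_0:=F$ and $F_k(z):=\int_{-\imath}^{z}F_{k-1}(\zeta)\,d\zeta$, $k=1,\dots,m$, the integral taken along a path inside the (simply connected) strip $\{-1\le\im\zeta<0\}$. Since $F_{k-1}$ is analytic, each $F_k$ is a well-defined analytic function with $F_k'=F_{k-1}$, whence $\partial_x^m F_m(x-\imath y)=F(x-\imath y)$. The key point is that each integration lowers the order of the boundary singularity by one. Evaluating $F_k$ along the L-shaped path that runs horizontally along $\im=-1$ and then vertically down to $z=x-\imath y$, the vertical leg contributes $\int_y^1 t^{-\ell+k-1}\,dt$, which stays bounded as $y\to0$ once $k>\ell$, while the horizontal leg contributes a factor growing at most linearly in the horizontal distance. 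Tracking these crudely yields a bound of the form $|F_m(x-\imath y)|\le C_\ell\,\lnorm{F}{-\ell}\,(1+|x|^m)$ for $0<y\le1$, where $C_\ell$ collects the factors $1/(\ell-k)$ from the vertical legs and the combinatorial weights from the mixed paths. Because $m>\ell$, the function $F_m$ moreover extends continuously to the real boundary, so $F_m(x-\imath0):=\lim_{y\to0+}F_m(x-\imath y)$ exists, is continuous, and grows at most polynomially.

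Both assertions now follow. Integrating by parts $m$ times against $f$ (compactly supported, so that no boundary terms appear) gives $\int f(x)\,F(x-\imath y)\,dx=(-1)^m\int f^{(m)}(x)\,F_m(x-\imath y)\,dx$; letting $y\to0+$ and using the continuity of $F_m$ up to the boundary shows the limit exists and equals $(-1)^m\int f^{(m)}(x)\,F_m(x-\imath0)\,dx$. As this formula remains valid for $f\in\scal(\rbb)$ and $F_m(\cdot-\imath0)$ has polynomial growth, the limit defines a tempered distribution, namely the $m$-th distributional derivative of the continuous function $F_m(\cdot-\imath0)$. For the quantitative bound, restrict to $f\in\dcal(-d,d)$, so that $F_m$ is evaluated only at $|x|<d$ and $|F_m(x-\imath0)|\le C_\ell\,\lnorm{F}{-\ell}(1+d^m)$. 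Combining this with $\|f^{(m)}\|_1\le d^{-m}\sobnorm{f}{d}{m}$ gives
\[
  \Big|\int f(x)\,F(x-\imath0)\,dx\Big|\le C_\ell\,\lnorm{F}{-\ell}\,(1+d^m)\,d^{-m}\,\sobnorm{f}{d}{m}=C_\ell\,\lnorm{F}{-\ell}\,(1+d^{-m})\,\sobnorm{f}{d}{m},
\]
which is exactly the claimed inequality once $m=\ell+2$ and the crude estimate $C_\ell\le 4^{\ell+2}(\ell+3)$ are inserted.

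The only part requiring genuine care is the antiderivative estimate of the second paragraph, and this is where I expect the main obstacle to lie. One must choose the contours so that the singular (vertical) and the linearly growing (horizontal) contributions separate cleanly, control the accumulation of the factors $1/(\ell-k)$ as $k$ runs up to $m$---including the borderline index $k=\ell$, where a logarithm appears but is rendered harmless by a further integration, which is why taking $m=\ell+2$ rather than $\ell+1$ is convenient---and bound the resulting combinatorial sum by $4^{\ell+2}(\ell+3)$. Everything else, including the interchange of limit and integral and the passage to Schwartz test functions, is routine.
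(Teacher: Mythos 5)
Your proposal is correct, but it follows a genuinely different route from the paper's. The paper never constructs primitives of $F$: it works with the smeared function $g(y)=\int f(x)\,F(x-\imath y)\,dx$, observes that analyticity lets one write $g^{(j)}(y)=\imath^j\int f^{(j)}(x)\,F(x-\imath y)\,dx$, and then runs a downward induction from $j=\ell+2$ to $j=0$, integrating the crude bound $|g^{(j)}(y)|\le y^{-\ell}\lnorm{F}{-\ell}\|f^{(j)}\|_1$ repeatedly in $y$ from $y=1$; each $y$-integration gains one power of $y$, boundedness of $g'$ near $y=0$ gives existence of the limit, and the case $j=0$ gives the quantitative bound, with $4^{\ell+2}$ produced by the induction and $(\ell+3)\max\{1,d^{-\ell-2}\}$ by converting $\sum_{k=0}^{\ell+2}\|f^{(k)}\|_1$ into the scaled Sobolev norm. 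You realize the same gain-one-power-per-integration mechanism on $F$ itself, via $m$-fold complex primitives along L-shaped contours, followed by a single integration by parts in $x$. Your version buys a stronger structural conclusion --- the boundary value is exhibited as the $m$-th distributional derivative of a continuous, polynomially bounded function, the classical representation theorem --- at the cost of having to control growth in $x$ (the horizontal legs), which the paper's method avoids entirely since $f$ is fixed and compactly supported and only $y$-integrations occur. The estimate you defer does close, and comfortably: along your contours the boundary-line bound propagates as $|F_k(x-\imath)|\le\lnorm{F}{-\ell}\,|x|^k/k!$, the vertical legs are iterated integrals of $t^{-\ell}$ whose denominators $(\ell-1)(\ell-2)\cdots$ are all at least $1$ when $\ell$ is an integer (the logarithm at index $k=\ell$ being absorbed by the next integration, exactly as you say), and one ends with $|F_{\ell+2}(x-\imath y)|\le 3\lnorm{F}{-\ell}\,(1+|x|^{\ell+2})$, so your $C_\ell$ is in fact $O(1)$, far inside $4^{\ell+2}(\ell+3)$. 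One caveat: for genuinely non-integer $\ell$ the denominators $\ell-k$ can be arbitrarily small and your constant could then exceed the stated one; but the Sobolev order $\ell+2$ in the statement (and in the paper's own proof, whose induction index runs over $j\in\{0,\dots,\ell+2\}$) only makes sense for integer $\ell$, and under that reading your argument is complete.
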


\begin{proof}
For fixed $f \in \scal(\rbb)$, consider the function 
\begin{equation} \label{gDef}
  g(y) := \int  f(x) F(x-\imath y)dx, \quad
  0 < y \leq 1.
\end{equation}
Since $F$ is analytic in $z = x-\imath y$, we can obtain the derivatives of $g$ using integration by parts:
\begin{equation}
\forall j \in \nbb_0: \quad  \frac{d^j g}{dy^j} (y) = \int  f(x)
(-\imath)^j \frac{d^j }{dz^j} F(x-\imath y) dx = \imath^j \int   f^{(j)}(x)  F(x-\imath y) dx.
\end{equation}
Thus we have the estimate
\begin{equation} \label{gDerivEst}
\forall j \in \nbb_0:\quad   \big|\frac{d^j g}{dy^j} (y) \big|
\leq \, y^{-\ell} \lnorm{F}{-\ell}\, \|f^{(j)}\|_1 .
\end{equation}
We now want to deduce the following improved estimate.
\begin{equation} \label{gDerivEst2}
  \big|\frac{d^j g}{dy^j} (y) \big| \; \leq 
	4^{\ell-j+2} \;   (1+y^{3/2-j})\; \lnorm{F}{-\ell} \sum_{k=0}^{\ell+2}
	\|f^{(k)}\|_1 \quad \text{for $j \in \{0,\ldots, \ell+2\}$}.
\end{equation}
In fact, for $j = \ell+2$, this directly follows from Eq.~\eqref{gDerivEst}.
Now suppose that Eq.~\eqref{gDerivEst2} holds for $j+1$ in place of $j$. We
compute:
\begin{multline} 
  \Big|\frac{d^j g}{dy^j}  (y) \Big| \; \leq 
\Big|\frac{d^j g}{dy^j}  (1) \Big| + \int_{y}^{1} dy' \Big|\frac{d^{j+1} g}{dy^{j+1}}  (y') \Big|
\\
 \leq \lnorm{F}{-\ell}\,\|f^{(j)}\|_1 
+  4^{\ell-j+1} \; \lnorm{F}{-\ell} \;  \sum_{k=0}^{\ell+2} \|f^{(k)}\|_1 
\int_y^1 dy' (1+(y')^{1/2-j})
\\
\leq 4^{\ell-j+1} \; \lnorm{F}{-\ell} \sum_{k=0}^{\ell+2} \|f^{(k)}\|_1 ( 
4 + 2 y^{3/2-j} ). 
\end{multline}
This proves Eq.~\eqref{gDerivEst2}. In particular, the case $j=1$ shows that 
$dg/dy$ is bounded as $y\to 0$; thus $g(y)$ converges in this limit. 
Setting $j=0$ in Eq.~\eqref{gDerivEst2} then shows that
$g(0+)=: \int f(x) F(x-\imath 0) dx$ defines a tempered distribution. Also, if
$f \in \dcal(-d,d)$, we can combine the estimate
\begin{equation} \label{eqn:LOneToSobolev}
  \sum_{k=0}^{m} \|f^{(k)}\|_1 \leq (m+1) \, \max\{1,d^{-m}\} \, 
  \sobnorm{f}{d}{m}
\end{equation}
with Eq.~\eqref{gDerivEst2}, where $j=0$ and $m=\ell+2$, in order to show the
proposed estimate for the limit distribution.
\cmpqed\end{proof}

It is clear from Definition~\ref{def:boundary} that, for two functions which
are regular at the boundary, their product inherits this property. 
More explicitly, for $F \in \kcal^\ell$ and $G \in
\kcal^m$, we have
\begin{equation} \label{eqn:regProdrule}
   \lnorm{F G}{-\ell-m} \leq \lnorm{F}{-\ell} \lnorm{G}{-m}.
\end{equation}
Thus the product of the
boundary distributions is well-defined by multiplying the
analytic functions.
On the other hand, the Fourier transforms of the boundary
distributions have support in $[0,\infty)$. This allows for an alternative definition of the distribution
product by convolution in Fourier space. 
The two definitions are in fact equivalent \cite[Ch.~IX.10, Example~4]{ReeSim:mmmp2}.

Apart from our distributions being boundary values of analytic functions, we
also need to consider questions of positivity. We remind the reader of the
definitions (the terminology is not completely consistent in the literature).
For $g_1,g_2 \in \dcal(\rbb)$, we
introduce the abbreviation $g_1 \syp g_2(s,s') := g_1(s+s'/2) g_2(s-s'/2)$.
\begin{definition}
A distribution $K \in \scal(\rbb^2)'$ is called
\emph{positive-definite} if for all $g \in \scal(\rbb)$, one has
$\int ds \,ds'\, K(s,s') \,\bar g \syp g (s,s') \geq 0$. 
If here $K$ depends on the second variable only, so $K \in \scal(\rbb)'$, it is
called a \emph{distribution of positive type}. 
With $\kcal_+ \subset \kcal$ we denote the subset of positive
type distributions.
\end{definition}
The Bochner-Schwartz Theorem asserts that distributions of positive type are
precisely the Fourier transforms of positive, polynomially bounded measures. We
now show that the product of distributions, as discussed further above,
preserves positivity if both factors are positive, at least in a special
situation that is of interest to us.

\begin{proposition} \label{pro:posMultiply}
Let $F \in \kcal_+$. Let $G: \rbb \times (\rbb - \imath \rbb_+) \to \cbb$
such that $G(s,\cdotarg) \in \kcal^\ell$ for some $\ell$ and every fixed $s$,
where the map $\rbb \to \kcal^\ell$, $s \mapsto G(s,\cdotarg)$ is
bounded and continuous in $\lnorm{\cdotarg}{-\ell}$ . 
Suppose further that $G(s,s'-\imath 0)$ is positive-definite.
Then the product distribution $P(s,s')=F(s'-\imath
0)G(s,s'-\imath 0)$ is continuous in $s$ and positive-definite.
\end{proposition}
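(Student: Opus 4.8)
The plan is to exploit the Bochner--Schwartz representation of the positive-type factor $F$ together with a twisted-translation trick that turns multiplication by $F$ into a positive superposition of the hypothesis already assumed for $G$.

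First I would dispose of continuity in $s$ and the well-definedness of $P$ as a two-variable distribution. Since $F\in\kcal^{\ell'}$ for some $\ell'$ and $s\mapsto G(s,\cdotarg)$ is, by hypothesis, bounded and continuous into $\kcal^\ell$ in the norm $\lnorm{\cdotarg}{-\ell}$, the product estimate \eqref{eqn:regProdrule} gives $\lnorm{F\,(G(s,\cdotarg)-G(s_0,\cdotarg))}{-\ell-\ell'}\le\lnorm{F}{-\ell'}\,\lnorm{G(s,\cdotarg)-G(s_0,\cdotarg)}{-\ell}$, so $s\mapsto F\,G(s,\cdotarg)$ is bounded and continuous into $\kcal^{\ell+\ell'}$. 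By Proposition~\ref{pro:analyticBoundary} the associated boundary distribution, paired with any fixed test function, depends continuously on its $\kcal$-element; hence $s\mapsto P(s,\cdotarg)=(F\,G(s,\cdotarg))(\,\cdotarg-\imath 0)$ is continuous, and the uniform bound in $s$ together with the decay of test functions in $\scal(\rbb^2)$ shows that $P\in\scal(\rbb^2)'$.

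For positivity, by Bochner--Schwartz I would write $F(s'-\imath 0)=\int_{[0,\infty)}e^{-\imath p s'}\,d\mu(p)$ with $\mu\ge 0$ polynomially bounded, the support in $[0,\infty)$ being forced by the analyticity of $F$ in the lower half-plane. The decisive observation is the elementary identity $e^{-\imath p s'}\,\bar g\syp g(s,s')=\overline{g_p}\syp g_p(s,s')$, where $g_p(t):=e^{\imath p t}g(t)\in\scal(\rbb)$. Granting for a moment an interchange of integrations, this yields
\begin{equation*}
\int ds\,ds'\,P(s,s')\,\bar g\syp g(s,s') = \int_{[0,\infty)} d\mu(p)\,\underbrace{\int ds\,ds'\,G(s,s'-\imath 0)\,\overline{g_p}\syp g_p(s,s')}_{=:\,J(p)},
\end{equation*}
and since each $J(p)\ge 0$ by the assumed positive-definiteness of $G(s,s'-\imath 0)$ while $d\mu\ge 0$, the left-hand side is nonnegative, which is exactly positive-definiteness of $P$.

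The hard part is justifying this interchange, because $\mu$ is only polynomially bounded, whereas the naive bound on $J(p)$ coming from Proposition~\ref{pro:analyticBoundary} grows polynomially in $p$ (every derivative landing on $e^{-\imath p s'}$ costs a factor of $p$). The resolution is that $J(p)$ in fact decreases faster than any polynomial as $p\to+\infty$. Indeed, for fixed $s$ the function $\psi_s(s'):=\bar g\syp g(s,s')$ is Schwartz in $s'$, and writing $\nu_s$ for the Fourier transform of $G(s,\cdotarg-\imath 0)$ (with convention $\hat T(q)=\int T(s')e^{\imath q s'}ds'$), which is supported in $[0,\infty)$ by the same lower-half-plane analyticity, Parseval gives $\langle G(s,\cdotarg-\imath 0),\,e^{-\imath p\cdotarg}\psi_s\rangle=\tfrac{1}{2\pi}\int_{[0,\infty)}\nu_s(q)\,\hat\psi_s(-q-p)\,dq$. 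For $q\ge 0$ and $p\to+\infty$ the argument $-q-p\le -p$ lies in the far, rapidly decaying tail of the Schwartz function $\hat\psi_s$, so this pairing (controlled by finitely many seminorms of $q\mapsto\hat\psi_s(-q-p)$, since $\nu_s$ is of finite order) decays rapidly in $p$; integrating over the Schwartz variable $s$ preserves the decay. Thus $p\mapsto J(p)$ is continuous and rapidly decreasing on $[0,\infty)=\supp\mu$, hence absolutely $\mu$-integrable, and Fubini is justified (for instance by first approximating $\mu$ with compactly supported measures and passing to the limit). This closes the argument.
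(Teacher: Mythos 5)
Your proof follows the same master plan as the paper's: the Bochner--Schwartz representation of $F$ with $\supp\mu\subset[0,\infty)$, the modulation identity $e^{-\imath p s'}\,\bar g\syp g=\overline{g_p}\syp g_p$, and positivity of each inner pairing $J(p)$ integrated against the positive measure $\mu$. Where you differ is in how the $p$-integration is controlled: the paper proves a bound $|I(\epsilon,p)|\le c(1+p)^{-n}$ on the $\epsilon$-regularized pairings, \emph{uniformly in} $\epsilon$, by integration by parts in $s'$ (using $|e^{-\imath p z}|\le 1$ for $p\ge 0$, $\im z\le 0$), and then applies dominated convergence; you instead show that the boundary pairing $J(p)$ itself decays rapidly, by a Parseval argument exploiting that the Fourier transform $\nu_s$ of $G(s,\cdotarg-\imath 0)$ is also supported in $[0,\infty)$, so that modulation by $e^{-\imath p s'}$ pushes $\hat\psi_s$ into its far tail. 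That decay argument is correct (with uniformity in $s$ supplied by $\sup_s\lnorm{G(s,\cdotarg)}{-\ell}<\infty$) and is an elegant observation not used in the paper.

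However, there is a genuine gap at the decisive step. The identity $\int ds\,ds'\,P(s,s')\,\bar g\syp g(s,s')=\int d\mu(p)\,J(p)$ is not a Fubini statement: the left-hand side is not an integral over $p$ at all, because $P(s,\cdotarg)$ is \emph{defined} as the boundary value of the analytic function $z\mapsto F(z)G(s,z)$, i.e., as the limit $\epsilon\to 0$ of $\int ds\,ds'\,F(s'-\imath\epsilon)G(s,s'-\imath\epsilon)\,\bar g\syp g(s,s')$. The entire difficulty is to commute the $\mu$-integration (legitimate at fixed $\epsilon>0$) with this limit; rapid decay of $J(p)$, a statement at $\epsilon=0$, does not by itself supply the uniform-in-$\epsilon$ domination that drives the paper's dominated-convergence argument. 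Your fallback of truncating $\mu$ to $[0,R]$ and passing to the limit can be completed, but two essential steps are missing from the sketch: (i) even for the truncated measure one must carry out the $\epsilon\to 0$ interchange (easy there, since $\lnorm{e^{-\imath p\cdotarg}G(s,\cdotarg)}{-\ell}\le\lnorm{G(s,\cdotarg)}{-\ell}$ for $p\ge 0$ gives a bound uniform over the compact $p$-range); and (ii) one must show that the \emph{left-hand} pairings converge under truncation, i.e., that pairing against the product built from $F_R$ tends to the pairing built from $F$. For (ii) one needs, e.g., $\lnorm{F-F_R}{-\ell'}\to 0$ for suitable $\ell'$ (true, via $\sup_{y>0}y^{\ell'}e^{-py}\le C_{\ell'}p^{-\ell'}$ and the polynomial boundedness of $\mu$), fed through Eq.~\eqref{eqn:regProdrule} and Proposition~\ref{pro:analyticBoundary}. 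Without (i) and (ii) -- or without the paper's uniform estimate -- the argument does not close as written.
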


\begin{proof}
First, due to Prop.~\ref{pro:analyticBoundary}, the boundedness and continuity
of $s \mapsto G(s,\cdotarg)$ implies that $\int P(s,s') f(s') ds'$ is
continuous and bounded in $s$; in particular $P \in \scal(\rbb^2)'$ is
well-defined. Now let $\mu$ be the positive measure that arises by Fourier transform of $F(x-\imath 0)$. 
Since $F(x-\imath 0)$ is a boundary value, we know $\supp \mu \subset
[0,\infty)$. Therefore we have for $g \in \scal(\rbb)$,
\begin{multline}
  \int ds \, ds' P(s,s') \bar{g} \syp g (s,s')
\\
  = \lim_{\epsilon \to 0+} \int_0^\infty d\mu(p) 
  \underbrace{\int ds\,ds' e^{-\imath p(s'-\imath\epsilon)}
  G(s,s'-\imath \epsilon) \bar g \syp g(s,s')}_{=:I(\epsilon,p)}.
\end{multline}
Supposing for a moment that the integrand $I(\epsilon,p)$ has an integrable
bound in $p$, uniform in $\epsilon$, we can apply the dominated convergence
theorem and obtain 
\begin{equation}
  \int ds \, ds' P(s,s') \bar g \syp g (s,s')
  = \int_0^\infty d\mu(p) 
  \int ds\,ds' 
  G(s, s'-\imath 0) \bar g_p \syp g_p (s,s'),
\end{equation}
where $g_p(t)= e^{\imath p t} g(t)$. This is clearly non-negative, since $G$ is
positive-definite.

It remains to prove appropriate bounds for $I(\epsilon,p)$. To that end, choose
$n\in\nbb$ so large that $\int d\mu(p) (1+p)^{-n}<\infty$. We use integration by
parts in $s'$ to obtain
\begin{multline} \label{eqn:iRewrite}
  I(\epsilon,p) = (1+p)^{-n} \int ds \,ds'\, (1+p)^n e^{-\imath
  (s'-\imath\epsilon) p} G(s,s'-\imath\epsilon) \bar g \syp g (s,s')
  \\
  = (1+p)^{-n} \int ds \,ds'\, e^{-\imath (s'-\imath\epsilon) p}
  \Big(1-\imath \frac{\partial }{ \partial s'} \Big)^n
  G(s,s'-\imath\epsilon) \bar g \syp g (s,s').
\end{multline}
Via the Leibniz rule, we can distribute the derivatives $\partial/\partial s'$
to $G(s,s')$ and to the test function. Now note that with $G$, also the
derivatives $\partial^k G/\partial z^k$ fulfil polynomial bounds when $\im z
\to 0-$; namely, we can use the Cauchy integral
formula for a circle of radius $|\im z/2|$ around $z$ in order to obtain the
estimate
\begin{equation}
   \Big| \frac{\partial^k G(s,z)}{\partial z^k} \Big| 
   \leq  2^k k! \, \sup_{x}\lnorm{G(x,\cdotarg)}{-\ell} |\im
   z|^{-\ell-k} \quad \text{for } -\frac{1}{2} \leq \im z < 0.
\end{equation}
This implies that $e^{-\imath p z/2}\partial^k/\partial z^k G(s,z/2)$ belongs to
$\kcal^{\ell+k}$ with norm uniform in $s$ and $p$. Applying Proposition~\ref{pro:analyticBoundary},
we can then obtain finite bounds on the integral in \eqref{eqn:iRewrite} as $\epsilon
\to 0$, so
\begin{equation}
 | I(\epsilon,p) |
  \leq c \, (1+p)^{-n}  \quad \text{for small }\epsilon,
\end{equation}
with a constant $c$ depending on $G$ and $g$. This is a bound
of the required form.
\cmpqed\end{proof}

\section{Products} \label{sec:products}

Our next aim is to describe products of quantum fields that are of interest to
us, and derive an operator product expansion for them. Specifically, we are
interested in the products of two quantum fields $\phi,\phi'$, 
displaced to different points $t,t'$ on the time axis; this
product then exists as a distribution in the difference variable $s'=t-t'$. In addition, we wish to
multiply this distribution with a $c$-number distributional kernel in $t-t'$,
and also consider sums of such expressions. The operator product
expansion we use is derived by means of techniques described in
\cite{Bos:product_expansions}; however, we need to generalise the
construction both to include the weighting factors and also to obtain more
detailed estimates on OPEs at timelike-separated points. 
 
We can formally describe the products of interest as elements of the algebraic tensor
product space $\PhiProd := \kcal \otimes \cinftyss \otimes
\cinftyss$. Any element $\Pi\in\PhiProd$ has the form of a finite sum,
\begin{equation}
  \Pi = \sum_j K_j \otimes \phi_j \otimes \phi_j', \quad
  K_j \in \kcal, \; \phi_j,\phi_j' \in \cinftyss.
\end{equation}
For $\ell>0$, we set $\PhiProdL{\ell}= \kcal^{\ell} \otimes \sigspace{\ell} \st
\otimes \sigspace{\ell} \st \subset \PhiProd$; clearly, $\PhiProd =
\cup_{\ell>0} \PhiProdL{\ell}$.
Further we consider the subspace $\PhiProdLoc = \kcal \otimes \PhiFH
\otimes \PhiFH \subset \PhiProd$, the space of products of pointlike fields. To each
product $\Pi \in \PhiProd$, we can associate a distribution $\opd$,
heuristically given by
\begin{equation} \label{eqn:prodHeuristic}
 U\big(\frac{t+t'}{2}\big) \opd(t-t') U\big(\frac{t+t'}{2}\big)\st  = \sum_j 
 K_j(t-t'-\imath 0) \phi_j(t) \phi_j(t').
\end{equation}
We shall first discuss in which sense these product
distributions exist, before deriving an operator product expansion for them, in the case where 
$\phi_j$ and $\phi_j'$ are local fields. Then we will introduce certain
convolutions of these distributions with test functions, generalize the OPE for them, and single out a minimal set of composite fields that will be of use to us.

\subsection{Operator products} \label{sec:productDef}
Before considering our operator products, let us first define the set of
distributions of interest.

\begin{definition} \label{def:cinftyssDist}
A $\cinftyss$\emph{-valued distribution} is a linear map $T: \dcal(\rbb)
\to \cinftyss$ such that there exist constants $\ell>0$ and $m \in
\nbb_0$, and, for each $d>0$, a constant $c_d$, with the property that
\begin{equation*}
  \forall f \in \dcal(-d,d):  \quad
  \lnorm{T(f)}{-\ell} \leq c_d \sobnorm{f}{d}{m}.
\end{equation*}
\end{definition}
Equivalently, we might say that $T \restrict \dcal(-d,d)$ extends to a map
of $W_0^{m,1}(-d,d)$ to $\sigspace{\ell}\st$, with finite norm
$\lsobnorm{T}{-\ell}{d}{m} \leq c_d$. In more standard terms, $T$
might be called a distribution of finite order, but since we will not use other distributions
in this context, we drop the extra qualifier. 
As before, we shall denote these distributions using their formal kernels:
$T(f) = \int dx\, T(x) f(x)$. Their expectation values
$\sigma(T(x))$, for fixed $\sigma \in \sigspace{\ell}$, are then
distributions in $\dcal(\rbb)'$ in the usual sense. We shall call a
$\cinftyss$-valued distribution \emph{skew-hermitean} if $T(x)\st=T(-x)$.

We shall now clarify in which precise sense the
distributions $\opd$ in Eq.~\eqref{eqn:prodHeuristic} exist. 
\begin{proposition} \label{pro:productExist}
  Let $\ell>0$. To each $\Pi = \sum_j K_j \otimes \phi_j \otimes \phi_j' \in
  \PhiProdL{\ell}$, there exists a unique $\cinftyss$-valued distribution
  $\opd$ such that for any $\sigma \in \cup_{E>0} \Sigma(E)$,
  \begin{equation*}
    \sigma(\opd(f)) = \sum_j \int ds' f(s') K_j(s'-\imath 0)
    \sigma\big(\phi_j(s'/2-\imath 0)\phi_j'(-s'/2+\imath 0)\big).
  \end{equation*}
The map $\Pi \mapsto \opd$ is linear. Further, 
there is a constant $c>0$ such that for any $d\leq 1$,
\begin{equation*}
  \lsobnorm{\opd}{-5\ell-1}{d}{3\ell+2} \leq c \,
  d^{-(3\ell + 2)} \sum_j \lnorm{K_j}{-\ell} \lnorm{\phi_j}{-\ell}
  \lnorm{\phi_j'}{-\ell}.
\end{equation*}
\end{proposition}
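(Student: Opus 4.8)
The plan is to reduce everything to a single summand. Since the asserted identity, the asserted estimate, and the asserted linearity are all additive over $j$, it suffices to treat $\Pi = K\otimes\phi\otimes\phi'$ with $K\in\kcal^\ell$ and $\phi,\phi'\in\sigspace{\ell}\st$, and then to sum and invoke the triangle inequality. For such a single term I would first give meaning to the right-hand side of the defining formula on the dense set $\cup_{E>0}\Sigma(E)$, and only afterwards read off from the estimate that $\opd(f)\in\sigspace{5\ell+1}\st$, so that $\opd$ extends by continuity to a genuine $\cinftyss$-valued distribution.

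The central analytic input is that, for $\sigma\in\Sigma(E)$, the relative-time two-point function $s'\mapsto\sigma(\phi(s'/2)\phi'(-s'/2))$ is the boundary value of a function holomorphic in $\im s'<0$ that is regular at the boundary in the sense of Def.~\ref{def:boundary}. The mechanism I would use is this: writing the stripped product as $e^{\imath Hs'/2}\,\phi\, e^{-\imath Hs'}\,\phi'\, e^{\imath Hs'/2}$, positivity of energy allows the inner propagator to be continued, $e^{-\imath Hs'}\mapsto e^{-\imath Hs'}e^{-Hy}$, into the lower half-plane. This exponential damping is exactly what absorbs the two factors $(1+H)^{\ell}$ needed to extract $\phi$, and the two needed to extract $\phi'$, from the form bounds $\lnorm{\phi}{-\ell}$ and $\lnorm{\phi'}{-\ell}$: after extraction one is left with $(1+H)^{2\ell}e^{-Hy}$ in the middle, whose operator norm is $O(y^{-2\ell})$. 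Thus the inner kernel lies in $\kcal^{2\ell}$ with norm controlled by $\lnorm{\phi}{-\ell}\lnorm{\phi'}{-\ell}$, and multiplying by the weight $K(s'-\imath 0)\in\kcal^{\ell}$ and invoking the product rule \eqref{eqn:regProdrule} places the full kernel in $\kcal^{3\ell}$.

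Smearing against $f$ and applying Prop.~\ref{pro:analyticBoundary} with its order parameter equal to $3\ell$ — so that the Sobolev order is $3\ell+2$ — then yields a bound of the form $|\sigma(\opd(f))|\le c\,(1+d^{-3\ell-2})\,\lnorm{K}{-\ell}\lnorm{\phi}{-\ell}\lnorm{\phi'}{-\ell}\,\sobnorm{f}{d}{3\ell+2}\,\lnorm{\sigma}{5\ell+1}$, where for $d\le 1$ the prefactor $(1+d^{-3\ell-2})$ is absorbed into $c\,d^{-(3\ell+2)}$. The power $5\ell+1$ carried by $\sigma$ is precisely the number of powers of $(1+H)$ that must be spent to convert the analytic estimate into a statement about $(1+H)^{-(5\ell+1)}\opd(f)(1+H)^{-(5\ell+1)}$ uniformly in the cutoff $E$. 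Because the bound is phrased entirely through $\lnorm{\sigma}{5\ell+1}$ and $\cup_{E>0}\Sigma(E)$ is dense in $\cinftys\subset\sigspace{5\ell+1}$, the functional $f\mapsto\opd(f)$ extends uniquely to an element of $\sigspace{5\ell+1}\st$ obeying the stated inequality; this simultaneously gives existence, uniqueness, and the norm estimate, while linearity of $\Pi\mapsto\opd$ is immediate from the defining identity.

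The hard part will be making the estimate uniform in $E$. The difficulty is that the product $\phi\,e^{-\imath Hs'}\phi'$ of two order-$\ell$ forms has no meaning on the nose, and the holomorphic continuation of the \emph{full} matrix element — including the outer center-of-mass translations $e^{\imath Hs'/2}$ — is genuinely unbounded for fixed $\im s'<0$ once the energy is not cut off, since no finite power of $(1+H)$ dominates $e^{Hy/2}$. The argument must therefore confine the continuation to the inner propagator, where the damping $e^{-Hy}$ lives, and control the outer unitaries by combining the positive Fourier support of $K(s'-\imath 0)$ with the smoothness of $f$, rather than by continuing them. Pinning down exactly how many powers of $(1+H)$ are consumed at each step — and hence the precise values $5\ell+1$ and $3\ell+2$, which exceed the naive count $\ell+2\ell$ coming from $K$ and the propagator alone — is the delicate bookkeeping at the heart of the proof.
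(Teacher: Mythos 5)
Your skeleton---reduction to a single summand $K\otimes\phi\otimes\phi'$, analytic continuation of the two-point function via positivity of the energy, the product rule \eqref{eqn:regProdrule}, Prop.~\ref{pro:analyticBoundary} to smear the boundary value, and extension from $\cup_{E>0}\Sigma(E)$---is indeed the paper's. But the step you yourself single out as ``the hard part'', uniformity in the energy cutoff $E$, is exactly the step you do not supply, and the substitute you gesture at would not work as stated. Your key claim, that the inner kernel ``lies in $\kcal^{2\ell}$ with norm controlled by $\lnorm{\phi}{-\ell}\lnorm{\phi'}{-\ell}$'', fails on both readings. If you continue only the inner propagator, $e^{-\imath Hs'}\mapsto e^{-\imath Hs'}e^{-Hy}$, while leaving the outer unitaries at real time, the resulting regularization is \emph{not} of the form $\tilde F(s'-\imath y)$ for any holomorphic $\tilde F$ (the outer factors depend on $\re z$ only); it is therefore not ``regular at the boundary'' in the sense of Def.~\ref{def:boundary}, Prop.~\ref{pro:analyticBoundary} does not apply to it, and the product with $K(s'-\imath 0)$ can no longer be defined by multiplying analytic functions---one would have to rebuild the product by Fourier-space convolution and redo every quantitative estimate, which is precisely the ``delicate bookkeeping'' you defer. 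If instead one continues the full matrix element (as the paper does), holomorphy is restored, but the norm is \emph{not} uniform in $E$: for $\sigma\in\Sigma(E)$ one gets
\begin{equation*}
F(s'-\imath s'')=\sigma\big(e^{(s''+\imath s')H/2}\,\phi\,e^{-(s''+\imath s')H}\,\phi'\,e^{(s''+\imath s')H/2}\big),\qquad
\lnorm{F}{-2\ell}\ \lesssim\ \|\sigma\|\,\lnorm{\phi}{-\ell}\lnorm{\phi'}{-\ell}\,(1+E)^{2\ell}e^{1+E},
\end{equation*}
i.e.\ an \emph{exponential} in $E$, because $\|e^{s''H/2}P(E)\|\le e^{Es''/2}$.

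The paper's resolution, which your proposal lacks entirely, is a rescaling trick: one estimates $F(z/(1+E))$ and $K(z/(1+E))$ instead, so that the exponential becomes $e^{|\im z|}\le e$ on the strip $-1\le\im z<0$, at the cost of polynomial factors---$(1+E)^{2\ell}$ from the spectral bound, another $(1+E)^{2\ell}$ from rescaling $(s'')^{-2\ell}$, and $(1+E)^{\ell}$ from rescaling $K$, giving $(1+E)^{5\ell}$ in total; the rescaling is then undone on the test-function side at no cost, precisely because the norms satisfy $\sobnorm{f_\lambda}{\lambda d}{m}=\sobnorm{f}{d}{m}$. This is also where your unexplained exponents come from: the rescaled product lies in $\kcal^{3\ell}$, whence the Sobolev order $3\ell+2$ and the factor $d^{-3\ell-2}$. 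Finally, your last step is also not available as stated: no stage of your argument produces a bound ``phrased entirely through $\lnorm{\sigma}{5\ell+1}$''; what the construction yields is $\|\sigma\|(1+E)^{5\ell}$ for $\sigma\in\Sigma(E)$, and converting this into a statement in $\sigspace{5\ell+1}\st$ is not extension by density alone (a general element of $\sigspace{5\ell+1}$ is not energy-bounded) but a separate extension lemma, \cite[Lemma~2.6]{BDM:field_renorm}, which is also the origin of the ``$+1$'' in the exponent. In short: you have correctly located the obstruction and the target exponents, but the proposal contains no mechanism that actually produces bounds polynomial in $E$ within the analytic-function framework it invokes, so the central estimate remains unproved.
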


\begin{proof}
Without loss of generality, we can assume that $\Pi$ is of the form
$\Pi=K\otimes \phi \otimes\phi'$. Let $\sigma\in\Sigma(E)$, where $E>0$ is
fixed for the moment. Then, due to the spectrum condition, the distribution
$\sigma(\phi(s'/2)\phi'(-s'/2))$ is indeed the boundary value of an analytic
function, namely of
\begin{equation}
  F(s'-\imath s'') := \sigma\big( e^{(s''+\imath s') H/2} \,\phi\, 
  e^{-(s''+\imath s')H} \,\phi'\, e^{(s''+\imath s')H/2}\big), \quad
  s''>0.
\end{equation}
This function fulfils the bounds
\begin{multline} \label{eqn:fieldprodBounds}
  |F(s'-\imath s'')| \leq 
  \|\sigma\| e^{Es''} (1+E)^{2\ell} \lnorm{\phi}{-\ell} \lnorm{\phi'}{-\ell}
  \sup_{\lambda>0} e^{-\lambda s''} (1+\lambda)^{2\ell} 
  \\\leq 
  \|\sigma\| \, \lnorm{\phi}{-\ell}\, \lnorm{\phi'}{-\ell}
  (1+E)^{2\ell} e^{(1+E)s''} (2\ell)^{2\ell} (s'')^{-2\ell}.
\end{multline}
So $F$ is regular at the boundary in the sense of
Definition~\ref{def:boundary}. Rescaling its argument, we explicitly have
\begin{equation}
   \lnorm{F(\frac{z}{1+E})}{-2\ell}
   \leq c \|\sigma\| \lnorm{\phi}{-\ell}\lnorm{\phi'}{-\ell} (1+E)^{4\ell},
\end{equation}
where the constant $c$ depends on $\ell$ only. The distributional product
$K(s'-\imath 0)F(s'-\imath 0)$ therefore exists. Rescaling also $K$, and
applying Proposition~\ref{pro:analyticBoundary} and Eq.~\eqref{eqn:regProdrule}, we
obtain for any $g \in \dcal(-d,d)$ and with another constant $c'$,
\begin{multline} \label{eqn:rescaledBound}
 \big| \int ds' g(s') K(\frac{s'-\imath 0}{1+E})F(\frac{s'-\imath
0}{1+E}) \big|
\\
 \leq c'  \|\sigma\| \, \lnorm{K}{-\ell}\, \lnorm{\phi}{-\ell}\,
 \lnorm{\phi'}{-\ell} (1+E)^{5\ell} (1+d^{-3\ell-2})
 \sobnorm{g}{d}{3\ell+2}.
\end{multline}
Now let $f \in \dcal(-d,d)$, $d\leq 1$. We set $g(s') = \epe^{-1} f(s'/\epe)
\in \dcal(-\epe d, \epe d)$ and obtain using Eq.~\eqref{eqn:rescaledBound} with
$\epe d$ in place of $d$,
\begin{multline}
 \big| \int ds' f(s') K(s'-\imath 0)F(s'-\imath 0)
   \big|
   \\
 \leq c''  \|\sigma\| \, \lnorm{K}{-\ell}\, \lnorm{\phi}{-\ell}\,
 \lnorm{\phi'}{-\ell} (1+E)^{5\ell} d^{-3\ell-2}
 \sobnorm{f}{d}{3\ell+2}.
\end{multline}
This serves to define $\opd(f)$ on $\Sigma(E)$ for any $E$. Using
\cite[Lemma~2.6]{BDM:field_renorm}, we can extend this linear form to
$\sigspace{5 \ell + 1}$, and obtain another  constant $c'''$ such that
\begin{equation} \label{eqn:opdEstimate}
  \lsobnorm{\opd}{-5\ell-1}{d}{3\ell+2} \leq c'''
  \lnorm{K}{-\ell} \lnorm{\phi}{-\ell} \lnorm{\phi'}{-\ell}
  d^{-3\ell-2}.
\end{equation}
The extension is unique by density. It is also clear by construction that
$\opd(f)$ is linear in $f$ and in $\Pi$, i.e. multilinear in $\phi$, $\phi'$,
and $K$. Then, the estimate \eqref{eqn:opdEstimate} shows that $\opd$ is a
$\cinftyss$-valued distribution in the sense of Def.~\ref{def:cinftyssDist}.
\cmpqed\end{proof}

\subsection{Product expansions} \label{sec:ope}

We now prove the operator product expansion for a product of
pointlike fields, $\Pi\in\PhiProdLoc$, in the following form.

\begin{theorem} \label{thm:ope}
Let $\oprod \in \PhiProdLoc$, and let $\alpha\geq 0$. 
There exist $\ell>0$, $m \in\nbb$, $\gamma \geq 0$, and a
hermitean projector $p_\gamma:\cinftyss\to\Phi_\gamma$ onto $\Phi_\gamma$ such
that
\begin{equation*}
 \lsobnorm{	\opd - p_\gamma \opd }{-\ell}{d}{m} = o(d^\alpha)
  \quad \text{as }d \to 0.
\end{equation*}
\end{theorem}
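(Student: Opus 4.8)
## Proof Proposal

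The plan is to reduce the statement to the approximation property already packaged in Theorem~\ref{thm:projectorApprox}, transported from the level of bounded operators in $\afk(r)$ to the level of the product distribution $\opd$. The essential point is that $\opd(f)$ is, by Proposition~\ref{pro:productExist}, built out of boundary values of analytic functions $F(s'-\imath 0) = \sigma(\phi(s'/2)\phi'(-s'/2))$ weighted by $K$; and the product $\phi(t)\phi'(t')$ at short separation should be approximable by local operators in $\afk(r)$ with $r$ comparable to the separation. So first I would fix $\oprod = K\otimes\phi\otimes\phi'$ (the general case follows by linearity and the triangle inequality, since $\opd$ is linear in $\Pi$), and choose $\gamma$ large enough—depending on $\alpha$ and on the energetic orders of $\phi,\phi'$ supplied by Theorem~\ref{thm:fieldConverge}—that the projection $p_\gamma$ captures the field content up to the order needed to beat $d^\alpha$.

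Second, I would use Theorem~\ref{thm:fieldConverge} to replace each pointlike field $\phi$ (resp.\ $\phi'$) by local approximants $A_r\in\afk(r)$ satisfying $\lnorm{A_r-\phi}{-\ell}=O(r)$ and $\lnorm{A_r}{-\ell}=O(1)$, with the derivative bounds $\|\frac{d^n}{dt^n}U(t)A_rU(t)\st\|=O(r^{-k-n})$ that are exactly what Proposition~\ref{pro:analyticBoundary} needs to control the boundary-value integrals. The idea is that $\opd$ applied to a test function supported in $(-d,d)$ only "sees" field separations $|s'|<2d$, so one takes $r$ proportional to $d$: then the operator product $A_r(t)A_r(t')$ lives in $\afk(\text{const}\cdot d)$, and Theorem~\ref{thm:projectorApprox} gives $\lnorm{(\psmap p_{\gamma\ast}-\psmap)\restrict\afk(r)}{\ell}=o(r^\gamma)$. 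Dualizing, this says precisely that applying $p_\gamma$ to the associated functional changes it by $o(d^\gamma)$ in the relevant norm; choosing $\gamma\ge\alpha$ then yields the $o(d^\alpha)$ decay for the bounded-operator version of the product distribution.

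Third, the two approximations must be glued: one controls $\opd - p_\gamma\opd$ by inserting the $A_r$'s, i.e.\ by writing
\begin{equation*}
\opd - p_\gamma\opd = (\opd - \oprod_{A_r}) - p_\gamma(\opd - \oprod_{A_r}) + (\oprod_{A_r} - p_\gamma \oprod_{A_r}),
\end{equation*}
where $\oprod_{A_r}$ denotes the product distribution built from $A_r,A_r'$ in place of $\phi,\phi'$. The last bracket is handled by Theorem~\ref{thm:projectorApprox} as just described, while the field-replacement errors $\opd-\oprod_{A_r}$ are estimated by Proposition~\ref{pro:productExist} applied to the differences $\phi-A_r$ etc., giving a bound of order $r\cdot d^{-(3\ell+2)}$ from the product-rule estimate, and using $\lnorm{p_\gamma}{}$ bounded (finite rank) for the middle term. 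The trade-off between the $O(r\,d^{-N})$ replacement error and the $o(r^\gamma)$ projection error, with $r\asymp d$, is what forces $\gamma$ and $m,\ell$ to be taken sufficiently large relative to $\alpha$ and the order $3\ell+2$ appearing in Proposition~\ref{pro:productExist}.

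The main obstacle, and the step I would spend the most care on, is the transition from the operator norm estimates of Theorems~\ref{thm:fieldConverge} and~\ref{thm:projectorApprox}—which are phrased for elements of $\afk(r)$ and in the $\lnorm{\cdotarg}{\ell}$ norms on functionals—to the $\cinftyss$-valued distribution norm $\lsobnorm{\cdotarg}{-\ell}{d}{m}$ in which the conclusion is stated. Concretely, one must verify that smearing the boundary-value product with a test function $f\in\dcal(-d,d)$, together with the weight $K(s'-\imath 0)$, keeps the dependence on $r$ and $d$ polynomially controlled so that the $o(r^\gamma)$ from Theorem~\ref{thm:projectorApprox} survives and dominates after the optimisation $r\asymp d$. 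This is essentially a matter of repeating the analytic-continuation and Cauchy-estimate bookkeeping of Proposition~\ref{pro:productExist}, but now uniformly in $r$ using the derivative bounds $O(r^{-k-n})$, and then optimising the exponent balance; I expect this is where the precise values of $\ell$, $m$, and $\gamma$ get pinned down.
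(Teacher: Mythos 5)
Your proposal follows the same architecture as the paper's proof: reduce by linearity to $\Pi = K\otimes\phi\otimes\phi'$ with $\phi,\phi'\in\PhiFH$, replace the pointlike fields by local approximants $A_r\in\afk(r)$ via Theorem~\ref{thm:fieldConverge}, split $\opd - p_\gamma\opd$ into field-replacement errors plus a projection error for the localized product (your three-term decomposition is exactly the paper's triangle inequality with $B_r = K\otimes A_r\otimes A_r'$), and estimate the former by Proposition~\ref{pro:productExist} and the latter by Theorem~\ref{thm:projectorApprox}. However, there is a genuine gap in the quantitative step on which everything hinges: your scaling choice $r\asymp d$. By your own (correct) replacement-error estimate, $\lsobnorm{\pd{(\Pi-B_r)}}{-\hat\ell}{d}{m} = O(r\,d^{-m})$ with $m=3\ell+2$; with $r\asymp d$ this is $O(d^{1-m})$, which \emph{diverges} as $d\to 0$. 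No choice of $\gamma$ can repair this, since $\gamma$ does not enter that term at all, and taking ``$m,\ell$ sufficiently large,'' as you suggest, makes it strictly worse. The two error terms pull in opposite directions in $r$: the replacement error forces $r = o(d^{\alpha+m})$, while the projection error degrades as $r\to 0$; the resolution cannot be $r\asymp d$.

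The paper's resolution is to take $r$ to be a much higher power of $d$, namely $r(d)=d^{\alpha+m+1}$, which renders the replacement error $O(d^{\alpha+1})$. The price is paid in the projection term: Theorem~\ref{thm:projectorApprox} must be applied to the operator-valued function $A_r^P(s')=A_r(s'/2)A_r'(-s'/2)\in\afk(\ocal_{r+d})$, whose norm and $s'$-derivative norms blow up like $O(r^{-2k-n})$ by Theorem~\ref{thm:fieldConverge}, so the projection error is
\begin{equation*}
\lsobnorm{\pd{B_r}-p_\gamma\pd{B_r}}{-\hat\ell}{d}{m} = O\big(r^{-2k-\ell-2}\,d^{-\ell-2}\,(r+d)^\gamma\big),
\end{equation*}
and with $r=d^{\alpha+m+1}$ the factor $r^{-2k-\ell-2}$ is a large negative power of $d$. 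Consequently $\gamma$ must be taken as large as $(2k+\ell+3)(\alpha+m+1)$ — far beyond your ``$\gamma\ge\alpha$'' — so that $(r+d)^\gamma\approx d^\gamma$ overwhelms both this blow-up and the $d^{-\ell-2}$ coming from the kernel $K$; one then checks that both error terms are $O(d^{\alpha+1})$. Your claim that $\gamma\ge\alpha$ suffices overlooks precisely the $r^{-2k-n}$ factors that you correctly flag as needing care in your final paragraph; once they are accounted for, the exponent balance forces both the high power $r\asymp d^{\alpha+m+1}$ and the large value of $\gamma$.
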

This is a variant of \cite[Theorem~3.2]{Bos:product_expansions}. Note that the
approximation emerges into a more familiar form of operator product expansion
if $p_\gamma$ is written in a basis.

\begin{proof}
Again, we can assume $\Pi = K\otimes \phi \otimes \phi' \in \PhiProdL{\ell}$
for some $\ell>0$, where now $\phi,\phi' \in \PhiFH$. Further, after possibly
increasing $\ell$, we choose $k>0$ and approximating sequences $A_r$, $A_r'$ for
$\phi$, $\phi'$ as in Theorem~\ref{thm:fieldConverge}. 
Set $B_r = K\otimes A_r \otimes A_r'$. We define $m := 3 \ell+2$ and
$\gamma := (2k+\ell +3)(\alpha+m+1)$, 
and choose a hermitean projector $p_\gamma$ onto $\Phi_\gamma$. Now we estimate
for an as yet unspecified $\hat \ell$,
\begin{multline}
  \lsobnorm{\opd-p_\gamma \opd}{-\hat\ell}{d}{m}
\leq 
  \lsobnorm{\opd-\pd{B_r}}{-\hat\ell}{d}{m}
+  \lsobnorm{\pd{B_r}-p_\gamma \pd{B_r}}{-\hat\ell}{d}{m}
\\
+  \lnorm{p_\gamma}{-\hat\ell,\hat\ell}
\lsobnorm{\pd{(\Pi-B_r)}}{-\hat\ell}{d}{m}.
\end{multline}
Here $\lnorm{p_\gamma}{-\hat\ell,\hat\ell}$ is a constant independent of $r$ and
$d$, finite if $\hat\ell$ is large. We will show below that for large
$\hat\ell$,
\begin{align}
 \label{eqn:prodEst1}
\lsobnorm{\pd{(\Pi-B_r)}}{-\hat\ell}{d}{m} &= O(r
d^{-m}), \\ 
\label{eqn:prodEst2}
\lsobnorm{\pd{B_r}-p_\gamma \pd{B_r}}{-\hat\ell}{d}{m} &=
O(r^{-2k-\ell -2 } d^{-\ell-2} (r+d)^\gamma).
\end{align}
Setting $r(d)=d^{\alpha+m+1}$, and using $\gamma =
(2k+\ell +3)(\alpha+m+1)$, both terms above are of order $O(d^{\alpha+1})$,
of which the theorem follows.

To show Eq.~\eqref{eqn:prodEst1}, we write 
\begin{equation} \label{eqn:diffSplit}
\Pi-B_r = K\otimes (\phi-A_r) \otimes \phi' + K \otimes A_r \otimes
(\phi'-A_r') .
\end{equation}
For the first summand, we estimate by Proposition~\ref{pro:productExist}:
\begin{equation}
 \lsobnorm{ \pd{K\otimes (\phi-A_r) \otimes
 \phi'}}{-5\ell-1} {d}{m}
 \leq O(d^{-m}) \, \lnorm{K}{-\ell} \lnorm{\phi-A_r}{-\ell} \,
 \lnorm{\phi'}{-\ell} = O(r d^{-m}),
\end{equation}
as proposed. The second summand of Eq.~\eqref{eqn:diffSplit} has a similar
estimate, which combined gives Eq.~\eqref{eqn:prodEst1}.

For Eq.~\eqref{eqn:prodEst2}, we use the short-distance approximation of
Theorem~\ref{thm:projectorApprox} on the operator 
$A_r^P(s') := A_r(s'/2) A_r'(-s'/2) \in \afk(\ocal_{r+d})$, 
where $|s'|\leq d$,
and on its derivatives in $s'$. Using the estimates on the derivatives of
$A_r(t)$ and $A_r'(t)$ provided by Theorem~\ref{thm:fieldConverge}, this entails
that for large $\hat\ell$,
\begin{equation} \label{eqn:prodDerivEstimate}
   \lnorm{ \frac{d^n}{(ds')^n}\Big( A_r^P (s') - p_\gamma A_r^P (s')
   \Big)}{-\hat\ell} = O((r+d)^\gamma) O(r^{-2k-n}).
\end{equation}
Now we compute $\pd{B_r}-p_\gamma \pd{B_r}$, first on a fixed test function
$f\in\dcal(-d,d)$, $d\leq 1$, and on a fixed functional $\sigma \in \cinftys$. 
By Prop.~\ref{pro:productExist}, we have
\begin{multline}
  \sigma(\pd{B_r}(f)-p_\gamma \pd{B_r}(f))  = \int ds' h(s') K(s'-\imath 0),
  \\
  \text{where } h(s') = f(s') g(s'), \quad g(s')= \sigma\big(A_r^P(s') -
  p_\gamma A_r^P (s') \big).
\end{multline}
(Note that here $h$ is smooth, the only divergent factor is $K$. Therefore,
also, sharp energy-bounds of $\sigma$ do not play a role.) Using
Proposition~\ref{pro:analyticBoundary}, it follows that
\begin{equation} \label{eqn:pdEstimateStart}
  |\sigma(\pd{B_r}(f)-p_\gamma \pd{B_r}(f))|  \leq c \lnorm{K}{-\ell}
  d^{-\ell-2} \sobnorm{h}{d}{\ell+2}
\end{equation}
with a numerical constant $c$. For the Sobolev norm of $h$,
we can derive the following estimate by the Leibniz formula.
\begin{equation}
  \sobnorm{h}{d}{\ell+2} = \sobnorm{f g}{d}{\ell+2} \leq   2^{\ell+2}
  \sobnorm{f}{d}{\ell+2} \, \max_{0\leq n \leq \ell+2} d^n \sup_{t \in [-d,d]} |g^{(n)}(t)|.
\end{equation}
The derivatives of $g$ can be estimated by Eq.~\eqref{eqn:prodDerivEstimate}.
For $t \in [-d,d]$ one has
\begin{equation} \label{eqn:gDerivEst}
  |g^{(n)}(t)| \leq \lnorm{\sigma}{-\hat\ell} O((r+d)^\gamma) O(r^{-2k-n}),
\end{equation}
where the $O(\ldots)$ estimates are uniform in $\sigma$. Combining
Eqs.~\eqref{eqn:pdEstimateStart}--\eqref{eqn:gDerivEst}, we obtain
\begin{equation}
   \lsobnorm{\pd{B_r}-p_\gamma \pd{B_r}}{-\hat\ell}{d}{\ell+2} \leq
   O(d^{-\ell-2}) O((r+d)^\gamma) O(r^{-2k-\ell-2}),
\end{equation}
which gives Eq.~\eqref{eqn:prodEst2}.
\cmpqed\end{proof}

The bounds established are certainly not strict, in particular regarding the value of $\gamma$
(i.e., the number of approximation terms needed in the OPE). They might be
improved at the price of extra computational effort, but
this is not relevant for our purposes. Note however that the kernels $K$
introduce an extra divergence that might make more OPE terms necessary than in
the ``ordinary'' OPE version with $K=1$.

\subsection{Convolutions} \label{sec:convol}

In order to establish the existence of quantum inequalities, we need to analyse
distributions evaluated on certain convolutions of test functions, similar to
Eqs.~\eqref{eqn:fIntegral}--\eqref{eqn:remIntegral} in the free field case. Let us define
them, and establish their well-definedness. We remind the reader of the 
abbreviation $g_1 \syp g_2(s,s') = g_1(s+s'/2) g_2(s-s'/2)$, and of the notion
of skew-hermitean $\cinftyss$-valued distributions, which fulfill
$T(s')\st=T(-s')$.

\begin{lemma} \label{lem:convolWelldef}
Let $T$ be a $\cinftyss$-valued distribution. Then the bilinear map
\begin{align*}
  &\convolmap{T}: \dcal(\rbb) \times \dcal(\rbb)  \to
  \dcal(\rbb,\cinftyss), 
  \\
  & \convol{g_1,g_2}{T}(s) = \int ds' \, g_1
  \syp g_2 (s,s') T(s')
\end{align*}
 is well-defined; indeed, if $g_1,g_2 \in \dcal(-d,d)$,
 then $\supp \convol{g_1,g_2}{T} \subset (-d,d)$. Further,
\begin{align*}
 &\convolintmap{T}: \dcal(\rbb) \times \dcal(\rbb) \to \cinftyss,
 \\
  &\convolint{g_1,g_2}{T}= 
 \int ds \, U(s)  \big(  \convol{g_1,g_2}{T}(s)  \big) U(s)\st 
\end{align*}
is well-defined as a weak integral. Both $\convolmap{T}$ and
$\convolintmap{T}$ are linear in $T$. If $T$ is skew-hermitean, then
$\convolint{\bar g,g}{T}$ is hermitean for arbitrary $g \in \dcal(\rbb)$. For
any $m \in\nbb$ and $d>0$, one has the estimate
 \begin{equation*}
    \lsobnorm{\convolintmap{T}}{-\ell}{d}{m}
    \leq 2^{m+1}  \lsobnorm{T}{-\ell}{2d}{m}.
 \end{equation*}
\end{lemma}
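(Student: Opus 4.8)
The plan is to establish the four assertions of Lemma~\ref{lem:convolWelldef} in sequence, with the final Sobolev estimate as the quantitative heart of the proof. The statements about well-definedness, support, and linearity are essentially bookkeeping once one has the right estimate, so I would organize the proof around proving the displayed bound and then harvesting the rest as corollaries.

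**First**, for fixed $g_1,g_2\in\dcal(-d,d)$ and fixed $s$, I would observe that $s'\mapsto g_1\syp g_2(s,s')$ is a test function whose support in $s'$ is controlled: since $g_1$ requires $|s+s'/2|<d$ and $g_2$ requires $|s-s'/2|<d$, both can be nonzero only when $|s|<d$ (giving the claimed $\supp\convol{g_1,g_2}{T}\subset(-d,d)$) and, when $|s|<d$, when $|s'|<2d$. Thus $s'\mapsto g_1\syp g_2(s,\cdotarg)$ lies in $\dcal(-2d,2d)$ for each $s$, and applying the defining property of a $\cinftyss$-valued distribution (Definition~\ref{def:cinftyssDist}) gives $\convol{g_1,g_2}{T}(s)\in\cinftyss$ with a bound in terms of $\sobnorm{g_1\syp g_2(s,\cdotarg)}{2d}{m}$. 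Smoothness in $s$ follows because differentiating under the (finite-order distributional) pairing is justified by the same uniform estimate applied to $\partial_s^k(g_1\syp g_2)$, so $\convolmap{T}$ genuinely maps into $\dcal(\rbb,\cinftyss)$.

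**The key step** is the norm estimate. I would compute $\lsobnorm{\convolintmap{T}}{-\ell}{d}{m}$ by noting that $\convolint{g_1,g_2}{T}$ is the integral over $s$ of the $U(s)$-conjugates of $\convol{g_1,g_2}{T}(s)$; conjugation by the unitary $U(s)$ does not change the $\lnorm{\cdotarg}{-\ell}$ norm, so the weak integral converges absolutely and is bounded by $\int ds\,\lnorm{\convol{g_1,g_2}{T}(s)}{-\ell}$. For each $s$ I bound this by $\lsobnorm{T}{-\ell}{2d}{m}\,\sobnorm{g_1\syp g_2(s,\cdotarg)}{2d}{m}$. The factor $2^{m+1}$ should emerge from the change of variables $(s,s')\mapsto(s+s'/2,s-s'/2)$ implicit in the $\syp$ product together with the Leibniz rule and the $d^n$-weights in $\sobnorm{\cdotarg}{d}{m}$: the mixed $\partial_s,\partial_{s'}$ derivatives of $g_1\syp g_2$ produce powers of $\half$ from the chain rule and the integration over $s$ combines the two $L^1$ norms of $g_1,g_2$, yielding a product structure with the stated constant. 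This is where I expect the main obstacle, namely getting the combinatorial constant and the $d$-scaling exactly right so that the weighted $L^1$ norms of the $\syp$-product reassemble into a single $\sobnorm{\cdotarg}{2d}{m}$-type bound rather than a clumsier double sum.

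**Finally**, linearity of both $\convolmap{T}$ and $\convolintmap{T}$ in $T$ is immediate from linearity of the defining integrals. For the hermiticity claim, I would take $g_1=\bar g$, $g_2=g$ and use skew-hermiticity $T(s')\st=T(-s')$ together with the symmetry $\overline{g}\syp g(s,s')=\overline{g\syp \bar g(s,-s')}$; substituting $s'\mapsto -s'$ in the defining integral for $\convolint{\bar g,g}{T}$ and taking the adjoint should show $\convolint{\bar g,g}{T}\st=\convolint{\bar g,g}{T}$, i.e.\ the functional is hermitean. The $U(s)$-conjugation respects hermiticity since $U(s)$ is unitary, so the outer integral preserves the property.
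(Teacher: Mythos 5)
Your proposal is correct and takes essentially the same route as the paper's proof: well-definedness and support from $g_1\syp g_2(s,\cdotarg)\in\dcal(-2d,2d)$, the estimate via the triangle inequality under the weak integral (conjugation by $U(s)$ preserves $\lnorm{\cdotarg}{-\ell}$ since $U(s)$ commutes with $H$) combined with the Leibniz rule and the unit-Jacobian change of variables, which indeed reassembles into $\sum_{n=0}^{m}2^n\leq 2^{m+1}$ exactly as you anticipate, and hermiticity from skew-hermiticity of $T$ plus a conjugation symmetry of $\bar g\syp g$. The only slip is the symmetry identity, which should read $\overline{\bar g\syp g(s,s')}=\bar g\syp g(s,-s')$ rather than the version you wrote; with that correction your substitution $s'\mapsto -s'$ argument goes through exactly as in the paper.
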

The Sobolev norms of the bilinear maps are
understood here with respect to a product of
identical Sobolev norms on the two arguments.

\begin{proof}
First, $\convol{g_1,g_2}{T}(s)$ is well-defined since $g_1 \syp
g_2(s,\cdotarg)$ lies in $\dcal(-2d,2d)$ for each fixed $s$; and it is (weakly)
smooth in $s$ since $s \mapsto g_1 \syp g_2 (s,\cdotarg)$ is smooth in the $\dcal(\rbb)$
topology. The support properties are clear. Further, one sees that
\begin{equation} \label{eqn:convolNorm}
   \lnorm{\convol{g_1,g_2}{T}(s)}{-\ell}
   \leq \lsobnorm{T}{-\ell}{2d}{m} \, \sobnorm{g_1 \syp
   g_2(s,\cdotarg)}{2d}{m},
\end{equation}
which is locally bounded in $s$. Therefore, for each
$\sigma\in\cinftys$, the map 
\begin{equation}
 \rbb \to \cbb, \quad  s \mapsto \sigma\big( U(s)\,\convol{g_1,g_2}{T}(s) \,
 U(s)\st \big)
\end{equation} 
is continuous. Hence $\convolintmap{T}$ is well-defined as a weak integral.
Using the Leibniz rule and a change of variables, one finds 
\begin{align} \notag
\int ds\,\sobnorm{(g_1\syp g_2)(s,\cdot)}{2d}{m} &\le
\sum_{n=0}^m  \sum_{r=0}^n
\begin{pmatrix}n\\ r\end{pmatrix} d^r\|g_1^{(r)}\|_1 d^{n-r}\|g_2^{(n-r)}\|_1 \\
&\le  (2^{m+1}-1) \sobnorm{g_1}{d}{m} \sobnorm{g_2}{d}{m}.
\end{align}
Together with Eq.~\eqref{eqn:convolNorm}, this yields the estimate
\begin{equation}
    \lsobnorm{\convolintmap{T}}{-\ell}{d}{m}
    \leq 2^{m+1} \lsobnorm{T}{-\ell}{2d}{m},
\end{equation}
as proposed. Also, it
is clear in matrix elements that both $\convol{g}{T}$ and $\convolint{g}{T}$
are linear in $T$. If $T$ is skew-hermitean, one uses the identity
$\overline{\bar g \syp g (s,s')}=\bar g \syp g(s,-s')$ to conclude
$\convol{\bar g,g}{T}(s)\st=\convol{\bar g,g}{T}(s)$ and, in consequence,
$\convolint{\bar g,g}{T}\st = \convolint{\bar g,g}{T}$. 
\cmpqed\end{proof}

The estimates above show that our operator product expansion
for $\opd$, as established in Theorem~\ref{thm:ope}, can be transferred to
$\convolintmap{\opd}$. This is in fact the form of OPE we shall use for
establishing quantum inequalities.

\begin{corollary} \label{cor:opeConvol}
Let $\oprod \in \PhiProdLoc$, and let $\alpha\geq 0$. 
There exist $\ell>0$, $m \in\nbb$, $\gamma \geq 0$, and a
hermitean projector $p_\gamma:\cinftyss\to\Phi_\gamma$ onto $\Phi_\gamma$ such
that
\begin{equation*}
  \lsobnorm{
	\convolintmap{\opd - p_\gamma \opd} }{-\ell}{d}{m} = o(d^\alpha)
  \quad \text{ as } d \to 0.
\end{equation*} 
\end{corollary}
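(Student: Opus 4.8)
The plan is to derive the corollary directly from Theorem~\ref{thm:ope} by applying the convolution map $\convolintmap{\cdotarg}$ to both sides of the OPE approximation and invoking the operator-norm estimate for $\convolintmap{T}$ established in Lemma~\ref{lem:convolWelldef}. The key observation is that $\convolintmap{\cdotarg}$ is linear in its distributional argument $T$, so that $\convolintmap{\opd - p_\gamma\opd} = \convolintmap{\opd} - \convolintmap{p_\gamma\opd}$, and the whole statement reduces to transporting the $o(d^\alpha)$ bound through a \emph{bounded} linear operation.

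First I would fix $\alpha \geq 0$ and apply Theorem~\ref{thm:ope} to obtain $\ell>0$, $m\in\nbb$, $\gamma\geq 0$, and a hermitean projector $p_\gamma$ so that
\begin{equation*}
  \lsobnorm{\opd - p_\gamma\opd}{-\ell}{d}{m} = o(d^\alpha) \quad\text{as } d\to 0.
\end{equation*}
Writing $T := \opd - p_\gamma\opd$, this is a $\cinftyss$-valued distribution (being a difference of such; note $p_\gamma\opd$ is again a $\cinftyss$-valued distribution because $p_\gamma$ is a continuous finite-rank map on $\cinftyss$). I would then feed $T$ into the estimate from Lemma~\ref{lem:convolWelldef},
\begin{equation*}
  \lsobnorm{\convolintmap{T}}{-\ell}{d}{m} \leq 2^{m+1}\,\lsobnorm{T}{-\ell}{2d}{m},
\end{equation*}
so that the norm of the convolved quantity at scale $d$ is controlled by the norm of $T$ at scale $2d$, up to the fixed constant $2^{m+1}$.

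The only genuine point to check is that replacing $d$ by $2d$ does not spoil the $o(d^\alpha)$ behaviour. Since $\lsobnorm{T}{-\ell}{2d}{m} = o((2d)^\alpha) = o(d^\alpha)$ (the factor $2^\alpha$ is an absolute constant and $d\to 0$), multiplying by the constant $2^{m+1}$ preserves the little-$o$ statement, giving $\lsobnorm{\convolintmap{\opd - p_\gamma\opd}}{-\ell}{d}{m} = o(d^\alpha)$ as required. One should keep the same $\ell$ and $m$ throughout; Lemma~\ref{lem:convolWelldef} is stated for arbitrary $m\in\nbb$ and $d>0$ with no loss of the exponent $\ell$, so the parameters produced by Theorem~\ref{thm:ope} pass through unchanged, and the hermitean projector $p_\gamma$ is inherited verbatim.

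I do not expect a serious obstacle here: the corollary is essentially a formal consequence of the linearity and boundedness of $\convolintmap{\cdotarg}$ in $T$. The mildest subtlety is purely bookkeeping — verifying that the scaling $2d$ in Lemma~\ref{lem:convolWelldef} is harmless for the $o(d^\alpha)$ estimate, and confirming that $\convolintmap{\cdotarg}$ indeed acts on the difference distribution $\opd - p_\gamma\opd$ in the way Lemma~\ref{lem:convolWelldef} requires (i.e.\ that this difference is a legitimate $\cinftyss$-valued distribution of finite order). Both are immediate from the definitions and the finite rank of $p_\gamma$, so the proof should be short.
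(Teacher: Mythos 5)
Your proposal is correct and is precisely the argument the paper intends: the corollary is stated as an immediate consequence of Theorem~\ref{thm:ope} combined with the estimate and linearity in $T$ from Lemma~\ref{lem:convolWelldef}, with the paper's only commentary being that ``the estimates above show'' the OPE transfers to $\convolintmap{\opd}$. Your bookkeeping of the harmless $d \mapsto 2d$ rescaling and the check that $\opd - p_\gamma\opd$ is a legitimate $\cinftyss$-valued distribution fill in exactly the details the paper leaves implicit.
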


\subsection{Minimal approximating projectors} \label{sec:minapprox}

The operator product expansion allows us to approximate a given product $\Pi$
with a finite number of composite fields. It is important for our applications
to choose the minimal number of composite fields needed, so that
none of the approximation terms can be considered ``redundant". 

Let us introduce that notion of approximation by finitely many
terms more abstractly. This is similar, but not identical to
the analysis of normal products in \cite[Sec.~IV]{Bos:product_expansions}.

\begin{definition} \label{def:approxSpace}
Let $\Pi \in \PhiProdLoc$, and $\alpha \geq 0$. A
hermitean projector\footnote{Projectors in this space will always be assumed as
continuous.} $p$ in $\cinftyss$ with finite-dimensional image in $\PhiFH$
is called \emph{$\alpha$-approximating} for $\Pi$ if there are constants $\ell>0$ and $m \in \nbb$ 
such that
\begin{equation*} 
 \lsobnorm{  \convolintmap{ \opd-p \opd} }{-\ell}{d}{m} = o(d^\alpha)
 \quad \text{as } d \to 0.
\end{equation*}
\end{definition}

The operator product expansion in Corollary~\ref{cor:opeConvol} 
tells us that for any given $\alpha$, we can choose $\gamma$ large
enough such that any hermitean projector $p$ onto $\Phi_\gamma$ is
$\alpha$-approximating for $\Pi$. However, this is in a way an ``upper estimate'' to the OPE, since $\Phi_\gamma$ may
contain elements that are not actually needed for approximating the given
product. We will therefore minimize the approximating projector in a
well-defined sense.

This is done as follows. On the family of all $\alpha$-approximating projectors 
for a given product $\Pi$, we introduce a partial order by 
\begin{equation} \label{eqn:partialOrderDef}
   p_1 \leq p_2 \quad :\Leftrightarrow \quad
   (\img p_1 \subset \img p_2) \;\wedge\; (\ker p_1 \supset \ker p_2).
\end{equation}
Minimal elements with respect to this partially ordered set will be called
\emph{minimal $\alpha$-approxi\-mating projectors}. By dimensional arguments,
any decreasing sequence in the set must eventually become constant; so minimal
elements certainly exist, and can be constructed below each given $\alpha$-approximating
projector. However, there seems to be no reason why they should be unique. 

This is in contrast to the situation for normal
product spaces \cite[Sec.~IV]{Bos:product_expansions}, where the approximation
 property depends on $\img p$ only, i.e., any other projector onto the same
space would also be $\alpha$-approximating. In that case, one finds a unique minimal
approximating space of fields. In our situation, these stronger results do not
seem to follow, the main difficulty being that the convolution
$\convolintmap{\cdotarg}$ does not commute with projectors. 
This turns out not to be a problem however: 
Each minimal $\alpha$-approximating projector will give us a
nontrivial quantum inequality.

Let us summarize the main point of the above discussion:

\begin{proposition} \label{pro:miniExist}
 Let $\alpha \geq 0$ and $\Pi \in \PhiProdLoc$. 
 There exists at least one minimal $\alpha$-approximating projector $p$ for
 $\Pi$.
\end{proposition}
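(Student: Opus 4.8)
The plan is to exhibit at least one $\alpha$-approximating projector and then descend to a minimal one using a finite-dimensionality argument. The proof should be short, essentially combining Corollary~\ref{cor:opeConvol} with the observation that the partial order \eqref{eqn:partialOrderDef} on $\alpha$-approximating projectors cannot support infinite strictly-decreasing chains.

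First I would establish nonemptiness. By Corollary~\ref{cor:opeConvol}, given $\alpha$ and $\Pi$, there exist $\ell$, $m$, $\gamma$ and a hermitean projector $p_\gamma$ onto $\Phi_\gamma$ with $\lsobnorm{\convolintmap{\opd - p_\gamma \opd}}{-\ell}{d}{m} = o(d^\alpha)$. By Definition~\ref{def:approxSpace}, $p_\gamma$ is precisely an $\alpha$-approximating projector for $\Pi$ (its image $\Phi_\gamma$ is finite-dimensional and contained in $\PhiFH$, and it is hermitean). Hence the set of $\alpha$-approximating projectors is nonempty.

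Next I would show that every decreasing chain terminates, so that minimal elements exist below any given element. Starting from $p_\gamma$, suppose we have a strictly decreasing sequence $p_\gamma \geq p_1 > p_2 > \cdots$ of $\alpha$-approximating projectors, where $p > p'$ means $p \geq p'$ but $p \neq p'$. The key point is that $\img p_k \subset \img p_\gamma = \Phi_\gamma$ for all $k$, so every $\img p_k$ is a subspace of the \emph{fixed finite-dimensional} space $\Phi_\gamma$. From the definition of the order in \eqref{eqn:partialOrderDef}, $p_{k+1} \leq p_k$ forces $\img p_{k+1} \subset \img p_k$ and $\ker p_{k+1} \supset \ker p_k$; strictness in at least one of these inclusions, combined with the identity $\cinftyss = \img p \oplus \ker p$ for any projector $p$, means that at each strict step either $\dim \img p_{k+1} < \dim \img p_k$, or the images coincide while the kernels strictly grow. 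Since $\dim \img p_k$ is a nonincreasing sequence of nonnegative integers bounded by $\dim \Phi_\gamma$, it can strictly decrease only finitely often; and once the image stabilizes, a projector with a fixed finite-dimensional image and strictly increasing kernel cannot decrease indefinitely either, because fixing the image to a given finite-dimensional complement constrains the projector. Thus the chain must become constant, establishing the descending chain condition. By a standard argument, below any $\alpha$-approximating projector—in particular below $p_\gamma$—there is a minimal one.

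The main obstacle I anticipate is the second half of the chain-termination argument: when the images $\img p_k$ have stabilized to a fixed finite-dimensional subspace $V \subset \Phi_\gamma$, one must argue that the kernels cannot strictly increase forever. A clean way to handle this is to note that a hermitean (self-adjoint, continuous) projector with prescribed range $V$ is essentially rigid: on a space with the relevant conjugation structure, the requirement $p(A\st)=p(A)\st$ together with fixed finite-dimensional image pins down $p$ on $V$ and its orthogonal-type complement, so there is no room for an infinite strictly-increasing family of kernels compatible with a single fixed image. I would phrase this carefully, perhaps by observing that the dimension count $\dim\img p_k + \mathrm{codim}\,\ker p_k$ type bookkeeping, or the finite dimensionality of the relevant quotient, forces stabilization; alternatively one can invoke that strict decrease always reduces the nonnegative integer invariant $\dim\img p_k$, by arranging the order so that genuinely strict steps reduce the image dimension. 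In either formulation the essential input is just that all images live inside the one fixed finite-dimensional space $\Phi_\gamma$, so I expect the rigor here to be routine once that observation is made explicit.
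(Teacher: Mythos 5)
Your overall strategy---nonemptiness via Corollary~\ref{cor:opeConvol} plus a descending-chain argument resting on finite dimensionality---is exactly the paper's own (the paper disposes of the chain condition with the single phrase ``by dimensional arguments''), and the first half of your argument is fine. The problem is precisely the step you flag as the main obstacle. Your proposed resolution, that a hermitean continuous projector with prescribed finite-dimensional image is ``essentially rigid,'' is false: hermiticity, $p(A\st)=p(A)\st$, together with a fixed conjugation-invariant image does \emph{not} pin down the kernel. The paper itself says as much: hermitean projections onto $\Phi_\gamma$ ``exist in abundance,'' and Sec.~\ref{sec:minapprox} explicitly cautions that minimal $\alpha$-approximating projectors need not be unique---both statements would be contradicted by your rigidity claim, so that route cannot be made to work.

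Fortunately, the fact you need is true for a much more elementary reason, using neither hermiticity nor continuity. Suppose $p_{k+1}\le p_k$ with $\img p_{k+1} = \img p_k =: V$ and $\ker p_k \subset \ker p_{k+1}$. For $x \in \ker p_{k+1}$ write
\begin{equation*}
  x = p_k(x) + \bigl(x - p_k(x)\bigr),
\end{equation*}
where the second term lies in $\ker p_k \subset \ker p_{k+1}$; applying $p_{k+1}$ gives $0 = p_{k+1}(p_k(x))$. But $p_k(x) \in V = \img p_{k+1}$, on which $p_{k+1}$ acts as the identity, so $p_k(x)=0$, i.e.\ $x \in \ker p_k$. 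Thus the kernels coincide, and since a projector is determined by its image and kernel, $p_{k+1}=p_k$. Hence once $\dim \img p_k$ (a nonincreasing sequence of nonnegative integers, finite by Definition~\ref{def:approxSpace} even without embedding everything in $\Phi_\gamma$) has stabilized, the chain is constant, and minimal elements below $p_\gamma$ exist by the standard argument. Your alternative suggestion via codimension bookkeeping also works---$\mathrm{codim}\,\ker p_k = \dim\img p_k$ is eventually a constant finite number, and nested subspaces of equal finite codimension coincide---but the hermitean-rigidity claim should be deleted.
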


\section{Quantum inequalities} \label{sec:ineq}

We are now going to establish quantum inequalities as a consequence of the
operator product expansion above, and prove that they are
nontrivial as discussed in Sec.~\ref{sec:overview}.

\subsection{Existence of inequalities} \label{sec:ineq_existence}

In order to establish inequalities, we define a set of
products $\PhiPos \subset \PhiProdLoc$ which are
``classically positive'', namely a finite sum of absolute squares with positive-type
coefficients:
\begin{equation}
  \PhiPos := 
  \bigsetprop {\sum_j K_j \otimes \phi_j\st \otimes \phi_j }{
  K_j \in \kcal_+ ,\, \phi_j \in \PhiFH }.
\end{equation}
For any $\Pi \in \PhiPos$, the distribution $\opd$ is then skew-hermitean. [One
verifies this in matrix elements by the integral formula in
Prop.~\ref{pro:productExist}, using the relation
$\overline{K_j(z)}=K_j(-\bar z)$ for the positive-type kernels
$K_j$.]

Products from $\PhiPos$ now give rise to quantum inequalities. To
formulate these, we use the abbreviation $R := (1+H)^{-1}$.

\begin{theorem} \label{thm:qi}
Let $\oprod \in \PhiPos$ and $\alpha\geq 0$. 
Let $p$ be an $\alpha$-approximating projector for $\Pi$. There
exist $\ell > 0$, $m \in \nbb$, 
and a function $\epsilon:\rbb_+ \to \rbb_+$ of order $\epsilon(d)=o(d^\alpha)$
such that the following inequality between bounded
operators holds.
\begin{equation*}
  \forall d>0, \; g \in \dcal(-d,d): \quad 
   R^\ell \convolint{\bar g, g}{p\opd} R^\ell 
   \geq - \epsilon(d) ( \sobnorm{g}{d}{m})^2 \idop.
\end{equation*}
\end{theorem}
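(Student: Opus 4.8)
The plan is to establish the inequality by combining three ingredients: the positivity built into $\PhiPos$, the approximation property of the $\alpha$-approximating projector $p$, and the operator product expansion transferred to convolutions via Corollary~\ref{cor:opeConvol}. The key observation is that for $\Pi \in \PhiPos$, the unprojected quantity $\convolint{\bar g, g}{\opd}$ is manifestly a positive operator up to a $c$-number correction, and the projected version differs from it only by the small remainder controlled by the approximation.

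First I would establish the positivity of the unprojected convolution. Since $\Pi = \sum_j K_j \otimes \phi_j^* \otimes \phi_j$ with $K_j \in \kcal_+$, I would show that $\convolint{\bar g, g}{\opd}$ is a positive operator modulo a scalar. Concretely, for a single summand, $\convolint{\bar g, g}{\opd}$ has matrix elements given (via Prop.~\ref{pro:productExist}) by a product of a positive-type kernel $K_j$ with the positive-definite field-product distribution $\sigma(\phi_j^*(s'/2)\phi_j(-s'/2))$. By Proposition~\ref{pro:posMultiply}, this product distribution is itself positive-definite, and integrating against $\bar g \syp g$ with the unitary conjugation $U(s)(\cdot)U(s)^*$ preserves positivity. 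The only subtlety is the vacuum subtraction: the raw product distribution contains a singular $c$-number part (the analogue of $\Delta_+$ in Eq.~\eqref{wickSplit}), which I would separate off as a scalar $c_g \idop$ contributing to the $\epsilon(d)(\sobnorm{g}{d}{m})^2$ term. I expect that after this separation the genuinely operator-valued part is non-negative, which is the heart of ``classical positivity.''

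Next I would use the approximation. By Corollary~\ref{cor:opeConvol} applied to $\Pi$ (or directly by the $\alpha$-approximating hypothesis on $p$), there exist $\ell$, $m$ such that $\lsobnorm{\convolintmap{\opd - p\opd}}{-\ell}{d}{m} = o(d^\alpha)$. This gives a bounded operator $R^\ell\, \convolint{\bar g, g}{\opd - p\opd}\, R^\ell$ whose norm is bounded by $\epsilon(d)(\sobnorm{g}{d}{m})^2$ with $\epsilon(d) = o(d^\alpha)$, using the definition of the dual norm $\lnorm{\cdot}{-\ell}$ together with Lemma~\ref{lem:convolWelldef} to control the $g$-dependence via the Sobolev norm. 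Writing $R^\ell\,\convolint{\bar g, g}{p\opd}\,R^\ell = R^\ell\,\convolint{\bar g, g}{\opd}\,R^\ell - R^\ell\,\convolint{\bar g, g}{\opd - p\opd}\,R^\ell$, the first term is positive modulo the scalar correction, and the second is bounded in operator norm by $\epsilon(d)(\sobnorm{g}{d}{m})^2$. Since $R^\ell$ is a positive bounded operator, conjugation preserves the positivity of the first term. Combining these yields the desired lower bound $-\epsilon(d)(\sobnorm{g}{d}{m})^2\idop$.

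The main obstacle, I expect, will be the rigorous handling of the vacuum subtraction and confirming that the leading scalar part can indeed be absorbed into the $o(d^\alpha)$ remainder rather than spoiling the bound. In the free-field sketch this was the term $c_g$, which was genuinely singular and merely $c$-number valued; here I must verify that for a general $\Pi \in \PhiPos$ the corresponding scalar contribution is captured either within the identity field component of $p\opd$ or within the approximation remainder, so that it does not produce an uncontrolled negative operator. A second, more technical point is ensuring that Proposition~\ref{pro:posMultiply} applies with the correct regularity: one must check that the map $s \mapsto \sigma(\phi_j^*(\cdot/2)\phi_j(-\cdot/2))$ satisfies the boundedness and continuity hypotheses in $\lnorm{\cdot}{-\ell}$ uniformly, which should follow from the bounds in Proposition~\ref{pro:productExist} but requires care in tracking the energy dependence $(1+E)$ against the conjugating resolvents $R^\ell$.
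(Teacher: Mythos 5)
Your overall skeleton coincides with the paper's: positivity of the unprojected convolution via Proposition~\ref{pro:posMultiply} applied to the products $K_j(s'-\imath 0)\,\rho\big(\phi_j\st(s+s'/2)\phi_j(s-s'/2)\big)$, the $\alpha$-approximating property to control $\convolintmap{p\opd-\opd}$ in the dual norm, and a passage from functionals to the operator inequality. However, the step you call ``the heart of classical positivity''---separating off a vacuum/$c$-number part and absorbing it into $\epsilon(d)$---is a genuine error, and carrying it out would break the proof. The positivity holds for the \emph{full, unsubtracted} product: for any positive energy-bounded functional $\rho$, one has $\rho(\convolint{\bar g,g}{\opd}) \geq 0$ outright, exactly as in Eq.~\eqref{eqn:fPos}, because $G_j(s,s'-\imath 0) = \rho\big(\phi_j\st(s+s'/2)\phi_j(s-s'/2)\big)$ (vacuum contribution included) is positive-definite, and multiplication by $K_j \in \kcal_+$ preserves this. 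What fails to be positive is precisely the normal-ordered quantity obtained \emph{after} subtraction---that is the entire reason quantum inequalities are nontrivial---so your expectation that ``after this separation the genuinely operator-valued part is non-negative'' inverts the actual situation.

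Moreover, the scalar you propose to push into the remainder cannot be $o(d^\alpha)$: in the free-field case it is $c_g$ of Eq.~\eqref{eqn:cIntegral}, which diverges as $\supp g$ shrinks to a point. The correct bookkeeping, and the paper's, is that this contribution never leaves the left-hand side: the identity is one of the composite fields spanning $\img p$, so the $c_g\idop$-type term is a component of $\convolint{\bar g,g}{p\opd}$ itself (compare Eq.~\eqref{eqn:usualQi}, which asserts $\wickprod{\phi^2}(g^2)+c_g\idop \geq 0$). Once you see this, your ``main obstacle'' evaporates and no subtraction occurs anywhere: one simply writes $\rho(\convolint{\bar g,g}{p\opd}) \geq \rho(\convolint{\bar g,g}{\opd}) - \lnorm{\rho}{\ell}\,\epsilon(d)\,(\sobnorm{g}{d}{m})^2 \geq -\lnorm{\rho}{\ell}\,\epsilon(d)\,(\sobnorm{g}{d}{m})^2$, using hermiticity of $\opd$ and of $p$ to ensure these expectation values are real, and concludes by density of the functionals $\hat\rho = \rho(R^{-\ell}\cdotarg R^{-\ell})$ in the positive cone; this also sidesteps your casual conjugation of the (a priori unbounded) unprojected term by $R^\ell$. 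Your second technical concern---the regularity hypotheses of Proposition~\ref{pro:posMultiply}---is legitimate and is resolved in the paper exactly as you anticipate, via the bound of Eq.~\eqref{eqn:fieldprodBounds} together with analyticity of energy-bounded states under $U(s)$.
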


\begin{proof}
By Def.~\ref{def:approxSpace}, there exist $\ell$, $m$ and
$\epsilon(d)=o(d^\alpha)$ such that 
\begin{equation} \label{eqn:projApprox}
 \forall d>0, \,g \in \dcal(-d,d): \quad 
 \lnorm{ \convolint{\bar g,g}{p\opd-\opd} }{-\ell} \leq 
 \epsilon(d) \,( \sobnorm{g}{d}{m})^2 .
\end{equation}
Note here that, since $\opd$ is skew-hermitean, $\convolint{\bar g,g}{\opd}$ is
guaranteed to be hermitean by Lemma~\ref{lem:convolWelldef}. Since $p$ is
hermitean, the same is true for $\convolint{\bar g,g}{p\opd}$. The
expectation values of these expressions in positive functionals are
therefore real. Thus, for any $\rho \in \cup_E \Sigma(E)$, $\rho \geq 0$, we
obtain from Eq.~\eqref{eqn:projApprox},
\begin{equation} \label{eqn:rhoTriangle}
   \rho(\convolint{\bar g,g}{p \opd}) 
   - \rho(\convolint{\bar g,g}{\opd})
    \geq -\lnorm{\rho}{\ell} \epsilon(d) \,( \sobnorm{g}{d}{m})^2.
\end{equation}
Now $\Pi\in\PhiPos$ is of the form $\Pi = \sum_j K_j \otimes  \phi_j\st \otimes
\phi_j$. Due to energy-boundedness of $\rho$, we have by
Prop.~\ref{pro:productExist} and Lemma~\ref{lem:convolWelldef},
\begin{multline} \label{eqn:restIntegral}
   \rho(\convolint{\bar g, g}{\opd}) = \sum_j \int ds \int ds' \, \bar g \syp g
   (s,s') \, K_j(s'-\imath 0) \\
   \times \rho\Big(U(s)\, \phi_j\st(s'/2) \, \phi_j(-s'/2) \,
   U(s)\st\Big).
\end{multline}
Here
\begin{equation}
  G_j(s,s'-\imath 0) :=\rho\big(U(s) \phi_j\st(s'/2) \phi_j(-s'/2) U(s)\st \big)
  = \rho \big(\phi_j\st(s+s'/2) \phi_j(s-s'/2) \big)
\end{equation}
are positive-definite distributions, as they give
$\rho(\phi_j(g)\st\phi_j(g))$ when integrated with $\bar g \syp g$.
Also, a similar estimate as in Eq.~\eqref{eqn:fieldprodBounds} shows that $s
\mapsto G_j(s,\cdotarg)$ is uniformly bounded in $\lnorm{\cdotarg}{-\ell}$, and
also continuous since the energy-bounded state $\rho$ is analytic for
$U(s)$. Thus the products of $G_j$ with the positive-type
kernels $K_j$ are positive-definite as well (Prop.~\ref{pro:posMultiply}). 
Therefore, the expression in Eq.~\eqref{eqn:restIntegral} is non-negative. Setting $\hat \rho = \rho(R^{-\ell} \cdotarg R^{-\ell})$, we can thus reduce Eq.~\eqref{eqn:rhoTriangle} to 
\begin{equation}
  \hat \rho (R^\ell \, \convolint{\bar g, g}{p\opd} \, R^\ell) 
  \geq -  \,\epsilon(d) \,( \sobnorm{g}{d}{m})^2 \hat\rho(\idop).
\end{equation}
Here $R^\ell  \convolint{\bar g,g}{p\opd} R^\ell$ can be extended to a
bounded operator by Eq.~\eqref{eqn:projApprox}. Since
$\hat\rho$ can be chosen from a dense subset in the set of all positive functionals, the theorem now follows.
\cmpqed\end{proof}

The connection of the theorem with more usual forms of quantum inequalities
becomes clear when we write the projector $p$ in a basis:
\begin{equation} \label{eqn:pBasis}
p = \sum_{j=1}^n \sigma_j(\cdotarg) \phi_j, \quad
\text{where } \sigma_j \in \cinftys, \; \phi_j \in \PhiFH, \; \sigma_j(\phi_k) =
\delta_{j k}.
\end{equation}
Here we choose $\phi_j$ and $\sigma_j$ hermitean, which is possible since $p$
is hermitean. Then, the inequality in the theorem can be rewritten as
\begin{equation} \label{eqn:basisQi}
  \sum_{j=1}^n R^\ell \phi_j(f_j) R^\ell \geq - \epsilon(d)\,  (
  \sobnorm{g}{d}{m})^2 \,\idop,
\end{equation}
where the functions $f_1,\ldots,f_n$ are given by
\begin{equation} \label{eqn:fjDef}
  f_j (s) = \int ds'\,  \bar g \syp g (s,s')\, \sigma_j(\opd(s')) .
\end{equation}
These $f_j$ are actually of compact support,
namely $\supp f_j \subset (-d,d)$ if $g \in \dcal(-d,d)$, see
Lemma~\ref{lem:convolWelldef}. They are also
smooth, since $s \mapsto \bar g \syp g (s,\cdotarg)$ is differentiable in
the $\scal$-topology; so they are indeed proper test functions in
$\dcal(\rbb)$. Further, the $f_j$ are real-valued, which follows from
hermiticity of $\sigma_j$ and skew-hermiticity of $\opd$.

The inequality \eqref{eqn:basisQi} is of an asymptotic nature, inasmuch as only
the asymptotic behaviour of the remainder, $\epsilon(d)=o(d^\alpha)$, is known.
For the sake of concreteness, we may choose a fixed test function $g \in
\dcal(-1,1)$, and define a family of scaled functions $g_d(t)=d^{-1}g(t/d)$. For
these, $\sobnorm{g_d}{d}{m}$ is independent of $d$, so that the right-hand side of
Eq.~\eqref{eqn:basisQi} simplifies; the inequality is then valid as the
parameter $d$ of the family goes to $0$.

While the functions $f_j$ are real-valued, they are not
guaranteed to be pointwise positive, in contrast to the free field
situation \cite{FewEve:qei_flat}. That the positivity properties of
$f_j$ are a delicate issue is apparent since Eq.~\eqref{eqn:fjDef} has a strong
analogy to Weyl quantization. With $\tilde C_j$ being the Fourier transform of $C_j(s')=\sigma_j(\opd(s'))$,
one has
\begin{equation} \label{weylAnalogy}
f_j(s) = \int \frac{dp}{2\pi} \tilde C_j(p) \int ds'\; e^{\imath p s'} \;\bar g
\syp  g (s,s') = \int \frac{dp}{2\pi} \tilde C_j(p) W_g(s,p),
\end{equation}
where $W_g$ is the Wigner function associated with the ``state'' $g$,
\begin{equation}
W_g(s,p) = \int ds'\; e^{\imath p s'} \;\overline{ g (s+s'/2)}  g (s-s'/2). 
\end{equation}
Now the Wigner function \emph{cannot} be pointwise positive for compactly
supported $g$ \cite{Hud:Wigner}, so positivity of $f_j$ can only be expected in
special situations; see e.g., Prop.~\ref{pro:fPosDilation}.
 
Note that Eq.~\eqref{eqn:basisQi} is a far-reaching
generalization of the usual inequalities for squares of fields in free field
theory. In particular, the estimate will in general not be restricted to two
fields, such as the Wick square and the identity in Eq.~\eqref{eqn:usualQi}, but
will involve a possibly large number of fields smeared with different sampling
functions. One of the $\phi_j$ will typically be the identity operator, 
and another $\phi_j$ will typically be a
normal product in the sense of Zimmermann
\cite{Zim:Brandeis,Bos:product_expansions}. This term will usually be
distinguished as a highest-order field, relating e.g.~to scaling dimensions.
But there seems to be no guarantee that such highest-order field exists
uniquely, and even less that only two fields $\phi_1,\phi_2$ appear in the
inequality. Compared with the usual free-field situation, we also encounter a
remainder term $\epsilon(d)$ which seems unavoidable in this
context, but is of negligible order compared with the contributions of
the field operators, as we shall see below.

\subsection{Nontriviality} \label{sec:nontriv}

While Thm.~\ref{thm:qi} asserts that our construction yields a large
variety of valid quantum
inequalities, there remains the concern that they
could be trivial in the sense that the lower bound could also serve as
an upper bound, cf.~\cite{Few:categorical}. In particular,  
an inequality for a \emph{bounded} operator $A$ of the form $A \geq - \|A\|
\idop$ would be considered trivial. Since the exponent $\ell$ in
Eq.~\eqref{eqn:basisQi} is so large that all $R^\ell \phi_j R^\ell$ are
bounded, we might well encounter this situation: The left-hand side of
Eq.~\eqref{eqn:basisQi} might be dominated in norm by the remainder
$\epsilon(d)$. More generally, since Thm.~\ref{thm:qi} puts no further restrictions on the
projector $p$, it might also be possible that $p \opd$ contains
single ``redundant terms'' that are individually dominated by 
$\epsilon(d)$, and are thus essentially irrelevant.

We shall show now that if the approximating projector is chosen \emph{minimal},
these problems do not occur, and in this sense the inequality is nontrivial.

\begin{theorem} \label{thm:asymptNontriv}
  Let $\Pi \in \PhiPos$ and $\alpha \geq 0$. Let $p$ be a minimal
  $\alpha$-approximating projector for $\Pi$. 
  Let $V := \img p$.
  For sufficiently large $\ell>0$ and $m \in \nbb$,
  and for any hermitean projector $q:V \to V$, $q \neq 0$,
  it holds that 
  \begin{equation*}
    d^{-\alpha} \sup_{g \in \dcal(-d,d)}
    \frac{   \lnorm{\convolint{\bar g , g}{q p \opd}}{-\ell}}
    {(\sobnorm{g}{d}{m})^{2}} \not \to 0
    \quad
    \text{as $d\to 0$}.
  \end{equation*}
\end{theorem}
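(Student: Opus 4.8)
The plan is to argue by contradiction, assuming that the supremum does tend to zero, and then to show that this would let us shrink the projector $p$, contradicting its minimality. The target statement says that for a minimal $\alpha$-approximating $p$ with image $V$, and \emph{any} nonzero hermitean sub-projector $q:V\to V$, the quantity $d^{-\alpha}\sup_g \lnorm{\convolint{\bar g,g}{qp\opd}}{-\ell}/(\sobnorm{g}{d}{m})^2$ does not vanish as $d\to 0$. So I would suppose, for some such $q\neq 0$, that this quantity \emph{does} go to zero, i.e. $\lnorm{\convolint{\bar g,g}{qp\opd}}{-\ell}=o(d^\alpha)(\sobnorm{g}{d}{m})^2$ uniformly in $g\in\dcal(-d,d)$.

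**The key idea is then to replace $p$ by the smaller projector $p':=(1-q)p$**, and show that $p'$ is still $\alpha$-approximating for $\Pi$. This should follow from the triangle inequality: since $p\opd - p'\opd = qp\opd$, we would have
\begin{equation*}
 \lsobnorm{\convolintmap{\opd - p'\opd}}{-\ell}{d}{m}
 \leq \lsobnorm{\convolintmap{\opd - p\opd}}{-\ell}{d}{m}
 + \lsobnorm{\convolintmap{qp\opd}}{-\ell}{d}{m},
\end{equation*}
where the first term is $o(d^\alpha)$ because $p$ is $\alpha$-approximating, and the second term is $o(d^\alpha)$ by the contradiction hypothesis. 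Here I need $p'=(1-q)p$ to again be a hermitean projector with finite-dimensional image in $\PhiFH$; this requires care, since $q$ is a projector on $V$ but need not commute with $p$ as operators on all of $\cinftyss$. The cleanest route is to note $V=\img p$, take $q$ to project within $V$, and verify that $p'=p-qp$ satisfies $(p')^2=p'$ and $\img p'\subsetneq\img p$ with $\ker p'\supsetneq\ker p$, so that $p'<p$ strictly in the partial order \eqref{eqn:partialOrderDef}. Since $q\neq 0$, the image strictly shrinks, so $p'$ is a \emph{strictly smaller} $\alpha$-approximating projector, contradicting minimality of $p$.

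**The main obstacle I anticipate is the algebra of $p'=(1-q)p$ being a genuine projector of the right type.** One must check that $1-q$, extended appropriately to act on $\cinftyss$ (e.g. as $1-q$ on $V$ and identity—or zero—on a complement), composes with $p$ to yield a hermitean idempotent whose image lies in $\PhiFH$ and whose kernel strictly contains $\ker p$. The natural construction is to write $p=\sum_j\sigma_j(\cdotarg)\phi_j$ as in \eqref{eqn:pBasis} with a hermitean dual basis, decompose $V$ orthogonally (with respect to the hermitean structure) as $\img q\oplus\ker q$, and define $p'$ to project onto $\ker q$ along the same kernel-directions as $p$ together with $\img q$. Hermiticity of $p'$ needs the hermiticity of $q$ together with that of $p$; this is exactly why the theorem restricts to hermitean $q$. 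Once $p'$ is confirmed to be a legitimate hermitean projector with $\img p'=\ker q\subsetneq V$ and the approximation bound above, the contradiction with minimality is immediate.

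**I would then conclude** that no such nonvanishing sub-projector reduction is possible, i.e. the supremum cannot go to zero for any nonzero hermitean $q$, which is precisely the claim. A secondary technical point worth flagging is the uniformity of the exponents $\ell,m$: the definition of $\alpha$-approximating supplies \emph{some} $\ell,m$; I would fix $\ell,m$ large enough to simultaneously control $p$, $q$, and the $\convolintmap{\cdotarg}$ estimate from Lemma~\ref{lem:convolWelldef}, using that enlarging $\ell$ and $m$ only weakens the norms involved. The heart of the argument is genuinely the minimality-versus-splitting dichotomy; the distributional and Sobolev estimates are all supplied by the earlier results (Corollary~\ref{cor:opeConvol} and Lemma~\ref{lem:convolWelldef}) and enter only through the triangle inequality.
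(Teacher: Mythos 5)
Your overall strategy is exactly the paper's: assume the quantity vanishes for some nonzero hermitean $q$, use the triangle inequality to show $(1-q)p$ is $\alpha$-approximating, and invoke minimality of $p$ to force $(1-q)p=p$, i.e.\ $q=0$. Your algebraic verification that $(1-q)p$ is a hermitean projector with $\img((1-q)p)\subsetneq\img p$ and $\ker((1-q)p)\supset\ker p$ is correct, and it needs no extension of $q$ beyond $V$: since $p$ maps into $V$ and acts as the identity on $V$, one has $p(1-q)p=(1-q)p$, whence $((1-q)p)^2=(1-q)^2p=(1-q)p$; hermiticity follows from that of $p$ and $q$. The paper simply treats these points as clear.

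There is, however, one genuine gap. The contradiction hypothesis controls only the \emph{diagonal} evaluations $\lnorm{\convolint{\bar g,g}{qp\opd}}{-\ell}$, whereas the quantity your triangle inequality requires to be $o(d^\alpha)$ is the full bilinear Sobolev norm $\lsobnorm{\convolintmap{qp\opd}}{-\ell}{d}{m}$: by the convention stated after Lemma~\ref{lem:convolWelldef}, this is a supremum over arbitrary pairs $(g_1,g_2)$ of test functions, not only pairs of the form $(\bar g,g)$. Writing ``the second term is $o(d^\alpha)$ by the contradiction hypothesis'' silently identifies these two norms. The paper bridges this with the polarization identity applied to the sesquilinear form $(h_1,h_2)\mapsto\convolint{\bar h_1,h_2}{qp\opd}$: every off-diagonal value is a finite linear combination of diagonal values $\convolint{\bar h,h}{qp\opd}$ with $h=h_1+\imath^k h_2$, and since $\sobnorm{h_1+\imath^k h_2}{d}{m}\le\sobnorm{h_1}{d}{m}+\sobnorm{h_2}{d}{m}$ and complex conjugation preserves the norm $\sobnorm{\cdotarg}{d}{m}$, the diagonal $o(d^\alpha)$ bound yields the bilinear bound up to a harmless constant factor. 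With that step inserted, your argument is complete and coincides with the paper's proof.
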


\begin{proof}
Suppose that $m$, $\ell$ and a hermitean projector $q:V \to V$ are given such
that
\begin{equation}
    d^{-\alpha} \sup_{g \in \dcal(-d,d)} (\sobnorm{g}{d}{m})^{-2}\,
    \lnorm{\convolint{\bar g , g}{q p \opd}}{-\ell} \to 0
    \quad
    \text{as $d\to 0$}.
\end{equation}
We will show $q=0$. First, we can use the polarization identity for the
quadratic form $\convolint{ g_1 , g_2}{q p \opd}$ in order to
show
\begin{equation}
    d^{-\alpha} 
    \lsobnorm{\convolintmap{q p \opd}}{-\ell}{d}{m} \to 0.
\end{equation}
The triangle inequality then yields
\begin{equation} \label{eqn:alphaTriangle}
    d^{-\alpha} \lsobnorm{\convolintmap{(1-q)p\opd-\opd}}{-\ell}{d}{m} 
    \\
    \leq
    d^{-\alpha} 
    \lsobnorm{\convolintmap{p\opd-\opd}}{-\ell}{d}{m} 
    +
    d^{-\alpha} 
    \lsobnorm{\convolintmap{q p \opd}}{-\ell}{d}{m} \to 0,
\end{equation}
since $p$ is $\alpha$-approximating; we suppose here that $m,\ell$ are
sufficiently large. Now \eqref{eqn:alphaTriangle} shows that $(1-q)p$ is also
$\alpha$-approximating for $\Pi$. It is clear that $(1-q)p \leq p$. Since
however $p$ is minimal, this implies $(1-q)p = p$. Thus $q=0$.
\cmpqed\end{proof}

Again, let us illustrate the content of the theorem by passing to a basis
representation of $p$, as in Eq.~\eqref{eqn:pBasis}. For the case
$q=\idop_V$, the theorem precisely shows that the left-hand side of
Eq.~\eqref{eqn:basisQi} does \emph{not} vanish in norm as fast as
$\epsilon(d)=o(d^\alpha)$. Further, choose $q$ specifically as $q = \sigma_k
\phi_k$ with fixed $k$. Then one obtains $\convolint{\bar g , g}{q p \opd} =
\phi_k(f_k)$, with $f_k$ as in Eq.~\eqref{eqn:fjDef}. 
Thus, Thm.~\ref{thm:asymptNontriv} provides us with a null sequence $(d_i)_{i
\in \nbb}$, a constant $c>0$, and a sequence of functions $g^{(i)} \in
\dcal(-d_i,d_i)$ with $\sobnorm{ g^{(i)} } {d_i}{m}=1$ such that
\begin{equation}
     \|R^\ell \, \phi_k(f_k^{(i)}) \, R^\ell\| \geq c \, (d_i)^\alpha 
     \quad \text{for all }i \in \nbb,
\end{equation}
where 
\begin{equation}
  f_k^{(i)} (s) = \int ds'\, \bar g^{(i)} \syp g^{(i)}(s,s') \,
  \sigma_k(\opd(s')) .
\end{equation}
So the field $\phi_k$ in the inequality~\eqref{eqn:basisQi} gives a
contribution that is large compared to the remainder $\epsilon(d)$.
Theorem~\ref{thm:asymptNontriv}, in full generality, shows that this conclusion
is true independent of the choice of basis.

We have argued in Prop.~\ref{pro:miniExist} that minimal $\alpha$-approximating
projectors $p$ exist for any product $\Pi$, and any $\alpha \geq 0$. So we
always obtain nontrivial quantum inequalities in the sense above. One might suspect here that
the minimization of the approximating projector $p$ could lead to
$p=0$, which might again be seen as trivial. While this is not the case
even in a simple free field example, we shall give a general argument that
shows that $p=0$ cannot occur, under a mild extra assumption.

\begin{theorem} \label{thm:pnotzero}
Let $\alpha \geq 0$, and $\Pi\in\PhiPos \backslash \{0\}$. Suppose
that the vacuum vector $\Omega$ is separating for the smeared fields $\phi(f)$, with
$\phi\in\PhiFH$ and $f\in\dcal(\rbb)$.
If $p$ is an $\alpha$-approximating projector for $\Pi$, then
$p \neq 0$.
\end{theorem}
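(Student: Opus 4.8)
The plan is to argue by contradiction, ruling out $p=0$. Suppose $p=0$ were $\alpha$-approximating for $\Pi$. By Definition~\ref{def:approxSpace} there are $\ell,m$ and $\epsilon(d)=o(d^\alpha)$ with
\[
  \lnorm{\convolint{\bar g, g}{\opd}}{-\ell} \le \epsilon(d)\,(\sobnorm{g}{d}{m})^2
  \qquad\text{for all } g\in\dcal(-d,d).
\]
I would test this against the vacuum state $\omega$. Since $H\Omega=0$ we have $(1+H)^\ell\Omega=\Omega$, so $\omega\in\Sigma(E)$ for every $E$ and $\lnorm{\omega}{\ell}=1$; the duality between $\sigspace{\ell}$ and $\sigspace{\ell}\st$ then gives
\[
  \omega(\convolint{\bar g, g}{\opd}) \le \lnorm{\convolint{\bar g, g}{\opd}}{-\ell} \le \epsilon(d)\,(\sobnorm{g}{d}{m})^2 .
\]
The whole task is therefore to bound the left-hand side \emph{below} by a positive constant along a suitable scaled family of test functions, contradicting $\epsilon(d)\to 0$.

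For the lower bound I would use the positive spectral representation already produced in the proof of Theorem~\ref{thm:qi}. Writing $\Pi=\sum_j K_j\otimes\phi_j\st\otimes\phi_j$ and letting $\mu_j\ge 0$ be the measure on $[0,\infty)$ representing the positive-type kernel $K_j$ (Bochner--Schwartz), the term-by-term use of Proposition~\ref{pro:posMultiply} yields, for the positive functional $\omega$,
\[
  \omega(\convolint{\bar g, g}{\opd}) = \sum_j \int_0^\infty d\mu_j(p)\,\|\phi_j(g_p)\Omega\|^2,
  \qquad g_p(t)=e^{\imath p t}g(t),
\]
in which every summand is non-negative, so no cancellation between different $j$ can occur. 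Introducing for each $j$ the positive measure $\nu_j$ on $[0,\infty)$ (positivity of energy) representing the vacuum two-point function $s'\mapsto\omega(\phi_j\st(s'/2)\phi_j(-s'/2))$, one has $\|\phi_j(g_p)\Omega\|^2=\int_0^\infty d\nu_j(q)\,|\hat g(p+q)|^2$ with $\hat g(k)=\int e^{\imath k t}g(t)\,dt$.

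It remains to locate one term with a genuinely nonzero contribution, and this is exactly where $\Pi\neq 0$ and the separating hypothesis enter. Since $\Pi\neq 0$, at least one simple tensor $K_{j_0}\otimes\phi_{j_0}\st\otimes\phi_{j_0}$ is nonzero, forcing $K_{j_0}\neq 0$ (hence $\mu_{j_0}\neq 0$) and $\phi_{j_0}\neq 0$. If $\nu_{j_0}=0$, then $\phi_{j_0}(f)\Omega=0$ for all $f\in\dcal(\rbb)$, and because $\Omega$ is separating for smeared fields this gives $\phi_{j_0}(f)=0$ for all $f$, i.e.~$\phi_{j_0}=0$ --- a contradiction; hence $\nu_{j_0}\neq 0$. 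Choosing $a>0$ so large that $\mu_{j_0}([0,a])>0$ and $\nu_{j_0}([0,a])>0$, fixing $g_0\in\dcal(-1,1)$ with $g_0\ge 0$, $g_0\not\equiv 0$ (so $\hat g_0(0)=\int g_0>0$), and setting $g_d(t)=d^{-1}g_0(t/d)$, the scaling identity of Section~\ref{sec:distributions} keeps $\sobnorm{g_d}{d}{m}$ constant while $\hat{g_d}(k)=\hat g_0(kd)$. Since $(p+q)d\le 2ad\to0$ on $[0,a]^2$, continuity of $\hat g_0$ gives $|\hat g_0((p+q)d)|^2\ge\tfrac12|\hat g_0(0)|^2$ for small $d$, whence
\[
  \omega(\convolint{\overline{g_d}, g_d}{\opd}) \ge \int_0^a d\mu_{j_0}(p)\int_0^a d\nu_{j_0}(q)\,|\hat g_0((p+q)d)|^2 \ge \tfrac12|\hat g_0(0)|^2\,\mu_{j_0}([0,a])\,\nu_{j_0}([0,a]) > 0,
\]
a positive constant independent of $d$, which contradicts the upper bound $\omega(\convolint{\overline{g_d},g_d}{\opd})\le\epsilon(d)\,(\sobnorm{g_0}{1}{m})^2\to0$. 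Hence $p\neq 0$.

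The main obstacle is precisely this last step: one must guarantee that the vacuum expectation does not collapse to zero despite $\Pi\neq 0$. The non-negativity of each summand removes any worry about cancellations between different composite fields, while the separating property is exactly the ingredient that prevents a nonzero field $\phi_{j_0}$ from carrying an identically vanishing vacuum two-point function (equivalently $\nu_{j_0}=0$). The remaining ingredients --- the spectral identity for $\omega$ and the uniform-continuity estimate underlying the lower bound --- are routine.
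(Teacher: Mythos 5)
Your proof is correct and is essentially the paper's own argument run in the opposite logical direction: the paper assumes $p=0$ is $\alpha$-approximating and deduces $\Pi=0$ (vacuum evaluation of the scaled family $g_d$, Bochner--Schwartz representation, positivity forcing the combined measure of $\omega(\opd(s'))$ to vanish, then the separating property killing each positive-type summand), whereas you fix a nonzero summand $j_0$ first and turn the same ingredients into a quantitative lower bound contradicting $\epsilon(d)\to 0$. All the key steps --- the scale-invariant norms $\sobnorm{g_d}{d}{m}$, the spectral representation with positive measures supported in $[0,\infty)$, the locally uniform convergence $|\hat g_d|^2\to|\hat g_0(0)|^2>0$, and the use of the separating vacuum to exclude a vanishing two-point function of a nonzero field --- coincide with the paper's.
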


We note that the condition of a separating vacuum vector is indeed a rather
weak one. It would suffice, for example, that
there exists a wedge region $\wcal$ such that $\Omega$ is cyclic for
$\afk(\wcal)$.

\begin{proof}
Suppose that $\alpha$ and $\Pi$ are given such that $p=0$ is
$\alpha$-approximating for $\Pi$. We will show $\Pi=0$. To that end, we
choose $\ell$ and $m$ sufficiently large, and pick a fixed positive test
function $g \in \dcal(-1,1)$. Then $g_d := d^{-1}g(d^{-1} \cdotarg)$
lies in $\dcal(-d,d)$, and $\sobnorm{g_d}{d}{m}=\sobnorm{g}{1}{m}$. 
Employing Def.~\ref{def:approxSpace}, we obtain
\begin{equation}
  \lnorm{ \convolint{\bar g_d, g_d}{ \opd }}{-\ell} \to 0 
  \quad \text{as } d \to  0.
\end{equation}
Evaluating the convolution integral in the vacuum state $\omega$ yields due to
translation invariance,
\begin{equation} \label{eqn:gdConfigspace}
  \int ds \, ds' \, \bar g_d \syp g_d (s,s') \; \omega(\opd(s'))
  \to 0 \quad \text{as } d \to 0.
\end{equation}
As argued in the proof of Thm.~\ref{thm:qi}, the
distribution $\omega(\opd(s'))$ is of positive type. Hence it is the Fourier
transform of a polynomially bounded positive measure $\mu$. 
With this information, we can rewrite Eq.~\eqref{eqn:gdConfigspace} as
\begin{equation} \label{eqn:gdMomentumspace}
  \int d\mu(p) \, | \tilde g_d (p) |^2   \to 0 \quad \text{as } d \to 0.
\end{equation}
However, as $d \to 0$, we have $|\tilde g_d (p)|^2 \to |\tilde g(0)|^2 >
0$ locally uniformly. Since $\mu$ is positive, we can conclude here that $\mu$
is the zero measure. So $\omega(\opd(s'))=0$ as a distribution. Using $\Pi\in\PhiPos$, we
have a representation
\begin{equation}
  0 = \omega(\opd(s')) = \sum_{j=1}^n K_j(s'-\imath 0) \, \omega(\phi_j\st
  U(-s'+\imath 0) \phi_j) \quad \text{with } K_j \in \kcal_+, \; \phi_j \in
  \PhiFH.
\end{equation}
Since all summands are of positive type, each of them must vanish individually;
and clearly, also their analytic continuations must vanish. Thus, for any $j$,
we have either $K_j=0$ or $\omega(\phi_j\st
U(-s') \phi_j)=0$. But the latter implies $\|\phi_j(f) \Omega\|=0$ for any $f$
of compact support; thus $\phi_j(f)=0$ by assumption, and ultimately
$\phi_j=0$ by passing to a delta sequence. In total, this means $\Pi=0$.
\cmpqed\end{proof}

One might also be concerned that $p$ might project only onto multiples of the
identity. Again, this does not occur in the simple example of the Wick square
of the free field, as discussed in Sec.~\ref{sec:overview}. In general, we
conjecture, but have not proved, that in this case all fields appearing in the
product $\Pi$ must be multiples of the identity. At the very least, one can show
that the projector may be taken to be of the form $p=\omega(\cdotarg) \idop$, where
$\omega$ is the vacuum state. If this $p$ is indeed $\alpha$-approximating for
$\Pi$, the normal product of $\Pi$ can be defined by point splitting, and
vanishes identically. So this does not seem to be a case of great interest.

\subsection{Mesoscopic bounds}\label{sec:mesoscopic}

The inequalities derived above involve a remainder term that vanishes in
the small distance limit. Here, we discuss how the remainder can be reduced for test
functions of fixed supports, essentially by forming a Riemann integral of the
bounds at short distance.

Let $\chi\in\dcal(-1,1)$ and $f\in\dcal(-d,d)$ be fixed nonnegative
functions. We set
$\chi_\lambda(s)=\lambda^{-1}\chi(s/\lambda)$ for $\lambda\in(0,1]$. As
in Thm.~\ref{thm:qi}, we suppose $p$ to be an $\alpha$-approximating projector for $\oprod \in \PhiPos$,
with $\alpha\geq 0$. The basic inequality of
Thm.~\ref{thm:qi}, applied to $\chi_\lambda$, entails
\begin{equation}
   R^\ell \convolint{\chi_\lambda , \chi_\lambda }{p\opd} R^\ell 
   \geq - \epsilon(\lambda) ( \sobnorm{\chi_\lambda}{\lambda }{m})^2
\idop  =- \epsilon(\lambda) ( \sobnorm{\chi}{1 }{m})^2 \idop
\end{equation}
for suitable $\ell>0$ and $m\in\nbb$, where $\epsilon(\lambda)=o(\lambda^\alpha)$. 
Applying a time-translation through $\lambda k$, multiplying by
$\lambda f(\lambda k)$ and summing, we find
\begin{align}\notag
   \sum_{k\in\zbb} \lambda f(\lambda k)
 U(\lambda k)\convolint{\chi_\lambda , \chi_\lambda }{p\opd} U(\lambda k)^*  
  & \geq - \epsilon(\lambda) ( \sobnorm{\chi}{1 }{m})^2 \sum_{k\in\zbb}
\lambda f(\lambda
k) R^{-2\ell}\\ 
&\geq - \epsilon(\lambda) ( \sobnorm{\chi}{1 }{m})^2 (\|f\|_1 +\lambda
\|f'\|_1) R^{-2\ell}
\end{align}
Passing to a basis representation, we may
rewrite this inequality in the form
\begin{equation}\label{eq:mesoqi}
\sum_{j=1}^n  \phi_j(F_{j,\lambda}) \geq - 2\epsilon(\lambda) ( \sobnorm{\chi}{1 }{m})^2
\sobnorm{f}{d}{1} R^{-2\ell}
\end{equation} 
for $\lambda\le d$ where 
\begin{equation}
F_{j,\lambda}(s) =   \sum_{k\in\zbb} \lambda f(\lambda k) \int ds'
\sigma_j(\opd(s')) \chi_\lambda\diamond\chi_\lambda (s-\lambda k,s') .
\end{equation}
Owing to the support properties of $\chi_\lambda$, at most
two terms contribute to the sum on $k$ for each fixed $s$; moreover, $F_{j,\lambda}\in\dcal(-d,d)$.

In any fixed state in $\cinftys$
the expectation value of the right-hand side of \eqref{eq:mesoqi}
can be made arbitrarily small by reducing
$\lambda$, while the behaviour of the terms on the right-hand side is
determined by the asymptotic behaviour of the $F_{j,\lambda}$, regarded
as compactly supported distributions. In the unlikely event that each
$F_{j,\lambda}$ converged to a limit in the weak-$*$ topology on $\ecal'(\rbb)$, we would have
established a quantum inequality without remainder term. 
It may be useful to give two examples. If the OPE coefficient
$\sigma_j(\opd(s'))$ is smooth, then convergence does occur, with
\begin{equation}
F_{j,\lambda}\to \sigma_j(\opd(0)) (\|\chi\|_1)^2 f \qquad\text{in
$\ecal'(\rbb)$ as $\lambda\to 0$.}
\end{equation}
(To see this, one integrates against $u(s)$ and observes that the $k$'th
summand is subject to only an $O(\lambda^2)$ error if $u(s)\sigma_j(\opd(s'))$
is replaced by $u(\lambda k)\sigma_j(\opd(0))$; as there at most
$O(\lambda^{-1})$ nonzero summands the result follows by a simple calculation.)
On the
other hand, if $\sigma_j(\opd(s'))=(\imath\pi)^{-1}/(s'-\imath 0)$, we find 
\begin{equation}
\lambda F_{j,\lambda} \to (\|\chi\|_2)^2 f\qquad\text{in
$\ecal'(\rbb)$ as $\lambda\to 0$.}
\end{equation}
(Note that it is the $L^2$-norm that appears here, in contrast to the
first example.)

In general, therefore, it cannot be expected that all of the $F_{j,\lambda}$ 
converge as $\lambda\to 0$. Nonetheless, as in the second example, its
leading order behaviour in $\lambda$ can be identified as follows. 
\begin{proposition} \label{pro:Fjl_leading}
Let $q$ be the order of the germ of $\sigma_j(\opd(s'))$ at $s'=0$
and define
\begin{equation}
\eta_j(\lambda) = \int ds'\, \sigma_j(\opd(s')) (\chi_\lambda * \hat{\chi}_\lambda)(s'),
\end{equation}
where $\hat{\chi}(s')=\chi(-s')$. If
$\lambda^{-q}\eta_j(\lambda)^{-1}=o(1)$ as $\lambda\to 0$ then
\begin{equation}\label{eq:F_cvgnce}
F_{j,\lambda}/\eta_j(\lambda) \to f \qquad\text{in
$\ecal'(\rbb)$ as $\lambda\to 0$}.
\end{equation}
In particular, this is satisfied if $\sigma_j(\opd(s'))$ has a
scaling limit of degree $\beta<0$ and $q=\lceil-1-\beta\rceil$. 
\end{proposition}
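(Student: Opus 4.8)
The plan is to pair the compactly supported distributions $F_{j,\lambda}/\eta_j(\lambda)$ against an arbitrary $u\in\cinfty(\rbb)$ and to show that the pairing tends to $\int ds\,f(s)u(s)$; since all $F_{j,\lambda}$ have support in a fixed compact set for $\lambda\le d$, this yields convergence in $\ecal'(\rbb)$. The convenient starting point is that $F_{j,\lambda}$ is a weighted Riemann sum of translates of a single profile. Writing
\begin{equation*}
\Theta_\lambda(s):=\int ds'\,\sigma_j(\opd(s'))\,\chi_\lambda\diamond\chi_\lambda(s,s'),
\end{equation*}
one has $F_{j,\lambda}(s)=\sum_{k\in\zbb}\lambda f(\lambda k)\,\Theta_\lambda(s-\lambda k)$, where $\Theta_\lambda$ is smooth with $\supp\Theta_\lambda\subset(-\lambda,\lambda)$, and $\int ds\,\Theta_\lambda(s)=\eta_j(\lambda)$ because $\int ds\,\chi_\lambda\diamond\chi_\lambda(s,s')=(\chi_\lambda*\hat\chi_\lambda)(s')$.

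I would then pair with $u$ and translate $s\mapsto s+\lambda k$ in each summand, giving $\int F_{j,\lambda}u=\sum_k\lambda f(\lambda k)\int ds\,\Theta_\lambda(s)\,u(s+\lambda k)$. On $\supp\Theta_\lambda$ we have $|s|<\lambda$, so $u(s+\lambda k)=u(\lambda k)+\big(u(s+\lambda k)-u(\lambda k)\big)$ with the correction bounded by $\lambda\,\|u'\|_{\infty,J}$ on the fixed compact $J$ containing all relevant arguments. The leading term produces $\eta_j(\lambda)\sum_k\lambda f(\lambda k)u(\lambda k)$; after dividing by $\eta_j(\lambda)$ the prefactor cancels and the Riemann sum converges to $\int f u$. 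The correction is bounded by $\|u'\|_{\infty,J}\,\lambda\,\|\Theta_\lambda\|_1\sum_k\lambda|f(\lambda k)|$, and since $\sum_k\lambda|f(\lambda k)|$ stays bounded (a Riemann sum for $\|f\|_1$), after dividing by $|\eta_j(\lambda)|$ the entire statement reduces to the single estimate
\begin{equation*}
\frac{\lambda\,\|\Theta_\lambda\|_1}{|\eta_j(\lambda)|}\longrightarrow 0\qquad(\lambda\to0).
\end{equation*}

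The crux, and the step I expect to be most delicate, is bounding $\|\Theta_\lambda\|_1$. Here I would rescale by $s=\lambda\rho$, $s'=\lambda\tau$, which turns the profile into $\Theta_\lambda(\lambda\rho)=\lambda^{-1}\langle\sigma_j(\opd(\lambda\,\cdot)),\chi\diamond\chi(\rho,\cdot)\rangle$ and hence
\begin{equation*}
\|\Theta_\lambda\|_1=\int_{-1}^{1}d\rho\,\big|\langle\sigma_j(\opd(\lambda\,\cdot)),\chi\diamond\chi(\rho,\cdot)\rangle\big|.
\end{equation*}
Thus $\|\Theta_\lambda\|_1$ is governed entirely by how the rescaled distribution $\sigma_j(\opd(\lambda\,\cdot))$ tests against the fixed family of bump functions $\tau\mapsto\chi\diamond\chi(\rho,\tau)$, all supported in $|\tau|<2$. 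This is precisely the information carried by the order $q$ of the germ at $s'=0$: by the defining scaling property of that order the rescaled pairing is $O(\lambda^{-q-1})$ uniformly in $\rho$, so $\|\Theta_\lambda\|_1=O(\lambda^{-q-1})$. Feeding this into the displayed reduction gives $\lambda\,\|\Theta_\lambda\|_1/|\eta_j(\lambda)|=O\big(\lambda^{-q}\,\eta_j(\lambda)^{-1}\big)$, which vanishes by hypothesis; this establishes \eqref{eq:F_cvgnce}.

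Finally, for the ``in particular'' clause I would verify the hypothesis from the scaling limit. Using $\chi_\lambda*\hat\chi_\lambda=\lambda^{-1}(\chi*\hat\chi)(\cdot/\lambda)$ one finds $\eta_j(\lambda)=\langle\sigma_j(\opd(\lambda\,\cdot)),\chi*\hat\chi\rangle$, so a scaling limit $C^{\mathrm{scal}}$ of degree $\beta$ yields $\eta_j(\lambda)\sim\lambda^{\beta}\,\langle C^{\mathrm{scal}},\chi*\hat\chi\rangle$; the leading coefficient is real (by hermiticity of $\sigma_j(\opd)$, and since $\chi*\hat\chi$ is real, even and, as $\chi\ge0$, nonnegative) and nonzero in the cases of interest, as the two worked examples show. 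A degree-$\beta$ scaling limit forces the germ order $q=\lceil-1-\beta\rceil$, whence $\lambda^{-q}\eta_j(\lambda)^{-1}=O(\lambda^{-q-\beta})$, and a short check shows $-q-\beta\in(0,1]$ for every $\beta<0$, so the exponent is positive and the hypothesis holds. The only genuine obstacle beyond routine bookkeeping is ensuring that the leading coefficient $\langle C^{\mathrm{scal}},\chi*\hat\chi\rangle$ does not vanish, which is precisely what licenses the division by $\eta_j(\lambda)$.
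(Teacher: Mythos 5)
Your proof of the main implication (that $\lambda^{-q}\eta_j(\lambda)^{-1}=o(1)$ gives \eqref{eq:F_cvgnce}) is correct and is essentially the paper's own argument: both decompose $F_{j,\lambda}$ as a Riemann sum of translates, replace $u(s+\lambda k)$ by $u(\lambda k)$, and control the error via the germ-order bound applied to the rescaled profile. Your estimate $\|\Theta_\lambda\|_1=O(\lambda^{-q-1})$ is the same information as the paper's bound $\sup_s\big|\int ds'\,\sigma_j(\opd(s'))\,\chi_\lambda\syp\chi_\lambda(s,s')\big|\le C\lambda^{-q-2}$ combined with $\supp\Theta_\lambda\subset(-\lambda,\lambda)$, so this part is sound.

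The \emph{in particular} clause, however, is left with a genuine gap, which you yourself flag at the end: you never establish that $S(\chi*\hat\chi)\neq 0$, where $S$ is the scaling limit distribution; you merely assert it is ``nonzero in the cases of interest, as the two worked examples show''. This is not routine bookkeeping: the definition \eqref{eq:scaling_limit_def} only guarantees $S(u)\neq 0$ for \emph{some} test function $u$, not for $u=\chi*\hat\chi$, and if $S(\chi*\hat\chi)$ vanished one would have no lower bound on $|\eta_j(\lambda)|$ whatsoever, so the hypothesis $\lambda^{-q}\eta_j(\lambda)^{-1}=o(1)$ could not be verified and the clause would remain unproved. The paper closes this by a structural argument rather than by inspection of examples: since $\sigma_j(\opd(s'))$ is a boundary value of a function analytic in the lower half-plane (energy-bounded approximation of $\sigma_j$ plus the spectrum condition), its Fourier transform, and hence that of $S$, is supported in $[0,\infty)$; by the classification of homogeneous distributions of one variable this forces $S=C\big(\imath(\cdot-\imath 0)\big)^\beta$ with $C\neq 0$, and $C$ is real by hermiticity (cf.\ the proof of Prop.~\ref{pro:oneDTerm}); finally, for $\beta<0$ the Fourier transform of $(\imath(\cdot-\imath 0))^\beta$ is a positive multiple of $p^{-\beta-1}\theta(p)\,dp$, while that of $\chi*\hat\chi$ is $|\tilde\chi|^2\ge 0$ with $\tilde\chi(0)=\int\chi>0$, so $S(\chi*\hat\chi)$ is a strictly nonzero multiple of $C$. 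Two smaller inaccuracies in the same part: first, $\eta_j(\lambda)\sim\lambda^\beta S(\chi*\hat\chi)$ overstates what the scaling limit gives, since $N(\lambda)^{-1}$ is only regularly varying of index $\beta$ and may carry slowly varying (e.g.\ logarithmic) factors; thus $\lambda^{-q}\eta_j(\lambda)^{-1}$ need not be $O(\lambda^{-q-\beta})$, and the conclusion survives only because $-q-\beta>0$ strictly --- the paper's route through the ratio limit \eqref{eq:beta_def} handles this cleanly. Second, a degree-$\beta$ scaling limit does not ``force'' the germ order to equal $\lceil-1-\beta\rceil$; it only forces $q\ge\lceil-1-\beta\rceil$, and equality is an additional hypothesis of the proposition.
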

Here, the {\em order of the germ} of $\sigma(\opd(s'))$ at $s'=0$ is the
minimal $q\in\nbb_0$ for which there are $\lambda_0>0$ and $C>0$ such that
$|\int ds'\,\sigma(\opd(s'))u(s')|\le C\sum_{r=0}^q \sup |u^{(r)}|$ for
all $u\in\dcal(-\lambda_0,\lambda_0)$. The notion of {\em scaling limit} is taken from
\cite{FreHaa:covariant_scaling}: namely, the scaling limit exists if 
there exists a monotone positive function $N(\lambda)$ for which
\begin{equation}\label{eq:scaling_limit_def}
N(\lambda)\int ds'\,\sigma_j(\opd(s')) u_\lambda(s') \to S(u)
\end{equation}
for all $u\in\dcal(\rbb)$, with a nonzero limit for at least one $u$.
Under these circumstances, $S$ is a homogeneous distribution,
i.e., $S(u_\lambda)=\lambda^\beta S(u)$, with degree $\beta\in\rbb$
determined by
\begin{equation}\label{eq:beta_def}
\lim_{\lambda'\to 0} \frac{N(\lambda')}{N(\lambda\lambda')} = \lambda^{\beta}.
\end{equation}
(Our definition of the degree coincides with that of \cite[Ch.~I
Sec.~1.6.]{GelShi:genfun1} and differs from  
\cite{FreHaa:covariant_scaling}.) If $\beta<0$, for example, the
distribution $(s'-i0)^{\beta}(\log s'-i0)^\gamma$ has a
scaling limit of degree $\beta$ and (germ) order $\lceil-1-\beta\rceil$, and
therefore meets the criteria stated. 
  
\begin{proofwithremark}{of Prop.~\ref{pro:Fjl_leading}}
We choose
$\lambda_0\in(0,1]$ sufficiently small that $\sigma(\opd(s'))$ has order $q$ on
$(-2\lambda_0,2\lambda_0)$, and assume henceforth that
$0<\lambda<\lambda_0$. 
As in the second example above, we integrate $F_{j,\lambda}$ 
against $u\in\ecal(\rbb)$ and approximate $u(s)$ by $u(\lambda k)$
in the $k$'th summand, to obtain
\begin{align}\notag
\int ds\, F_{j,\lambda}(s)u(s) &= 
\int ds'\, \sigma_j(\opd(s')) \sum_{k\in\zbb} \lambda f(\lambda k)\int
ds\, \chi_\lambda\diamond\chi_\lambda(s,s')u(s+\lambda k) \\ 
&= \eta_j(\lambda) \sum_{k\in\zbb} \lambda f(\lambda k) u(\lambda k)  +
R_{j,\lambda},
\end{align}
where
\begin{equation}
R_{j,\lambda} = \int ds'\, \sigma_j(\opd(s')) 
\sum_{k\in\zbb} \lambda f(\lambda k)\int
ds\, \chi_\lambda\diamond\chi_\lambda(s,s') [u(s+\lambda k)-u(\lambda
k)] .
\end{equation}
Now $R_{j,\lambda}$ is, at worst, of order $O(\lambda^{-q})$ as $\lambda\to 0$, as is easily
seen using the estimate
\begin{equation}
\sup_s \big|\int ds'\, \sigma_j(\opd(s')) \chi_\lambda\diamond\chi_\lambda(s,s')\big|
\le \frac{C}{\lambda^{q+2}};
\end{equation}
and the facts that 
(i) the sum contains at most $O(\lambda^{-1})$ nonzero terms; (ii) the $s$-integral
extends over the region $[-\lambda,\lambda]$. This establishes
\begin{equation}\label{eq:eta_leading}
\int ds\,F_{j,\lambda}(s)u(s) =\eta_j(\lambda)\left(\int ds\,f(s)u(s) +
O(\lambda) \right) + O(\lambda^{-q})
\end{equation}
as $\lambda\to 0$, from which \eqref{eq:F_cvgnce}
follows immediately.

Now suppose that $\sigma_j(\opd(s'))$ has a
scaling limit of degree $\beta<0$. 
It is easy to see that \eqref{eq:scaling_limit_def} implies $N(\lambda)\eta_j(\lambda)\to
S(\chi*\hat{\chi})$. The spectrum condition
entails that $S= C(\imath(\cdot-\imath 0))^{\beta}$, where the nonzero constant
$C$ is real owing to hermiticity (cf.\ the proof of 
Prop.~\ref{pro:oneDTerm} below). As $\beta<0$, we may verify directly that
$S(\chi*\hat{\chi})\not=0$, that $N(\lambda)$ is necessarily monotone decreasing and
vanishing as $\lambda\to 0$. Thus $\eta_j(\lambda)\to\pm \infty$ depending
on the sign of $C$. Moreover, Eqs.~\eqref{eq:scaling_limit_def}
and~\eqref{eq:beta_def} entail
\begin{equation}
\lim_{\lambda'\to
0}\frac{(\lambda')^q\eta_j(\lambda')}{(\lambda\lambda')^q\eta_j(\lambda\lambda')}
= \lambda^{-\beta-q}.
\end{equation}
By hypothesis, $\sigma_j(\opd(s'))$ has order $q=\lceil-1-\beta\rceil$
(as does $S$). Thus 
$-\beta-q>0$ and we deduce that $\lambda^{-q}\eta_j(\lambda)^{-1}\to 0$ as
$\lambda\to 0$. 
\cmpqed\end{proofwithremark}

The significance of this result becomes clear in the situation where one of the composite fields, say
$\phi_1$, is identified as a field of particular interest, e.g., the
normal product. By hermiticity of the projection $p$, $\eta_1$ is
real-valued; the hypothesis of Prop.~\ref{pro:Fjl_leading} requires that
$|\eta_1(\lambda)|\to\infty$ as $\lambda\to 0$. If, in fact, $\eta_1(\lambda)\to+\infty$,
we may divide the quantum inequality \eqref{eq:mesoqi} by this
factor to obtain a bound
\begin{equation}
\phi_1(F_{1,\lambda}/\eta_1(\lambda)) + \frac{1}{\eta_1(\lambda)}\sum_{j=2}^n \phi_j(F_{j,\lambda}) \geq -
\frac{2\epsilon(\lambda)}{\eta_1(\lambda)} ( \sobnorm{\chi}{1 }{m})^2
\sobnorm{f}{d}{1} R^{-2\ell}
\end{equation}
for $\lambda<d$. (If $\eta_1(\lambda)\to-\infty$ we simply reverse the sign of $\phi_1$
and hence $\sigma_1(\opd(s'))$ and 
$\eta_1$ to obtain the same result; the possibility that $\eta_1$
oscillates in sign as $\lambda\to 0$ can be excluded if the scaling limit exists.)
In this form, it is clear that the remainder term may be diminished by reducing
$\lambda$, at the possible cost of increasing the magnitude of the terms in
composite fields with $j\ge 2$ (if $\eta_j(\lambda)$ grows more rapidly
than $\eta_1(\lambda)$).
Moreover, the expectation value of the
first term tends to that of $\phi_1(f)$ as $\lambda\to 0$ for any state
in $\cinftys$.

Further progress is only possible with more detailed information
regarding the (germs of the) OPE coefficient distributions. Nonetheless, we expect that
the results presented here will be of use in the context of particular models.

\section{Scaling limits and dilation covariance} \label{sec:scaling}

For a concrete interpretation of our quantum inequalities, it is of
particular interest to investigate the detail structure of the sampling
functions with which the composite fields are smeared, e.g.~the functions $f_j$
in Eq.~\eqref{eqn:fjDef}. For example, one is interested whether they are
pointwise positive, or at least ``mostly positive'' in a
well-defined sense. Of course, these properties depend crucially on the
structure of the OPE coefficients involved, about which little is known in the
general case. The most reasonable approach therefore seems to investigate those
properties under more restrictions on the theory.

In the preceding section, our approach was to approximate a given sampling
function with a Riemann sum; this relied on some assumptions on the
behavior of the OPE coefficients in the small, and was tied to a choice of
basis in the field spaces. In the following, we want to take a different
approach: We investigate the structure of sampling functions in a
restricted class of quantum field theories, 
namely in the presence of dilation symmetries. 
While for a realistic description of microphysics, one would not consider dilation covariant quantum field theories, this case is still important as an idealization at short scales. Namely, in the short-distance regime, quantum field theories should be approximated by a
\emph{scaling limit theory}, which indeed possesses a dilation symmetry.

Let us briefly sketch how the scaling limit of quantum field theories fits into
our context. It has been shown by Buchholz and Verch
\cite{BucVer:scaling_algebras} that scaling limits can be formulated very
naturally on the level of local algebras. Every quantum field theory possesses
a scaling limit in this sense, although it might not be unique. The limit
theory is, under a suitable choice of limit states, covariant under a strongly
continuous unitary representation of the dilation group
\cite{BDM:field_renorm}. However, the structure of these dilation unitaries may
be very intricate, acting on a nonseparable Hilbert space. (See also
\cite{BDM:dilations}.)

In \cite{BDM:field_renorm}, it was shown that this picture is compatible with
the usual notion of field renormalization: If the original algebraic theory fulfils
a slightly sharpened version of Def.~\ref{def:microPhase}, then the limit
theory fulfils Def.~\ref{def:microPhase} too; and pointlike fields in the
original theory converge, under a multiplicative renormalization scheme, to
pointlike fields in the limit theory. In a certain sense, the projectors
$p_\gamma$ onto $\Phi_\gamma$ converge to corresponding projectors
$p_\gamma^{(0)}$ in the limit theory. Also, this scheme is compatible with
products of pointlike fields and operator product expansions. Thus one can
expect that the structures exhibited in Sec.~\ref{sec:ineq} properly
converge in the scaling limit, and yield quantum inequalities in the limit
theory.

Our aim here is neither to describe this passage to the limit theory in detail,
nor to treat all possible cases of dilation group representations that may
appear in the limit. Rather, we take the above as a motivation to investigate
quantum inequalities in dilation covariant theories, and to show in certain
simple cases that stricter classification results on the form of quantum
inequalities can be achieved.

In the remainder of this section, we will therefore assume that our theory
$\afk$ has a dilation symmetry; i.e., that there exists a strongly continuous
unitary representation $\lambda \mapsto U(\lambda)$ of the dilation group on $\hcal$,
which is compatible with the Poincaré group representation, and acts on the
local algebras in the usual geometric way. The adjoint action of $U(\lambda)$
can then be extended to $\cinftyss$, where we write $\delta_\lambda \phi =
U(\lambda) \phi U(\lambda)\st$ in the weak sense. The spaces $\Phi_\gamma$ are
invariant under $\delta_\lambda$ \cite[Sec~IV]{Bos:short_distance_structure}. We shall
now consider the action of $\delta_\lambda$ on the structures considered so
far, and introduce some definitions for convenience.

\begin{definition}\label{def:dilCovar}
A quadratic form $\phi \in \cinftyss$ is called \emph{dilation
covariant} if, with some $\beta \in \rbb$,
\begin{equation*}
  \delta_\lambda \phi = \lambda^\beta \phi \quad \text{for all }\lambda>0.
\end{equation*}
A product $\Pi \in \PhiProd$ is called dilation
covariant if, with some $\beta \in \rbb$,
\begin{equation*}
  \delta_\lambda \opd(s) = \lambda^\beta \opd(\lambda s) 
   \quad \text{for all }\lambda>0, \text{in the sense of distributions.}
\end{equation*}
A projector $p$ in $\cinftyss$ is called dilation
covariant if
\begin{equation*}
  \delta_{1/\lambda} \circ p \circ \delta_\lambda \restrict \afk(\ocal_1) 
  = p \restrict \afk(\ocal_1) 
 \quad \text{for all } 0< \lambda \leq 1,
\end{equation*}
where $\ocal_1$ is the standard double cone of radius 1.
\end{definition}

Note that the restriction to
$\afk(\ocal_1)$ in the definition of dilation covariant projectors is
unavoidable if we want $p$ to be norm-bounded on $\boundedops$. Namely,
suppose that $\delta_{1/\lambda} \circ p \circ \delta_\lambda(A) = p(A)$ for
all $A \in \boundedops$ and $0 < \lambda \leq 1$, and hence for all $\lambda$
by the group relation. Since $\delta_\lambda$ acts as a norm
isomorphism on $\boundedops$, norm-boundedness of $p$ would lead to
$\delta_{\lambda}$ being uniformly bounded on the finite dimensional space $\img p$, 
both for $\lambda \to 0$ and for $\lambda \to \infty$, 
which would exclude that $\img p$ contains fields with nonzero scaling
dimension.

Dilation covariant products can easily be constructed, e.g. by choosing
dilation covariant fields $\phi_1$, $\phi_2$, and setting
$\Pi = (\imath z)^{-\beta'} \otimes \phi_1 \otimes \phi_2$ with some
$\beta'\geq 0$. If $\Pi$ and $p$ are both dilation covariant, Def.~\ref{def:dilCovar} implies
that
\begin{equation} \label{eqn:projProductCovar}
   \delta_{\lambda} \,p \,\opd(s') = \lambda^\beta \,p\, \opd(\lambda s')
   \quad
   \text{for } 0< \lambda \leq 1\,
   \text{ and for } s'\in[-1,1];
\end{equation}
that is, the equation holds when evaluated on test functions with support in
$[-1,1]$. This follows by approximating $\opd$ with sequences of bounded local
operators, as in the proof of Thm.~\ref{thm:ope}.

We will now consider the form of quantum inequalities in our case, that is,
investigate the structure of minimal approximating projectors $p$ and their
subprojectors. We shall restrict here to the simplest case, where one deals
with one-dimensional subrepresentations of $\delta_\lambda$. In this case, we
can find a full classification of our quantum inequality terms.

\begin{proposition} \label{pro:oneDTerm}
Let $\Pi \in \PhiPos$ be dilation covariant. Let $p$ be a
\emph{one-dimensional} dilation covariant projector in $\cinftyss$. Then, there
exist a dilation covariant field $\phi \in \PhiFH$ and $\beta \in \rbb$ such
that
\begin{equation*}
  p \opd (s') = (\imath(s'-\imath0))^{\beta} \phi
  \quad \text{on the interval } (-1,1).
\end{equation*}
\end{proposition}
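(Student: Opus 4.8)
The plan is to exploit the one-dimensionality of $p$ together with the covariance relation \eqref{eqn:projProductCovar}, reducing everything to a statement about a single scalar distribution, and then to classify that distribution using analyticity.

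First I would fix a generator of the image. Since $p$ is a hermitean projector with one-dimensional image contained in $\PhiFH$, I can choose a hermitean field $\phi\in\PhiFH$ spanning $\img p$ and a functional $\sigma\in\cinftys$ with $\sigma(\phi)=1$, so that $p=\sigma(\cdotarg)\phi$ and hence $p\opd(s')=C(s')\,\phi$, where $C(s'):=\sigma(\opd(s'))$ is an ordinary scalar distribution. Two structural facts about $C$ will be used. By the integral formula of Prop.~\ref{pro:productExist}, $C$ is a finite sum of products $K_j(s'-\imath 0)$ times field two-point functions, each analytic in the lower half-plane; thus $C$ is the boundary value of a function $G$ analytic on $\{\im z<0\}$ and polynomially bounded near the real axis. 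Second, since $\opd$ is skew-hermitean and $\phi$ is hermitean, $\overline{C(s')}=C(-s')$.

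Next I would extract the homogeneity. Evaluating \eqref{eqn:projProductCovar} on $p\opd(s')=C(s')\phi$ and writing $\delta_\lambda\opd(s)=\lambda^{\beta_0}\opd(\lambda s)$ for the dilation weight of $\Pi$, I obtain $C(s')\,\delta_\lambda\phi=\lambda^{\beta_0}C(\lambda s')\,\phi$ on $(-1,1)$ for $0<\lambda\le 1$. If $C$ vanishes on $(-1,1)$ the claim holds trivially with the zero field, so assume $C\neq 0$ there; testing against a fixed $u\in\dcal(-1,1)$ with $\langle C,u\rangle\neq 0$ shows $\delta_\lambda\phi\in\lspan\phi$, i.e.\ $\phi$ is dilation covariant, $\delta_\lambda\phi=a(\lambda)\phi$. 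Hermiticity of $\phi$ forces $a(\lambda)\in\rbb$, so by continuity and multiplicativity $a(\lambda)=\lambda^{d}$ with $d\in\rbb$. Substituting back gives $C(\lambda s')=\lambda^{\,d-\beta_0}C(s')$ on $(-1,1)$, i.e.\ $C$ is homogeneous of degree $\beta:=d-\beta_0\in\rbb$ there.

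The decisive step is to upgrade homogeneity of the boundary value on the interval to homogeneity of the analytic function $G$. For $0<\lambda\le 1$ the boundary value of $D_\lambda(z):=G(\lambda z)-\lambda^\beta G(z)$ vanishes on $(-1,1)$; since $D_\lambda$ is analytic in the lower half-plane with a tempered boundary value vanishing on an open interval, a distributional Schwarz-reflection (unique-continuation) argument forces $D_\lambda\equiv 0$. Hence $G(\lambda z)=\lambda^\beta G(z)$ throughout $\{\im z<0\}$, and a homogeneous function analytic there is necessarily $G(z)=c\,(\imath z)^\beta$ (principal branch, single-valued since $\imath z$ lies in the right half-plane), with $c=G(-\imath)$. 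Taking boundary values yields $C(s')=c\,(\imath(s'-\imath 0))^\beta$, and the relation $\overline{C(s')}=C(-s')$ (which $(\imath(s'-\imath 0))^\beta$ already satisfies) forces $c\in\rbb$. Absorbing the real constant $c$ into $\phi$ gives a hermitean dilation covariant field with $p\opd(s')=(\imath(s'-\imath 0))^\beta\phi$ on $(-1,1)$, as claimed.

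The main obstacle is precisely this last propagation step: homogeneity is assumed only on $(-1,1)$, whereas the clean classification $G(z)=c(\imath z)^\beta$ needs homogeneity of $G$ on the whole lower half-plane. The reflection argument is what bridges the gap, and it must be carried out for distributional rather than continuous boundary values; I would appeal to the boundary-value theory already invoked in Sec.~\ref{sec:distributions} (cf.\ \cite{ReeSim:mmmp2}) to justify that an analytic function whose distributional boundary value vanishes on an open real interval extends holomorphically across that interval, and then conclude by the identity theorem.
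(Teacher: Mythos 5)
Your proof is correct and, in its decisive step, genuinely different from the paper's. Up to the homogeneity of $C(s')=\sigma(\opd(s'))$ on $(-1,1)$ the two arguments run parallel with minor variations: you extract $\delta_\lambda\phi\in\lspan\phi$ from the covariance relation \eqref{eqn:projProductCovar} for the product (treating $C=0$ trivially), whereas the paper derives it from dilation covariance of $p$ alone, choosing $A\in\afk(\ocal_1)$ with $\sigma(A)=1$ via Theorem~\ref{thm:fieldConverge}; and your reality argument for the exponent (take adjoints of $\delta_\lambda\phi=a(\lambda)\phi$ with $\phi$ hermitean) is a cleaner version of the paper's splitting argument. (Both proofs tacitly use hermiticity of $p$ -- without it neither the paper's reality claim nor your choice of hermitean generator goes through -- and you make this standing convention of Sec.~\ref{sec:minapprox} explicit.) The real divergence is in the classification. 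The paper extends $C\restrict(-1,1)$ by homogeneity to a homogeneous distribution on all of $\rbb$, quotes the classification $c_+(s'+\imath 0)^\beta+c_-(s'-\imath 0)^\beta$ from Gelfand--Shilov, and eliminates $c_+$ by a wave-front-set argument; you instead push the homogeneity into the analytic continuation $G$ itself by the reflection/unique-continuation argument, and then classify homogeneous analytic functions on the half-plane elementarily ($H=G/(\imath z)^\beta$ satisfies $zH'(z)=0$, hence is constant). Your route is more self-contained (no classification of homogeneous distributions, no wave front sets) and treats all $\beta$ uniformly; in fact the classification as quoted in the paper is incomplete for $\beta\in\nbb_0$, where $(s'\pm\imath 0)^\beta$ coincide and fail to span the homogeneous distributions of that degree (e.g.\ $\theta(s')\,s'^\beta$ is missed), a case your argument covers automatically.

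The one place where your write-up is too quick is the assertion that $C$ itself is a boundary value from the lower half-plane ``by the integral formula of Prop.~\ref{pro:productExist}'': that formula is established only for energy-bounded functionals, and the $\sigma$ representing $p$ need not be energy-bounded -- for such $\sigma$ the analytic continuation written there does not even make sense, because of the exponentially growing outer factors $e^{s''H/2}$. The repair is the limiting argument the paper itself gestures at with its approximation by $\sigma_E$: each $\sigma_E(\opd(\cdotarg))$ has Fourier transform supported in $[0,\infty)$, this support property survives the limit $\sigma_E\to\sigma$ given the uniform temperedness bounds supplied by Prop.~\ref{pro:productExist}, and Paley--Wiener then produces your $G\in\kcal$. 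Since the paper's wave-front-set step rests on exactly the same limiting statement, this is a shared, patchable gloss rather than a defect peculiar to your approach.
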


\begin{proof}
We choose $\phi \in \PhiFH$ and $\sigma \in \cinftys$ such that
$p=\sigma(\cdotarg)\phi$. Since $\sigma(\phi)=1$, and since $\phi$ can be
approximated by bounded operators as in Thm.~\ref{thm:fieldConverge}, we can
find $A \in \afk(\ocal_1)$ such that $\sigma(A)=1$. Using that $p$ is dilation
covariant, we obtain
\begin{equation}
  \sigma(\delta_\lambda A) \,\delta_{1/\lambda} \phi = \sigma(A)\phi = \phi
  \quad \text{for all } 0 < \lambda \leq 1,
\end{equation}
and thus
\begin{equation}
  \delta_{\lambda} \phi = \sigma(\delta_\lambda A)\phi = c(\lambda) \phi
  \quad \text{for all } 0 < \lambda \leq 1.
\end{equation}
Here the $\cbb$-valued function $c(\lambda)$ is continuous in $\lambda$ and
fulfils $c(1)=1$, $c(\lambda)c(\lambda')=c(\lambda\lambda')$ if
$\lambda,\lambda' \in (0,1]$. This suffices to conclude that there exists a $\beta_1\in\cbb$ such
that
\begin{equation}
  c(\lambda) = \lambda^{\beta_1} \quad \text{for all } 0<\lambda\leq 1.
\end{equation}
Due to the group relation, we then obtain for \emph{all} $\lambda \in \rbb_+$,
\begin{equation}
   \delta_\lambda \phi = \lambda^{\beta_1} \phi.
\end{equation}
Splitting $\phi = \phi_R + \imath \phi_I$ into real and imaginary parts, we
note that $\delta_\lambda$ preserves this splitting, which means that $\beta_1$
must be real. So $\phi$ is dilation covariant. Inserting into
Eq.~\eqref{eqn:projProductCovar}, we arrive at
\begin{equation}
  \sigma(\opd(s')) = \lambda^{\beta_2-\beta_1} \sigma(\opd(\lambda s'))
  \quad 
  \text{in the sense of }\dcal(-1,1)',
\end{equation} 
where $\beta_2\in\rbb$ is the exponent relating to $\Pi$.
Using the right-hand side as a definition for $|s'|>1$, we can construct a
homogeneous distribution\footnote{%
As mentioned in Sec.~\ref{sec:mesoscopic}, alternative conditions that
force a distribution in the scaling limit to be homogeneous are
discussed in \cite{FreHaa:covariant_scaling}.
}
$D \in \dcal(\rbb)$ of degree
$\beta:=\beta_1-\beta_2$ such that
\begin{equation} \label{eqn:DEqualsSigma}
  D(s') = \sigma(\opd(s'))
  \quad 
  \text{in the sense of }\dcal(-1,1)'.
\end{equation}
The homogeneous distributions of one variable are however fully
classified (cf.~\cite[Ch.~I Sec.~3.11.]{GelShi:genfun1}): They are of the form
\begin{equation}
 D(s') = c_+ (s'+\imath 0)^\beta + c_- (s'-\imath 0)^\beta \quad \text{with }
 c_\pm\in\cbb. 
\end{equation} 
We can further restrict the possible form of $D$. Since $\sigma$ can be
approximated by energy-bounded functionals $\sigma_E$, and $\sigma_E(\opd(s'))$
has an analytic continuation to the lower half-plane, the only singular
direction (in the sense of wave front sets) of $\sigma(\opd(s'))$ at 0 can be
the positive half-line. Since the wave front set is determined locally,
Eq.~\eqref{eqn:DEqualsSigma} entails that $c_+=0$. Absorbing a factor $
\imath^{-\beta} c_-$ into the field $\phi$, we finally obtain
\begin{equation}
  p \, \opd(s') = \big(\imath (s'-\imath 0)\big)^\beta \phi \quad 
  \text{on the interval }(-1,1),
\end{equation}
as proposed.
\cmpqed\end{proof}

Now in the above situation, we can easily describe the quantum inequality terms
that arise. One finds for any $g \in \dcal(-1,1)$,
\begin{equation} \label{eqn:oneDTerms}
   \convolint{\bar g,g}{p\opd} = \phi(f) \quad
   \text{with}\;
   f (s) = \int ds' \,(\imath(s'-\imath 0) )^\beta\, \bar g \syp
   g(s,s') \;.
\end{equation}
This expression would not represent the entire quantum inequality, as
approximating projectors will typically not be one-dimensional. Rather,
\eqref{eqn:oneDTerms} would represent \emph{one} of the summands of the
inequality in Eq.~\eqref{eqn:basisQi}. In typical cases, one may expect that
there exists a distinguished highest-order term in the operator product
expansion, which corresponds to the ``normal product'' part of $\Pi$, and
which is described by a one-dimensional dilation covariant projector as above.

Note that Prop.~\ref{pro:oneDTerm} determines the field $\phi$ uniquely.
In particular, for $\beta \leq 0$, requiring the distributional
factor to be of positive type fixes the phase factor of $\phi$. 
While other conditions might be used to restrict this phase factor, 
such as demanding that $\phi$ be hermitean, 
the quantum inequalities give a stronger restriction that even fixes a $\pm$ sign in $\phi$. 
In this sense, our quantum inequalities can be used to 
distinguish the normal square of a field from its negative; squares of fields
retain certain aspects of positivity in the quantum case.

Let us further investigate the structure of the smearing function $f$ obtained
in Eq.~\eqref{eqn:oneDTerms}. We assume for a moment that $g$ is
real-valued, and thus $\bar g \syp g(s,s')$ is symmetric in $s'$. 
By a standard computation \cite[Ch.~I §3 Nr.~8]{GelShi:genfun1}, one obtains
the following simplified expressions in terms
of convergent integrals:
\begin{align} \label{eqn:fInt_BPositive}
 f(s) &= 2 \cos\frac{\beta\pi}{2} \int_{0}^\infty ds'\,
 (s')^{\beta} g\syp g(s,s') \quad &\text{for } \beta>-1,
\\ \notag
 f(s) &= 2  \cos\frac{\beta\pi}{2}  
 \int_{0}^\infty
 ds'\, (s')^{\beta} \Big( g\syp g(s,s')
 \\ \label{eqn:fInt_BNegative}
 &\quad  - \sum_{k=0}^{[(-\beta-1)/2]}  
 \frac{1}{(2k)!} \frac{\partial^{2k}g \syp
  g}{(\partial s')^{2k}}  \big|_{s'=0} s'^{2k} \Big) &\text{for } \beta<-1,\;
  |\beta| \not\in 2 \nbb+ 1,
\\ \label{eqn:fInt_BEvenInt}
 f(s) &=  \frac{(-1)^{k} \pi}{(2k)!} \; \frac{\partial^{2k}g \syp
  g}{(\partial s')^{2k}}  \big|_{s'=0} &\text{ for } \beta = -2k-1, \; k \in
  \nbb_0.
\end{align}

Using these explicit characterizations, we can directly investigate the
positivity properties of the function $f$. For reasons of simple
interpretation, it would be convenient if the $f(s)$ are positive at each
$s$. We can give some sufficient conditions to this end.
\begin{proposition} \label{pro:fPosDilation}
  Let $g \in \dcal(\rbb)$, $\beta \in \rbb$, and $f$ be
  given as in Eq.~\eqref{eqn:oneDTerms}. If any of the following conditions is
  fulfilled, it follows that $f(s) \geq 0$ for all $s \in \rbb$.
  \begin{enumerate}
    \localitemlabels
    \item \label{itm:betaPos} 
      $-1 < \beta \leq 1$, and $g(t) \geq 0$ for all $t \in
    \rbb$.
    \item \label{itm:betaZero}
      $\beta = -1$, and $g$ is real-valued.
    \item \label{itm:betaABitNeg} 
        $-3 < \beta < -1$, $\supp g$ is a connected
    interval $I$, and $g$ is logarithmically concave within~$I$.
   \end{enumerate}
\end{proposition}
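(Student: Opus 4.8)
The plan is to reduce each assertion to the explicit convergent integral representations \eqref{eqn:fInt_BPositive}--\eqref{eqn:fInt_BEvenInt}, which are available here because $g$ is real-valued in all three cases (in \eqref{itm:betaPos}, nonnegativity forces $g$ to be real). Once $f(s)$ is written in one of these forms, each claim reduces to reading off the sign of the scalar prefactor $\cos\tfrac{\beta\pi}{2}$ together with the sign of the integrand, so no analytic input beyond the formulas already established is needed.

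For \eqref{itm:betaPos} I would invoke \eqref{eqn:fInt_BPositive}, which is valid since $\beta>-1$. On the range $-1<\beta\le1$ we have $\tfrac{\beta\pi}{2}\in(-\tfrac\pi2,\tfrac\pi2]$, so $\cos\tfrac{\beta\pi}{2}\ge0$ (it vanishes only at $\beta=1$, where $f\equiv0$). The integrand $(s')^\beta\,g\syp g(s,s')=(s')^\beta g(s+s'/2)g(s-s'/2)$ is nonnegative for $s'>0$ because $g\ge0$, whence $f(s)\ge0$. Case \eqref{itm:betaZero} is immediate: $\beta=-1$ is the value $-2k-1$ with $k=0$, so \eqref{eqn:fInt_BEvenInt} gives $f(s)=\pi\,g\syp g(s,0)=\pi\,g(s)^2\ge0$, recovering the free-field form up to the normalisation constant.

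The substantive case is \eqref{itm:betaABitNeg}. For $-3<\beta<-1$ one has $|\beta|\notin2\nbb+1$, so \eqref{eqn:fInt_BNegative} applies; moreover $(-\beta-1)/2\in(0,1)$, so the subtraction sum truncates to its single $k=0$ term $g\syp g(s,0)=g(s)^2$. Thus $f(s)=2\cos\tfrac{\beta\pi}{2}\int_0^\infty ds'\,(s')^\beta\big(g(s+s'/2)g(s-s'/2)-g(s)^2\big)$. Writing $\beta=-2+t$ with $t\in(-1,1)$ shows $\cos\tfrac{\beta\pi}{2}=-\cos\tfrac{t\pi}{2}<0$, so it remains to check that the bracketed integrand is pointwise nonpositive. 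This is precisely midpoint logarithmic concavity: with $a=s+s'/2$, $b=s-s'/2$ and midpoint $s$, concavity of $\log g$ on $I$ yields $g(s)^2\ge g(a)g(b)$. I expect the main (if mild) obstacle to be verifying this inequality for \emph{every} $s\in\rbb$ rather than only for $s$ interior to $I$: when $s\notin I$ one has $g(s)=0$, and because $\supp g$ lies in the single interval $I$, the point $s$ sits to one side of $I$, forcing one of $a,b$ outside $I$ and hence $g(a)g(b)=0$ as well, so the inequality degenerates to $0\le0$. Multiplying the resulting nonpositive integral by the negative prefactor gives $f(s)\ge0$ and completes the argument.
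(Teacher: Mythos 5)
Your proposal is correct and follows essentially the same route as the paper's own proof: reduce each case to the explicit representations \eqref{eqn:fInt_BPositive}--\eqref{eqn:fInt_BEvenInt}, read off the sign of $\cos\frac{\beta\pi}{2}$, and in case \ref{itm:betaABitNeg} invoke the midpoint log-concavity inequality $g(s)^2 \geq g(s+s'/2)g(s-s'/2)$. The only difference is that you spell out details the paper leaves implicit (the degenerate value $\beta=1$, and the verification that the inequality also holds trivially when $s$ or one of the shifted points lies outside $I$), which is a harmless refinement rather than a new argument.
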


\begin{proof} 
The case \ref{itm:betaPos} follows immediately from
Eq.~\eqref{eqn:fInt_BPositive}. In case~\ref{itm:betaZero}, we
obtain $f(s)=\pi g(s)^2$ from Eq.~\eqref{eqn:fInt_BEvenInt}, which
yields the result. For~\ref{itm:betaABitNeg}, observe that in this
case Eq.~\eqref{eqn:fInt_BNegative} reads
\begin{equation}
 f(s) = 2 \,\big| \cos\frac{\beta\pi}{2} \,\big|  
 \int_{0}^\infty
 ds'\, (s')^{\beta} \Big( g(s)^2  - g(s+s'/2)g(s-s'/2) \Big).
\end{equation}
Now the concavity of $t \mapsto \log g(t)$ precisely implies that $g(s)^2  \geq
g(s+s'/2)g(s-s'/2)$ for any $s$ and $s'$.
\cmpqed\end{proof}

The case $\beta=-1$ corresponds to the leading order of the OPE in the Wick
square of a massless free field theory, as discussed in Sec.~\ref{sec:overview}.
Our main interest is therefore in the case where $\beta$ is near $-1$,
which might be expected in asymptotically free theories. 
This realm is covered in the above proposition.
In models, it might be possible to exploit the choice of positive-type kernels
$K_j$ in the definition of $\Pi$ in order to arrive at precisely the case
$\beta=-1$, so that the function $f=g^2$ has a simple interpretation. We do
however not investigate this possibility in detail here.

In more generality, for any $\beta \leq 0$, we can at least state the following
more qualitative result: Since $(\imath (s'-\imath 0))^{\beta}$ is of positive
type, one finds 
\begin{equation}
   \int ds \, f(s) \geq 0,
\end{equation}
so $f$ has at least a non-negative average, regardless of the choice of
$g$. A bit more generally, one can deduce G{\aa}rding
inequalities for $f$, similar to those familiar from quantum mechanics \cite{EFV:qm_inequalities}: For suitable
test functions $\chi$, one has
\begin{equation}
  \int \chi(s) f(s) ds \geq -c_\chi (\|g\|_2)^2.
\end{equation}
Thus positivity of the test function $f$ is preserved at least in a generalized
sense.

\section{Conclusions and Outlook} \label{sec:outlook}

We have shown that quantum field theories obeying the microscopic phase space condition of
\cite{Bos:short_distance_structure} admit a large class of nontrivial quantum
inequalities: to every classically positive expression, i.e., a sum of
absolute squares, we find a combination of composite fields that is
positive up to an error obeying defined estimates and vanishing in the
short distance limit. The composite fields appearing in such QIs are
smeared with test functions derived from OPE coefficients as well as a
choice of test function $g$. In the free field case, these smearing
functions bore a simple relationship to $g$, at least for the normal
product; here, the relationship is less direct, although we have 
succeeded in classifying their structure under simplifying assumptions
within dilation covariant theories. Our inequalities are primarily valid in
the short-distance limit, when the support of the test functions shrinks to a
point. However, we also discussed how to obtain inequalities for smearing
functions with extended (mesoscopic) support, in which the remainder term
can be reduced at the expense of increasing the contributions from other
composite fields.

To conclude we mention a number of open questions and avenues for
further investigation. First, more progress can be made in understanding
the sampling functions arising. For example, in the dilation covariant
setting, one could also allow general finite-dimensional irreducible
representations of the dilation group. Second, it would probably not be hard to
generalise our bounds from smearing along a fixed timelike inertial curve to
smearing along arbitrary smooth timelike curves in Minkowski space. The
structure of inequalities is not expected to change significantly under
this generalization. Third, one would also like to
establish OPE-based quantum inequalities in curved spacetime. Here, the
situation is complicated by the lack of a global Hamiltonian to specify
scales of spaces of states and fields. A replacement for the topologies
thus induced might be found in the detailed microlocal structure of
$n$-point functions, for example, using wave-front sets modulo Sobolev
regularity (see, e.g., \cite{JunSch:adiabatic}). 
An alternative approach would be to use the stress-energy
tensor as the basis for estimates of high-energy behaviour. Hollands has
recently established an OPE on curved spacetime for perturbatively constructed
theories \cite{Hol:ope_curved}; however, the generalization of the nonperturbative methods used here
presently remains a challenging problem. 

Fourth, it would be desirable to obtain results that directly constrain
the energy density of a quantum field theory, returning to the original
motivation for quantum inequalities. One may
heuristically expect from perturbation theory that the energy density in
purely bosonic theories does
arise from such a sum of squares (although a generalization would be
needed to cater for theories with fermionic fields) and would therefore
be amenable to our approach. However, more direct connections to
the energy density are unknown at present; in fact, the very concept of energy
density is not well established in a nonperturbative context in
purely Minkowski space quantum field theory. More generally, no general nonperturbative
version of the Noether theorem has been found to date. 
In the Wightman framework, only very few results about
pointlike Noether currents are available \cite{Orz:charges,Lop:symmetry}, in
particular an existence proof is missing. In the algebraic framework, partial
results have been achieved \cite{BDL:noether_thm}
on the base of the so-called \emph{split property} of the local algebras
\cite{Dop:superselection_1,DopLon:superselection_2}. 
In effect, it is possible to construct ``local'' energy operators $H_{\ocal,\hat\ocal}$,
which are associated with the observable algebra $\afk(\ocal)$ of a
bounded region $\ocal$ and act like the global Hamiltonian on $\afk(\hat\ocal)$
for a slightly smaller region $\hat\ocal \subset\subset \ocal$. 
These operators fulfil $H_{\ocal,\hat\ocal} \geq 0$, which may be
interpreted as a very weak form of energy inequality: Starting from local
integrals of the energy density, it seems always possible to add appropriate
``boundary terms'', associated with $\ocal \cap \hat\ocal'$, such that the
resulting operator $H_{\ocal,\hat\ocal}$ is positive. However, there is no
explicit control on these boundary terms, not even a means of separating them
from a ``main term'', so that this approach does not yet lead to a meaningful
interpretation in terms of quantum energy inequalities.

In curved spacetime, however, the situation is better. Brunetti,
Fredenhagen and Verch have shown the existence of a stress-tensor in
locally covariant quantum field theories obeying the time-slice axiom \cite{BFV:generally_covariant}.
This stress-energy tensor is obtained by functional differentiation with
respect to metric perturbations. This prevents an immediate identification
of the energy density as a sum of absolute squares of basic fields.
Nevertheless, this may serve as a starting point for future study.

\section*{Acknowledgements}
This work was initiated during the programme `Mathematical and
Physical Aspects of Perturbative Approaches to Quantum Field Theory' at
the Erwin Schr\"odinger Institute, Vienna, and the authors thank the
organisers of the programme and the ESI for financial support. CJF also thanks the
Fakult\"{a}t f\"{u}r Mathematik, University of Vienna for 
hospitality and financial support at various stages of the work. 

It is a pleasure to thank Stefan Hollands for valuable discussions in the early
phases of this work. HB also thanks the II. Institut f\"ur Theoretische
Physik, Hamburg, for hospitality and Klaus Fredenhagen for helpful
remarks. The discussion of positivity of formal power series in
section~\ref{sec:power_series} arose from conversations between HB and
Bernd Kuckert, to whose memory this paper is dedicated.

\bibliographystyle{alpha}
\bibliography{qft}

\end{document}